\newtheorem{definition}{Definition}
\newtheorem{theorem}{Theorem}
\newtheorem{remark}{Remark}
\newtheorem{lemma}{Lemma}
\newtheorem{corollary}{Corollary}
\newtheorem{example}{Example}
\crefname{section}{Section}{Sections}
\crefname{subsection}{Section}{Sections}
\crefname{definition}{Definition}{Definitions}
\crefname{proposition}{Proposition}{Propositions}
\crefname{lemma}{Lemma}{Lemmas}
\crefname{theorem}{Theorem}{Theorems}
\crefname{corollary}{Corollary}{Corollaries}
\crefname{example}{Example}{Examples}
\crefname{assumption}{Assumption}{Assumptions}
\crefname{notation}{Notation}{Notations}
\crefname{remark}{Remark}{Remarks}
\crefname{running}{Running Example}{Running Examples}
\crefname{algorithm}{Algorithm}{Algorithms}
\def\delequal{\mathrel{\ensurestackMath{\stackon[1pt]{=}{\scriptstyle\Delta}}}}
\DeclareMathAlphabet{\mathpzc}{OT1}{pzc}{m}{it}
\newcommand{\algmargin}{\the\ALG@thistlm}
\newcommand*{\rom}[1]{\expandafter\@slowromancap\romannumeral #1@}
\def\BibTeX{{\rm B\kern-.05em{\sc i\kern-.025em b}\kern-.08em
    T\kern-.1667em\lower.7ex\hbox{E}\kern-.125emX}}
\begin{document}
% \title{Stealthy Attacks on Perception-based Stochastic Nonlinear Control Systems: Modeling and Fundamental Limits}
% \title{Security-Aware Analysis of Perception-Based Control Systems: Modeling and Fundamental Limits}
\title{Vulnerability Analysis of Nonlinear Control Systems to Stealthy False Data Injection Attacks}
\author{Amir Khazraei, and Miroslav Pajic
\thanks{This work is sponsored in part by the ONR agreements N00014-17-1-2504 and N00014-20-1-2745, AFOSR award FA9550-19-1-0169, and the NSF CNS-1652544 award, as well as the National AI Institute for Edge Computing Leveraging Next Generation Wireless Networks, Grant CNS-2112562. Some of preliminary results in this
paper appeared in~\cite{khazraei2022resiliencycdc}.}
\thanks{The authors are with the Department of Electrical and Computer
Engineering, Duke University, Durham, NC 27708 USA (e-mail:
amir.khazraei@duke.edu , miroslav.pajic@duke.edu). }}

\maketitle

\begin{abstract}
In this work, we focus on analyzing vulnerability of nonlinear dynamical control systems to stealthy false-data injection attacks on sensors. We start by defining the \emph{stealthiness} notion %of stealthy attacks 
in the most general form %by leveraging Neyman-Pearson lemma; specifically, 
where an attack is considered stealthy if it would be undetected by  \emph{any} %existing 
intrusion detector -- i.e., %the performance of the 
any intrusion detector %is not better 
could not do better than a random  guess. Depending on the level of attacker's knowledge about the plant model, controller, and the system states, two different attack models are considered. For each attack model, we derive the conditions for which the system will be vulnerable to  stealthy impactful attacks, in addition to finding a methodology for designing such sequence of false-data injection attacks. 
%In the first attack model %where requires 
When the attacker has complete knowledge about the system, %less constraints will be imposed to the attacker.  In particular, 
we show that  if the closed-loop system is incrementally exponentially stable while the open-loop plant is incrementally unstable, then the 
system is vulnerable to stealthy yet impactful attacks on sensors. However, in the second attack model, %where requires 
with less knowledge about the system, %more conditions (compared to first attack model) 
additional conditions need to be satisfied and the level of stealthiness depends on the accuracy of attacker's knowledge about the system. 
% We further examine our results on simpler dynamical systems including nonlinear input affine and linear time invariant (LTI) systems and show that the as the model of the system becomes simpler, the assumptions on attacker's knowledge about the system can be relaxed.  
We also %show 
consider the impact of stealthy attacks on state estimation, and show  that if the closed-loop control system including the estimator is incrementally stable, then the state estimation %of the states 
in the presence of attack converges to the attack-free estimates. Finally, we illustrate our results on numerical case-studies.

\end{abstract}

% \begin{IEEEkeywords}
% Stealthy attacks on nonlinear control systems, attack resilient control systems, anomaly detectors, sensor attacks
% \end{IEEEkeywords}

\section{Introduction}  
\label{sec:intro}

% Cyber-physical systems (CPS) consist of computing and physical components, such as sensors, actuators, and processors, and these components communicate through network to control and monitor the physical processes. %controllers and physical plants, potentially through communication networks. % are commonly encountered in various applications, such as smart infrastructure, autonomous vehicles, water distribution systems, transportation, and power systems. 
% As such, they 
Control systems have been shown to be vulnerable to a wide range of cyber and physical attacks with severe consequences. (e.g.,~\cite{chen2011lessons}). As part of the control design and analysis,~it~is~thus critical to identify early any vulnerability of the considered system to impactful attacks, especially the ones that are potentially stealthy to the deployed intrusion detection mechanisms.

Depending on the resources available to the attacker, different types of stealthy impactful attacks have been proposed. For instance, for LTI systems with strictly proper transfer functions, by  compromising the control input, the attacker  can design effective  stealthy attacks if the system has unstable zero invariant; e.g.,~\cite{teixeira2012revealing} where such attack is referred to as the zero dynamics attack. However, when the transfer function is not strictly proper, the attacker needs to compromise both plant's inputs and outputs. When the attacker compromises both the plant's actuation and sensing, % with different attack vectors,
e.g., ~\cite{sui2020vulnerability} derives the conditions under which the system is vulnerable to stealthy~attacks. Other types of attacks that targets both input and output also exist for LTI systems including replay attack~\cite{mo2009secure,mo2013detecting} and covert attack~\cite{smith2015covert}. Specifically, the authors in~\cite{mo2013detecting} show that replay attacks can bypass $\chi^2$ intrusion detector (ID) for some class of LQG controllers in LTI systems and remain stealthy. 

On the other hand, sensor attacks, commonly referred to as false-data injection attacks, have drawn a great deal of attention. For example, vulnerability to false data injection attacks, of static state estimation in systems such as power grids was considered in~\cite{liu2011false}; % where the idea is to 
by adding values that lie in the range of the observation matrix, %. It has been shown that 
such attack can bypass $\chi^2$ detectors and lead to incorrect state estimation. However, %when the system has a predefined dynamics, % (such as LTI or nonlinear dynamical systems) 
for dynamical systems, merely inserting constant values in the range of the observation matrix would not be stealthy and effective; % and it has been shown that 
for stealthiness, a false-data injection attack needs to  evolve with some dynamics to remain stealthy~\cite{mo2010false,jovanov_tac19,kwon2014analysis,khazraei2022attack,khazraei_acc20,zhang2020false}.
% \todo{check}  

Specifically, for linear time-invariant (LTI) systems with Gaussian noise, % it has been shown that 
if measurements from all sensors can be compromised, the plant's (i.e., open-loop) instability %of the system matrix
is a sufficient condition for an attacker being able to significantly impact the system while remaining undetected (i.e., stealthy) by a particular type of residual-based ($\chi^2$) IDs~\cite{mo2010false,jovanov_tac19,kwon2014analysis,zhang2020false}. These works also show that to construct such attack sequences, the attacker needs to have complete knowledge about the system, including the dynamical model of the plant as well as the controller and Kalman filter gains. Yet, for LTI systems with bounded noise, the plant's instability is a necessary and sufficient condition for the existence of impactful, stealthy attacks when all senors are compromised~\cite{khazraei2022attack,khazraei_acc20}.   
 
% The common assumption for all 
All these results~\cite{teixeira2012revealing,sui2020vulnerability,mo2009secure,mo2013detecting,smith2015covert,mo2010false,jovanov_tac19,kwon2014analysis,khazraei2022attack,khazraei_acc20,zhang2020false,shang2021optimal} %is that the considered plant is an 
only address LTI systems. Moreover, the notion of stealthiness is only characterized  for a \emph{specific type} of the employed ID (e.g., $\chi^2$-based detectors or RSE detectors). 
The notion of attack stealthiness independent of the employed ID (i.e., remaining stealthy for \emph{all} existing/potential IDs) for LTI systems is studied in~\cite{bai2017data,bai2017kalman,shang2021optimal,fang2019stealthy,zhang2019optimal,shang2021worst}. One of the main differences of this work is that  
%
%
%We also differentiate our work in the following -- 
our notion of stealthiness, initially introduced in~\cite{khazraei2022attacks}, is stronger than the one from~\cite{bai2017data,bai2017kalman,shang2021optimal,fang2019stealthy,zhang2019optimal,shang2021worst} where stealthiness depends on time;  i.e., stealthiness  from~\cite{bai2017data,bai2017kalman,shang2021optimal,fang2019stealthy,zhang2019optimal,shang2021worst} requires that only for a bounded time %exists where 
the  attack is guaranteed to stay undetected by 
any ID. However, the notion of stealthiness in our work is independent of time and the attack is guaranteed to be stealthy for all time steps after initiating the attack.  Moreover, the performance degradation metric used in~\cite{bai2017data} is the error covariance of a Kalman filter estimator as opposed in our work; we assume the attacker's goal is to cause deviation in the trajectories of the states.

% In addition, the authors show that a sufficient condition for such notion of stealthiness is that the Kullback–Leibler (KL) divergence between the probability distribution of compromised system measurements and the attack-free measurements is close to zero, and consider stealthiness of such attacks on control systems with an LTI plant and an LQG controller. \cite{fang2019stealthy,zhang2019optimal,shang2021worst,ye2021optimal,khazraei2022resiliency}%However, as opposed to this line of research where 
% the performance degradation is defined as maximizing the trace of covariance matrix of estimation error, here, we define an attack to be successful if it can move the system outside of a safe region. On the other hand, they all assume the system has LTI model and LQG controller.

% As we discussed so far, the majority of stealthy attack designs consider the system with LTI model and LQG controllers. However, almost all physical realistic systems are inherently nonlinear in nature. Therefore, the standard stealthy attacks considered in the literature might fail to work in real applications. 

To the best of our knowledge, no existing work provides %thorough
\emph{vulnerability analysis for systems with nonlinear dynamics, while considering general control and ID designs}, as well as provide generative models for such stealthy and impactful attacks. 
% To address plant nonlinearity, a new line of research have been  recently developed.
In~\cite{smith2015covert}, covert attacks are introduced as stealthy attacks that can target a potentially nonlinear system. % with a generic nonlinear model. 
However, the attacker needs to have perfect knowledge of the system's dynamics and be able  to compromise \emph{both} the plant's input and outputs. Even more importantly, as the attack design is based on attacks on LTI systems, no guarantees are provided for effectiveness and stealthiness of attacks on nonlinear systems. In~\cite{sasahara2022attack}, attack design for nonlinear systems with an arbitrary but fixed ID  (the type of ID is known to the attacker) %by formulating the problem as solving 
is framed as an optimization problem in a discrete Markov decision system; however, %such design does not provide any 
no analysis is provided to determine which classes of systems are vulnerable to stealthy impactful attacks. Moreover, the design is based on discretizing the continuous state space where the complexity of the problem increases for higher dimensions. On the other hand, we show that if the condition on the existence of impactful stealthy attacks holds, our generated attack sequence can bypass any ID and the attacker does not need to know what type of ID is deployed. % in the system. 

More recently, \cite{zhang2021stealthy} introduced stealthy attacks on a \emph{specific class} of nonlinear systems with residual-based IDs, but provided effective attacks only when \emph{both} plant's inputs and outputs are compromised by the attacker. On the other hand, in this work, we assume the attacker can only compromise the  plant's sensing data and  consider systems with \emph{general} nonlinear dynamics. For systems with general nonlinear dynamics and residual-based IDs, machine learning-based methods to design the stealthy attacks have been introduced (e.g.,~\cite{khazraei2021learning}), but without any theoretical analysis and guarantees regarding the impact of the stealthy attacks.

\subsection{Paper Contribution and Organization}
To the best of our knowledge, this is the first work that considers %the problem of  
existence and design of \emph{impactful} sensor attacks on systems with general nonlinear dynamics such that the attacks are also \emph{stealthy} %is independent of the 
for \emph{any} deployed ID.
The main contributions of this paper are summarized as follows.

First, we introduce two different attack models depending on the level of system knowledge the attacker has. For each attack model, we provide conditions %for which 
that a nonlinear system is vulnerable to effective yet stealthy attacks %on nonlinear systems 
without limiting the analysis to any particular type of employed IDs. Specifically, for the first attack model with a higher level of the system knowledge, we show that if the closed-loop control system is incrementally exponentially stable while the open-loop control system is incrementally unstable, then the system is vulnerable to impactful attacks that can remain stealthy from any ID. For the second attack model that requires less knowledge about the system, additional conditions are imposed on the attacker; we show that if the closed-loop system is incrementally input to state stable and the open-loop system is incrementally input to state unstable, then the system is vulnerable to impactful attacks and the level of stealthiness depends on the accuracy of attacker’s knowledge about the system.  We also show that for LTI systems, if a certain subset of sensors are under attack, the closed-loop system is asymptotically stable, and the open-loop system is unstable, then the system is vulnerable to stealthy attacks independent of the deployed ID; this is a generalization of results from~\cite{mo2010false,jovanov_tac19,kwon2014analysis,zhang2020false} showing that such condition is sufficient for the stealthiness %for 
under only one class of IDs ($\chi^2$).

Second, for each %of these attack models  
attack model, we provide a general method to a sequence of stealthy and impactful attack vectors. We show that as the dynamical model of the system becomes `simpler' (e.g., moving from highly nonlinear to LTI), the attacker  needs lower levels of system knowledge to generate the attack sequence. In an extreme case, %where the system can modeled by LTI dynamics, 
for LTI systems, we show that the attacker only needs to have access to the the state transition and the observation matrices, unlike~\cite{mo2010false,jovanov_tac19,kwon2014analysis,zhang2020false} that assume the attacker has access to the full plant model information as well as the controller and Kalman filter gain. 

Finally, we consider the impact of the proposed stealthy attacks on the nonlinear state estimators. We show that if the closed-loop control system, which includes the dynamics of the plant and the estimator, is incrementally exponentially stable, then the state estimation in the presence of attack converges exponentially to the attack free estimates.

The paper is organized as follows. In~\cref{sec:prelim}, we introduce preliminaries, whereas \cref{sec:motive} presents the system and attack model, before formalizing the notion of stealthiness in \cref{sec:stealthy}. %  and formalize our problem. 
\cref{sec:perfect} provides sufficient conditions for existence of the impactful yet stealthy attacks. Section~\ref{sec:estimation} provides the impact of such attacks on state estimation. 
Finally, in~\cref{sec:simulation}, we illustrate our results on two case-studies, before concluding remarks in \cref{sec:concl}. %\todo{fix this paragraph}

\subsubsection*{Notation}
We use $\mathbb{R, Z}, \mathbb{Z}_{\geq 0}$ to denote  the sets of reals, integers and non-negative integers, respectively, and $\mathbb{P}$ denotes the probability for a random variable.  For %a matrix $A$, $A^T$ is its transpose and for 
a square matrix $A$, $\lambda_{max}(A)$ denotes the maximum eigenvalue and if $A$ is symmetric, then $A\succ 0$ denoted the matrix is positive definite. % of the matrix~$A$. % which is the summation of diagonal entries. 
For a vector $x\in{\mathbb{R}^n}$, $||x||_p$ denotes the $p$-norm of $x$; when $p$ is not specified, the 2-norm is implied. 
For a vector sequence, % of signals,
$x_0:x_t$ denotes the set $\{x_0,x_1,...,x_t\}$. 
A function $f:\mathbb{R}^{n}\to \mathbb{R}^{p}$ is Lipschitz with constant $L$ if for any $x,y\in \mathbb{R}^{n}$ it holds that $||f(x)-f(y)||\leq L ||x-y||$. %If $X$ and $Y$ are two sets, $X-Y$ includes the elements in $X$ that are not in $Y$. 
%
% For a set $X$, $\partial X$ and $X^o$ define the boundary and the interior of the set, respectively. $B_r$ denotes a closed ball with radius $r$; i.e., $B_r=\{x\in \mathbb{R}^n\,\,|\,\,\Vert x\Vert \leq r\}$, whereas $\mathbf{1}_A$ is the indicator function on a set $A$.
% For a function $f$, we denote $f'=\frac{\partial f}{\partial x}$ as the partial derivative of $f$ with respect to $x$ and $\nabla f_i(x)$ is the gradient of the function $f_i$ ($i$-th element of the function $f$). 
Finally, if $\mathbf{P}$ and $\mathbf{Q}$ are probability distributions relative to Lebesgue measure with densities $\mathbf{p}$ and $\mathbf{q}$, respectively, then 
% the total variation between them is defined as $\Vert  \mathbf{P}-\mathbf{Q} \Vert_{tv}=\frac{1}{2}\int \vert \mathbf{p}(x)-\mathbf{q}(x)\vert dx$. The 
the Kullback–Leibler  (KL) divergence between $\mathbf{P}$ and $\mathbf{Q}$ is defined as
$KL(\mathbf{P},\mathbf{Q})=\int \mathbf{p}(x)\log{\frac{\mathbf{p}(x)}{\mathbf{q}(x)}}dx$.

\section{Preliminaries}\label{sec:prelim}

Let $\mathbb{X}\subseteq \mathbb{R}^n$ and $\mathbb{D}\subseteq \mathbb{R}^m$. Consider a discrete-time nonlinear system with an exogenous input, modeled as %in the state-space form as
\begin{equation}\label{eq:prilim}
x_{t+1}=f(x_t,d_t),\quad x_t\in \mathbb{X},\,\,t\in \mathbb{Z}_{\geq 0},
\end{equation}
where $f:\mathbb{X}\times\mathbb{D}\to \mathbb{X}$ is continuous. We denote by $x(t,\xi,d)$ %\todo{here you have time as argument, but not in (1)}  
the trajectory (i.e., the solution) of~\eqref{eq:prilim} at time $t$, when the system has the initial condition $\xi$ and is subject to the input sequence $d=\{d_0:d_{t-1}\}$.
%\footnote{To simplify our notation, we denote the sequence $\{d_0:d_{t-1}\}$ as $d$.} 

The following definitions are derived from~\cite{angeli2002lyapunov,tran2018convergence,tran2016incremental}.

\begin{definition}\label{def:IES}
The system~\eqref{eq:prilim} is incrementally exponentially stable (IES) in the set $\mathbb{X}\subseteq \mathbb{R}^n$ if exist $\kappa>1$ and
$\lambda<1$ that 
\begin{equation}
\Vert x(t,\xi_1,d)-x(t,\xi_2,d)\Vert \leq \kappa \Vert \xi_1-\xi_2\Vert \lambda^{t},
\end{equation}
holds for all $\xi_1,\xi_2\in \mathbb{X}$, any $d_t\in \mathbb{D}$, and $t\in \mathbb{Z}_{\geq 0}$. When $\mathbb{X}=\mathbb{R}^n$, the system is referred to as globally incrementally exponentially stable (GIES).
\end{definition}

\begin{definition}
The system~\eqref{eq:prilim} is incrementally input to state stable (IISS) in the set $\mathbb{X}\subseteq \mathbb{R}^n$ if there exists a function $\gamma\in \mathcal{K}_{\infty}$, $\kappa>1$, and
$\lambda<1$ such that for any $t\in \mathbb{Z}_{\geq 0}$ %we have 
\begin{equation}\label{eq:IISS}
\Vert x(t,\xi_1,d^1)-x(t,\xi_2,d^2)\Vert \leq \kappa \Vert \xi_1-\xi_2\Vert \lambda^{t}+\gamma(\Vert d^1-d^2\Vert_{\infty}),
\end{equation}
holds for all $\xi_1,\xi_2\in \mathbb{X}$, any $d_t^1,d_t^2\in \mathbb{D}$, and $t\in \mathbb{Z}_{\geq 0}$, where $\Vert d^1-d^2\Vert_{\infty}=\sup_{t\geq 0}{\Vert d_t^1-d_t^2\Vert_{\infty}}$. When $\mathbb{X}=\mathbb{R}^n$, the system is referred to as globally incrementally input to state stable~(GIISS).
\end{definition}

By replacing $d^1=d^2$ in~\eqref{eq:IISS}, one can verify that if the system is IISS, then it is also IES. 
\begin{definition}
The system~\eqref{eq:prilim} is incrementally  unstable (IU) in the set $\mathbb{X}\subseteq \mathbb{R}^n$ if for all $\xi_1\in \mathbb{X}$ and any $d_t\in \mathbb{D}$, there exists an $\xi_2$ arbitrarily close to $\xi_1$ such that for any $M>0$, % the following
\begin{equation}\label{eq:IU}
\Vert x(t,\xi_1,d)-x(t,\xi_2,d)\Vert \geq M,
\end{equation}
holds for all $t\geq t'$, for some $t'\in \mathbb{Z}_{\geq 0}$.
% In case $\mathbb{X}=\mathbb{R}^n$, the system is called globally incrementally unstable (GIU).
\end{definition}

\begin{definition}
The system~\eqref{eq:prilim} is incrementally input to state unstable (IISU) in the set $\mathbb{X}\subseteq \mathbb{R}^n$ if for all $\xi_1\in \mathbb{X}$ and any $d_t^1,d_t^2\in \mathbb{D}$ satisfying $\Vert d_t^1-d_t^2\Vert \leq \delta$ for all $t\in \mathbb{Z}_{\geq 0}$, there exists an $\xi_2$ such that for any $M>0$, % the following
\begin{equation}\label{eq:IISU}
\Vert x(t,\xi_1,d^1)-x(t,\xi_2,d^2)\Vert \geq M,
\end{equation}
holds for all $t\geq t'$, for some $t'\in \mathbb{Z}_{\geq 0}$.
% In case $\mathbb{X}=\mathbb{R}^n$, the system is called globally incrementally unstable (GIU).
\end{definition}

Similarly, by replacing $d^1=d^2$ in~\eqref{eq:IISS}, one can verify that if the system is IISU, then it is also IU. 

Now, we present some properties of Kullback–Leibler (KL) divergence known as \emph{monotonicity} and \emph{chain-rule}~\cite{polyanskiy2022information}.

% \begin{lemma}\label{lemma:data}
% \textbf{(Data Processing Inequality):} Assume that $Y$ given
% $X$ is produced based on the law $W_{Y|X}$. If $P_Y$ and $Q_Y$ are distributions of $Y$ when $X$ is generated
% by $P_X$ and $Q_X$, respectively, then
% \begin{equation}
% KL(Q_Y||P_Y)\leq KL(Q_X||P_X)
% \end{equation}
% \end{lemma}

\begin{lemma}{\cite{polyanskiy2022information}}\label{lemma:mon}
\textbf{(Monotonicity):} Let $P_{X,Y}$ and $Q_{X,Y}$ be two distributions for a pair of variables $X$ and $Y$, and $P_{X}$ and $Q_{X}$ be two distributions for variable $X$. Then, 
\begin{equation}
KL(Q_X||P_X)\leq KL(Q_{X,Y}||P_{X,Y})
\end{equation}
\end{lemma}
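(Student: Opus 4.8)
The plan is to establish the \emph{chain rule} identity for the KL divergence and then use nonnegativity of its conditional term. First I would dispense with the degenerate case: if $P_{X,Y}$ does not dominate $Q_{X,Y}$ then $KL(Q_{X,Y}||P_{X,Y})=+\infty$ and the claimed inequality is vacuous, so assume absolute continuity throughout. Then I would write the joint densities through their marginal and conditional factors, $q_{X,Y}(x,y)=q_X(x)\,q_{Y|X}(y|x)$ and $p_{X,Y}(x,y)=p_X(x)\,p_{Y|X}(y|x)$, valid on the set where the densities are positive.

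Next, substituting this factorization into $KL(Q_{X,Y}||P_{X,Y})=\iint q_{X,Y}(x,y)\log\frac{q_{X,Y}(x,y)}{p_{X,Y}(x,y)}\,dx\,dy$ and splitting the logarithm of the ratio into $\log\frac{q_X(x)}{p_X(x)}$ plus $\log\frac{q_{Y|X}(y|x)}{p_{Y|X}(y|x)}$ produces two integrals. Integrating the first term against $q_{X,Y}$ and marginalizing out $y$ returns exactly $KL(Q_X||P_X)$, while the second term equals $\int q_X(x)\Big(\int q_{Y|X}(y|x)\log\frac{q_{Y|X}(y|x)}{p_{Y|X}(y|x)}\,dy\Big)dx$, i.e. the $Q_X$-average of the conditional divergences $KL(Q_{Y|X=x}||P_{Y|X=x})$. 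This yields the identity $KL(Q_{X,Y}||P_{X,Y})=KL(Q_X||P_X)+\mathbb{E}_{x\sim Q_X}\big[KL(Q_{Y|X=x}||P_{Y|X=x})\big]$, which is precisely the chain rule (the companion property mentioned just before the lemma).

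Finally, I would invoke nonnegativity of the KL divergence: for any pair of densities $q,p$, Jensen's inequality applied to the concave function $\log$ gives $-KL(Q||P)=\int q(x)\log\frac{p(x)}{q(x)}\,dx\le \log\int p(x)\,dx = 0$, hence $KL(Q||P)\ge 0$. In particular every conditional divergence $KL(Q_{Y|X=x}||P_{Y|X=x})$ is nonnegative, so its $Q_X$-average is nonnegative, and dropping that term from the identity leaves $KL(Q_X||P_X)\le KL(Q_{X,Y}||P_{X,Y})$, as claimed.

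The substantive content here is light, so I expect the main (and only minor) obstacle to be measure-theoretic bookkeeping: ensuring the conditional densities $q_{Y|X},p_{Y|X}$ are well defined, handling the zero-measure sets where marginal densities vanish, and justifying the exchange of order of integration via Fubini — which is legitimate because each integrand has a fixed sign after the split, or alternatively because both KL quantities may be taken finite (the infinite case having already been settled).
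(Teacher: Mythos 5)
Your proof is correct; the paper itself states this lemma without proof, citing \cite{polyanskiy2022information}, and your argument (chain rule for KL divergence plus nonnegativity of the conditional term, with Jensen's inequality supplying the nonnegativity) is exactly the standard derivation used there. Note also that the chain rule you establish is precisely the paper's Lemma~\ref{lemma:chain}, so your route is fully consistent with the framework the paper sets up.
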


\begin{lemma}{\cite{polyanskiy2022information}}\label{lemma:chain}
\textbf{(Chain rule):} Let $P_{X,Y}$ and $Q_{X,Y}$ be two distributions for a pair of variables $X$ and $Y$. Then,
\begin{equation}
KL(Q_{X,Y}||P_{X,Y})= KL(Q_{X}||P_{X})+KL(Q_{Y|X}||P_{Y|X}),
\end{equation}
\end{lemma}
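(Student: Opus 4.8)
The plan is to prove the chain rule directly from the definition of KL divergence together with the factorization of a joint density into a marginal density times a conditional density; this is the standard argument (as in the cited~\cite{polyanskiy2022information}), so the work is bookkeeping rather than invention. Write $\mathbf{q}_{X,Y}$, $\mathbf{q}_X$, $\mathbf{q}_{Y|X}$ for the joint, marginal, and conditional densities associated with $Q_{X,Y}$, and likewise $\mathbf{p}_{X,Y}$, $\mathbf{p}_X$, $\mathbf{p}_{Y|X}$ for $P_{X,Y}$. Here the \emph{conditional} KL divergence is understood in the usual averaged sense, namely $KL(Q_{Y|X}\|P_{Y|X}) = \int \mathbf{q}_X(x)\bigl(\int \mathbf{q}_{Y|X}(y|x)\log\tfrac{\mathbf{q}_{Y|X}(y|x)}{\mathbf{p}_{Y|X}(y|x)}\,dy\bigr)dx$; I would recall this definition explicitly at the start of the proof since the lemma statement uses the notation without defining it.

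First I would substitute the factorizations $\mathbf{q}_{X,Y}(x,y) = \mathbf{q}_X(x)\,\mathbf{q}_{Y|X}(y|x)$ and $\mathbf{p}_{X,Y}(x,y) = \mathbf{p}_X(x)\,\mathbf{p}_{Y|X}(y|x)$ into the definition $KL(Q_{X,Y}\|P_{X,Y}) = \int\!\!\int \mathbf{q}_{X,Y}(x,y)\log\tfrac{\mathbf{q}_{X,Y}(x,y)}{\mathbf{p}_{X,Y}(x,y)}\,dx\,dy$. Using $\log\tfrac{ab}{cd} = \log\tfrac{a}{c} + \log\tfrac{b}{d}$, the integrand becomes $\mathbf{q}_X(x)\,\mathbf{q}_{Y|X}(y|x)\bigl(\log\tfrac{\mathbf{q}_X(x)}{\mathbf{p}_X(x)} + \log\tfrac{\mathbf{q}_{Y|X}(y|x)}{\mathbf{p}_{Y|X}(y|x)}\bigr)$. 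Then I would split the double integral into two pieces: in the first, the inner integral $\int \mathbf{q}_{Y|X}(y|x)\,dy = 1$, so it collapses to $\int \mathbf{q}_X(x)\log\tfrac{\mathbf{q}_X(x)}{\mathbf{p}_X(x)}\,dx = KL(Q_X\|P_X)$; the second piece is, by the definition recalled above, exactly $KL(Q_{Y|X}\|P_{Y|X})$. Summing the two yields the identity.

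The main obstacle is not the algebra but the measure-theoretic justification. One must (i) guarantee the existence of the regular conditional densities and the validity of the factorizations — which follows from $Q_{X,Y}\ll P_{X,Y}$ (so that $Q_X \ll P_X$ and, for $Q_X$-a.e.\ $x$, $Q_{Y|X=x}\ll P_{Y|X=x}$) via disintegration — and (ii) justify splitting the integral, which is legitimate because after restricting to the support of $\mathbf{q}_{X,Y}$ each of the two resulting integrands is, up to sign, the integrand of a genuine KL divergence, hence each term is well-defined in $[0,\infty]$ and the additive split is valid (equivalently, apply Tonelli to the nonnegative rearrangements). If $Q_{X,Y}$ is not absolutely continuous with respect to $P_{X,Y}$, both sides are $+\infty$ and the identity holds trivially. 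Since the paper only invokes these lemmas for distributions admitting densities and for which the divergences are finite, I would state absolute continuity as a standing assumption and carry out the argument purely at the level of densities, keeping the proof short.
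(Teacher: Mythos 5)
The paper does not prove this lemma; it is quoted verbatim from the cited reference~\cite{polyanskiy2022information}, with the conditional divergence defined exactly as you recall it, $KL(Q_{Y|X}\|P_{Y|X})=\mathbb{E}_{x\sim Q_X}\{KL(Q_{Y|X=x}\|P_{Y|X=x})\}$. Your argument --- factorizing the joint densities, splitting the logarithm, collapsing the first integral via $\int \mathbf{q}_{Y|X}(y|x)\,dy=1$, and handling absolute continuity and the $+\infty$ case separately --- is the standard textbook proof and is correct; there is nothing in the paper to compare it against beyond the citation.
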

where $KL(Q_{Y|X}||P_{Y|X})$ is defined as
\begin{equation}
KL(Q_{Y|X}||P_{Y|X})=\mathbb{E}_{x\sim Q_X}\{KL(Q_{Y|X=x}||P_{Y|X=x})\}.
\end{equation}

\begin{lemma}{\cite{polyanskiy2022information}}\label{lemma:Guassian}
Let $P_{X}$ and $Q_{X}$ be two Gaussian distributions with the same covariance $\Sigma$ and different means of $\mu_P$ and $\mu_Q$, respectively. Then, it holds that 
\begin{equation}
KL(Q_{X}||P_{X})= \mu_Q^T\Sigma^{-1} \mu_P.
\end{equation}
\end{lemma}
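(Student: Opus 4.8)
The plan is to evaluate the divergence directly from the two Gaussian densities, since the shared covariance $\Sigma$ is precisely the hypothesis that collapses the general Gaussian KL formula into one involving only the means. Writing $q$ and $p$ for the densities of $Q_X$ and $P_X$, I would start from the definition used in the paper, $KL(Q_X\Vert P_X)=\int q(x)\log\frac{q(x)}{p(x)}\,dx=\mathbb{E}_{x\sim Q_X}[\log\frac{q(x)}{p(x)}]$, and simplify the integrand before integrating.

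First I would substitute the Gaussian densities into the log-ratio. Because $Q_X$ and $P_X$ share the covariance $\Sigma$, the normalization constants $(2\pi)^{-n/2}|\Sigma|^{-1/2}$ are identical and cancel, and the purely quadratic term $\tfrac12 x^T\Sigma^{-1}x$ enters both exponents with the same coefficient and also cancels; this is the structural reason no trace or log-determinant term appears in the final answer. What remains is an affine function of $x$, whose linear part is controlled by $\Sigma^{-1}(\mu_Q-\mu_P)$ and whose constant part is quadratic in the two means.

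Next I would take the expectation with respect to $x\sim Q_X$. The only nonconstant contribution is the linear term, and since the expectation is taken under $Q_X$ we use $\mathbb{E}_{Q_X}[x]=\mu_Q$ (not $\mu_P$); this turns the linear part into a bilinear expression in $\mu_Q$ and the means. Collecting this with the constant quadratic terms and using the symmetry of $\Sigma^{-1}$ then assembles everything into a single quadratic form in the means, yielding the claimed identity $KL(Q_X\Vert P_X)=\mu_Q^T\Sigma^{-1}\mu_P$.

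The main obstacle is one of careful bookkeeping rather than any deep idea: one must keep the linear-in-$x$ coefficient separate from the constant terms, evaluate the expectation under the correct measure $Q_X$, and track how the cross term between $\mu_Q$ and $\mu_P$ combines with the self-quadratic terms. I would therefore state the shared-covariance assumption explicitly at the outset, since it is exactly what annihilates every term except the mean-dependent quadratic form and makes the reduction to $\mu_Q^T\Sigma^{-1}\mu_P$ possible.
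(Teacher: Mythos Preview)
Your methodology is exactly right: write out the log-ratio of the two Gaussian densities, cancel the shared normalizers and the $\tfrac12 x^T\Sigma^{-1}x$ terms, and take the expectation of the remaining affine function under $Q_X$. The gap is in your last step. If you actually carry out the bookkeeping you describe, the expression does \emph{not} collapse to $\mu_Q^T\Sigma^{-1}\mu_P$. Concretely,
\[
\log\frac{q(x)}{p(x)}=(\mu_Q-\mu_P)^T\Sigma^{-1}x-\tfrac12\mu_Q^T\Sigma^{-1}\mu_Q+\tfrac12\mu_P^T\Sigma^{-1}\mu_P,
\]
and taking $\mathbb{E}_{Q_X}[x]=\mu_Q$ gives
\[
KL(Q_X\Vert P_X)=\tfrac12(\mu_Q-\mu_P)^T\Sigma^{-1}(\mu_Q-\mu_P),
\]
not the cross term $\mu_Q^T\Sigma^{-1}\mu_P$. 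A quick sanity check exposes the issue: if $\mu_Q=\mu_P$ the KL divergence must vanish, whereas $\mu_Q^T\Sigma^{-1}\mu_P=\mu_Q^T\Sigma^{-1}\mu_Q$ is generically nonzero.

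In fact the formula as printed in the paper is a typo; the paper provides no proof of the lemma (it is cited from \cite{polyanskiy2022information}), and every downstream application in the paper---e.g., inequalities \eqref{ineq:4}, \eqref{ineq:5}, \eqref{ineq:9}, \eqref{ineq:11}---uses the correct difference-of-means quadratic form $(x_k-x_k^f)^T\mathbf{R}_w^{-1}(x_k-x_k^f)$ rather than a cross term. So your computation, done correctly, proves the result the paper actually needs; you should simply not claim it recovers the (mis-stated) expression $\mu_Q^T\Sigma^{-1}\mu_P$.
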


\begin{lemma}\label{lemma:maximum}
Let $Q_{X}$ be a distribution for a random variable $X$ and we have $X\leq M$ for some $M>0$. Then, 
\begin{equation}
\mathbb{E}_{Q_X}\{X\}\leq M
\end{equation}
\end{lemma}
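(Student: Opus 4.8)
The plan is to invoke the monotonicity of expectation (equivalently, of the Lebesgue integral) with respect to the probability measure $Q_X$. First I would write $\mathbb{E}_{Q_X}\{X\}=\int x\,\mathbf{q}(x)\,\mathrm{d}x$, where $\mathbf{q}$ is the density of $Q_X$ relative to Lebesgue measure as in the notation fixed earlier. Since $X\leq M$ holds (here $Q_X$-almost surely suffices), the integrand obeys $x\,\mathbf{q}(x)\leq M\,\mathbf{q}(x)$ pointwise because $\mathbf{q}(x)\geq 0$. Integrating this inequality over $\mathbb{R}^n$ preserves its direction, so $\int x\,\mathbf{q}(x)\,\mathrm{d}x\leq M\int \mathbf{q}(x)\,\mathrm{d}x$. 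The proof then closes by using that $Q_X$ is a probability distribution, i.e. $\int \mathbf{q}(x)\,\mathrm{d}x=1$, which yields $\mathbb{E}_{Q_X}\{X\}\leq M$.

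The only point that merits a one-line remark — and it is minor — is well-definedness of $\mathbb{E}_{Q_X}\{X\}$. The bound $X\leq M$ makes the positive part $X^{+}$ integrable, so the expectation is either a finite real number or $-\infty$; in the latter degenerate case the claimed inequality holds trivially, so no case analysis is really needed. I do not anticipate any genuine obstacle: the statement is essentially a restatement of monotonicity of expectation, and it is recorded here purely for convenient reference in the later stealthiness arguments (where $X$ will be instantiated by a conditional KL term that is known to be bounded by some $M$).

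If one prefers to avoid densities altogether, an equivalent route is to write $\mathbb{E}_{Q_X}\{M-X\}\geq 0$ because $M-X\geq 0$ is a nonnegative random variable, and then use linearity together with $\mathbb{E}_{Q_X}\{M\}=M$ to rearrange into $\mathbb{E}_{Q_X}\{X\}\leq M$. Either presentation is a couple of lines; I would pick whichever is notationally consistent with how the lemma is applied downstream.
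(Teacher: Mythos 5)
Your proposal is correct and is essentially the same argument the paper gestures at: the paper's proof is the one-line remark that the claim "directly follows from the definition of expectation and some properties of integral," which is precisely the monotonicity-of-the-integral computation you spell out. Your added remarks on well-definedness and the density-free variant are fine but not needed.
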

\begin{proof}
The proof directly follows from the definition of expectation and some properties of integral.  
\end{proof}

\section{System and Attack Model} % and Problem Description}
\label{sec:motive} 
In this section, we introduce the considered system and attack model, allowing us to formally capture the problem addressed in this work.  We consider the setup from Fig.~\ref{fig:architecture} where each of the components is modeled as follows. 
%Before introducing the problem formulation, we describe the system and attack model of the system in presence of attack and also the estimator model. 

\subsubsection{Plant}
%We consider the setup from \cref{fig:architecture}, where 
We assume that the states of the system evolve following a general nonlinear discrete-time dynamics that can be captured in the state-space form~as  
\begin{equation}\label{eq:plant}
\begin{split}
{x}_{t+1} &= f(x_t,u_t)+w_t,\\
y_t &= h(x_t)+v_t;
\end{split}
\end{equation}
here, $x \in {\mathbb{R}^n}$, $u \in {\mathbb{R}^m}$, $y \in {\mathbb{R}^p}$ are the state, input and output vectors of the plant, respectively. %In addition, $f$ is a nonlinear mapping from previous time state and control input to the current state, and $h$ is the mapping from the states to the sensor measurements; we 
We assume that $h$ is Lipschitz with a constant $L_h$.  The plant output vector captures measurements from the set of plant sensors $\mathcal{S}$. % as the set of all sensors (from now on, 
% \footnote{To simplify our notation, unless otherwise stated, we will use $i$ instead of $s_i$ to denote the $i$-th sensor.} 
Further, $w \in {\mathbb{R}^{n}}$ and $v \in {\mathbb{R}^p}$ are the process and measurement noises that  assumed to be i.i.d Gaussians with zero mean, and $\mathbf{R}_w \succ 0$ and $\mathbf{R}_v \succ 0 $ covariance matrices, respectively. We also assume the system starts operating at time $-T$ with initial state $x_{-T}$.

As we show later, it will be useful to consider the input to state relation of the dynamics~\eqref{eq:plant}; if we define $U=\begin{bmatrix}u^T&w^T\end{bmatrix}^T$, the first equation in~\eqref{eq:plant} becomes
\begin{equation}\label{eq:input-state}
x_{t+1}=f_u(x_t,U_t).
\end{equation}

\begin{figure}[!t]
\centering
\includegraphics[width=0.468\textwidth]{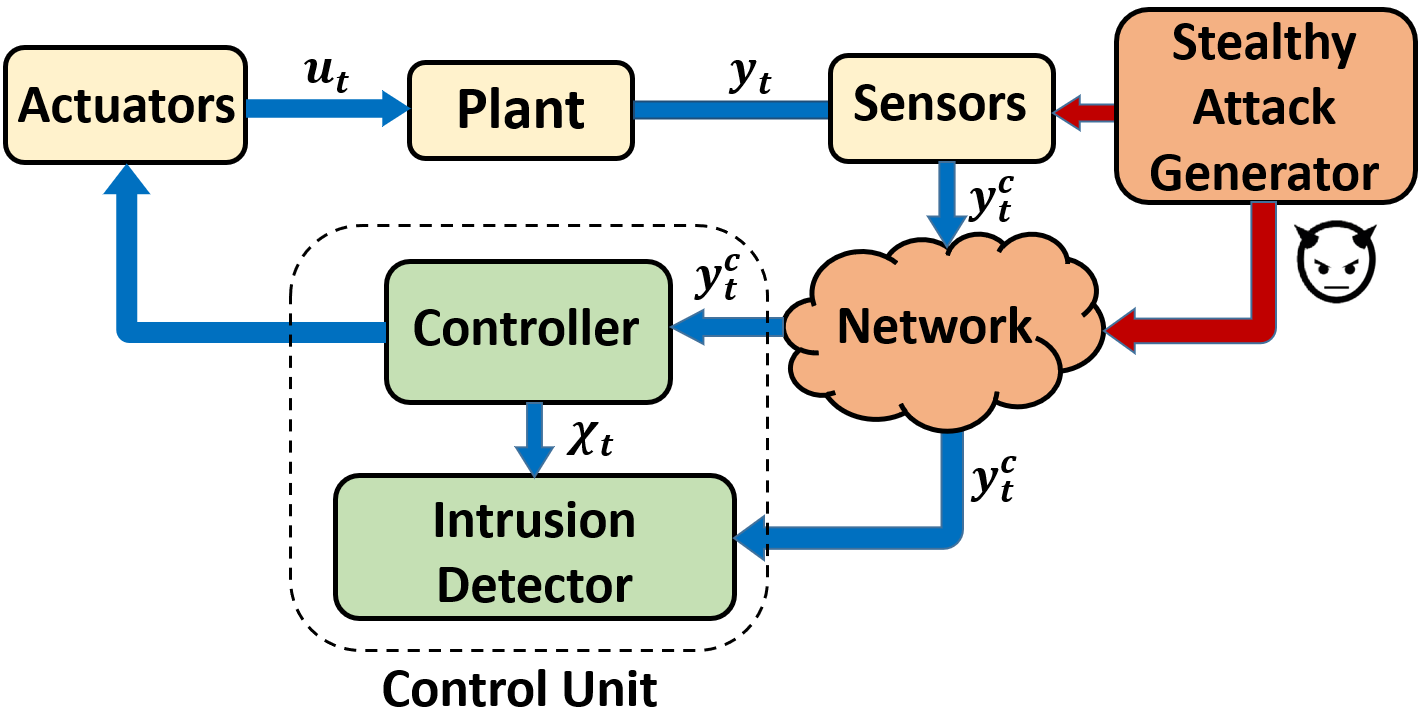}
\caption{Control system architecture considered in this work, in the presence of network-based attacks on sensing data.}
\label{fig:architecture}
\end{figure}

\subsubsection{Control Unit}
%The employed controller design is illustrated in~\cref{fig:architecture}. 
The controller, illustrated in Fig.~\ref{fig:architecture}, is equipped with a feedback controller in the most general form, as well as an intrusion detector (ID). In what follows, we provide more details on the controller design. Intrusion detector will be discussed after introducing the attack model.

\paragraph*{Controller}
% % Since most of the dynamical systems are unstable in essence or designed to be unstable (for instance, if an aircraft is unstable, it is easier to change its altitude), it is critical to stabilize such systems using a proper controller.
% A large number of dynamical systems are intrinsically unstable or are designed to be unstable (e.g., if an aircraft is unstable, it is easier to change its altitude). Thus, it is critical to stabilize such systems using a proper controller. 
Due to their robustness to uncertainties, closed-loop controllers are utilized in %almost all 
most control systems. In the most general form, a feedback controller can be captured in the state-space form~as 
\begin{equation}
\label{eq:control}
\begin{split}
\mathpzc{X}_{t} &= f_c(\mathpzc{X}_{t-1},y_t^c),\\
u_t  &= h_c(\mathpzc{X}_{t},y_t^c),
\end{split}
\end{equation}
where $\mathpzc{X}$ is the internal state of the controller, and $y^c$ captures the sensor measurements received by the controller. 
%
% When there is no unexpected behaviour 
Thus, without malicious activity, it holds that $y^c=y$.\footnote{We assume that the communication network is reliable (e.g., wired).}
Note that the control model~\eqref{eq:control} is general, capturing for instance %a
nonlinear filtering followed by a classic nonlinear controller (e.g., $f_c$ can model an extended Kalman filter and $h_c$ any full-state feedback controller). {Here, we assume that the function $f_c$ is Lipschitz with a constant $L_{f_c}$}.
We also assume $\mathpzc{X}_{-T}$ is obtained deterministically by the system operator.\footnote{The result of the chapter holds even when such initial condition is chosen randomly. We discuss more about this in the proof of Theorem~\ref{thm:PAt}.}

The dynamics of the closed-loop system can be captured~as
\begin{equation}\label{eq:closed-loop}
\mathbf{X}_{t+1}=F(\mathbf{X}_t,\mathbf{W}_t).
\end{equation}
with the full state of the closed-loop system  $\mathbf{X}\delequal \begin{bmatrix}
{x}_t^T&{\mathpzc{X}}_t^T \end{bmatrix}^T$, and exogenous disturbances $\mathbf{W}_t\delequal \begin{bmatrix}
w_t^T&{v}_{t+1}^T&{v}_{t}^T \end{bmatrix}^T$. Therefore, the functions $f_c$ and $h_c$ are designed such that the closed-loop system states satisfy some desired properties. 

% We assume that $\mathbf{X}=0$ is the operating point of the noiseless system (i.e., when $w=v=0$). {Moreover, we assume $f_c$ and $h_c$ are designed %such that the equilibrium point $\mathbf{X}=0$ of the %noiseless
% % closed-loop system is asymptotically stable. Therefore, the controller is used 
% to keep the system within a safe region around the equilibrium point.} Here, without loss of generality, we define the safe region as $\mathbf{S}=\{x\in \mathbb{R}^n\,\,|\,\,\Vert x\Vert_2 \leq R_{\mathbf{S}}\}$, for some $R_{\mathbf{S}}>0$.

\subsubsection{Attack Model} \label{sec:attack_model}

We consider a sensor attack model where, for  sensors from a set $\mathcal{K}\subseteq{\mathcal{S}}$, 
the information delivered to the controller differs from the non-compromised sensor measurements. The attacker can achieve this via e.g., noninvasive attacks such sensor spoofing (e.g.,~\cite{kerns2014unmanned}) or by compromising information-flow from the sensors in $\mathcal{K}$ to the controller (e.g., as in network-based attacks~\cite{lesi_rtss17,lesi_tcps20}). In either cases,  the attacker can launch false-date injection attacks, inserting a desired value instead of the current measurement of a compromised sensor.\footnote{We refer to sensors from $\mathcal{K}$ as compromised, even if a sensor itself is not directly
compromised but its measurements may be altered due to e.g., network-based attacks.}

Thus, assuming that the attack starts at time $t=0$, the sensor measurements delivered to the controller for $t\in \mathbb{Z}_{\geq 0}$ can be modeled~as
\begin{equation}\label{att:model}
y^{c,a}_t = y_t^a+a_t;
\end{equation}
here, $a_t\in {\mathbb{R}^p}$ denotes the attack signal injected by the attacker at time $t$ via the compromised sensors from $\mathcal{K}$, $y_t^a$ is the true sensing information (i.e., before the attack is injected at time~$t$) -- we use the
superscript a to differentiate all signals of the attacked system.
In the rest of the paper, we assume $\mathcal{K}=\mathcal{S}$; for some systems, we will discuss how the results can be generalized for the case when $\mathcal{K}\subset \mathcal{S}$. % under some condition.
Now, with different levels of runtime knowledge about the plant and its states we consider two different attack models as follows.

\textbf{ Attack model~${\text{\rom{1}}}$:} The attacker has perfect knowledge about the system states and the control input, as well as the functions $f$ and $h$  in~\eqref{eq:plant}. 

\textbf{ Attack model~${\text{\rom{2}}}$:} The attacker has only some imperfect knowledge about the system states. The attacker imperfectly reconstructs the states by using either the system's sensor measurements or its own set of external sensors. Moreover, the attacker has knowledge about the controller functions $f_c$ and $h_c$ in~\eqref{eq:control} and the functions $f$ and $h$  in~\eqref{eq:plant} but does not have access to control input $u_t$.  

Consider the system with plant~\eqref{eq:plant}, controller~\eqref{eq:control} and an ID that we define in the next subsection. We denote such system with \emph{attack model~${\text{\rom{1}}}$} and \emph{attack model~${\text{\rom{2}}}$}
as $\Sigma_{\text{\rom{1}}}$ and  $\Sigma_{\text{\rom{2}}}$, respectively.
%\todo{do you use covariance matrices $\Sigma_v$ $\Sigma_w$? if yes, maybe rename them, as $\Sigma$ is used both for a system as well as for the matrices} 
Note that since the controller uses the received sensing information to compute the input $u_t$, the compromised sensor values affect the evolution of the system and controller states. Hence, we add the superscript $a$ to denote any signal obtained from a compromised system %during the attack at time $t\in  \mathbb{Z}_{\geq 0}$ 
-- e.g.,  thus, $y_t^a$ is used to denote before-attack sensor measurements when the system is under attack in~\eqref{att:model}, and we denote the closed-loop plant and controller state when the system is compromised as $\mathbf{X}^a\delequal \begin{bmatrix}
{x^a}\\{{\mathpzc{X}^a}} \end{bmatrix}$. Since the attack starts at time zero, it takes one time step to affect the states and actual output; therefore, we have $y_0^a=y_0$ and $x_0^a=x_0$.

As discussed in the attack models, in this work, we consider the commonly adopted threat model as in majority of existing stealthy attack designs, e.g.,~\cite{mo2009secure,mo2010false,smith2015covert, khazraei2022attack, jovanov_tac19}, where the attacker has full knowledge of the system, its dynamics and employed architecture. In addition, the attacker has the required computational power to calculate suitable attack signals to be injected,
while planning ahead as needed. 

% \begin{remark}
% Note that the purpose of this work is to find the condition for which the system is %strictly 
% vulnerable (i.e., vulnerable to strictly stealthy yet impactful attacks) to adversarial attacks. Therefore, most of the assumptions might seem to be in favor of the attacker at the first glance; however, we later discuss how these assumptions can be relaxed.\todo{check}
% \end{remark} 

Finally, the \emph{\textbf{attack goal}} is to design an attack signal $a_t$, $t\in \mathbb{Z}_{\geq 0}$, such that it always remains \emph{stealthy} -- i.e., undetected by an employed ID -- while \emph{maximizing control performance degradation}. The notions of \emph{stealthiness} and \emph{control performance degradation} depend on the employed control architecture, and thus will be formally defined after both the controller and ID have been~introduced.

% for any $\mathcal{K}\subseteq{\mathcal{S}}$. 
%We use $a_i(t)$, $y_i(t)$ and $v_i(t)$ to show the attack value, output and noise for sensor $i\in{\mathcal{S}}$ at time $t$. 
%We also assume that the pair $(A,C)$ is observable which means the state of the system can be constructed truly from a set of output measurements when there is no attack and noise into the system. 
%Unlike works presented in [9-12],\PA{fix} in this paper we have no assumption on the set of under attack sensors $\mathcal{K}$. 

\subsubsection{Intrusion Detector (ID)} 
To detect system attacks (and anomalies), we assume that an %intrusion detector (ID) 
ID is employed, analyzing the received sensor measurements. Specifically, by defining $Y\delequal {y^c}$, 
as well as $Y^a\delequal{y^{c,a}}$ when the system is under attack,
we assume that the ID has access to a sequence of values $Y_{-T}:Y_t$ (all observations since the system starts operating until time $t$)\footnote{It should be noted that $-T$ in this notation does not have its mathematical meaning. Here, we used it to show the initial time of the system.} %(where data is sequentially received over time and is stored) 
and solves the binary hypothesis %test
checking\\

\vspace{-4pt}
$H_0$:  normal condition (the ID receives $Y_{-T}:Y_t$);~~
% \\

% \vspace{-6pt}
$H_1$: abnormal behaviour (the ID receives %$Y_{-\infty}^a:Y_t^a$
$Y_{-T}^{-1},Y_{0}^a:Y_t^a$),\footnote{Since the attack starts at $t=0$, we do not use superscript $a$ for the system evolution for $t<0$, as the trajectories of the non-compromised and compromised systems do not differ before the attack~starts.}\vspace{-4pt}
\\

\noindent
where we denote $Y_{-T}^{-1}=Y_{-T}:Y_{-1}$. Given a sequence of received data denoted by $\bar{Y}^t=\bar{Y}_{-T}:\bar{Y}_t$, it is either extracted from the $H_0$ hypothesis with the joint distribution denoted as $\mathbf{P}(Y_{-T}:Y_t)$, or from the alternate hypothesis with a joint (\emph{unknown}) distribution  denoted by $\mathbf{Q}(Y_{-T}^{-1},Y_{0}^a:Y_t^a)$;\footnote{With some abuse of the notation, we just use $\mathbf{P}$ or $\mathbf{Q}$ to refer to these joint distributions.}  
{note that the joint distribution $\mathbf{Q}$ is controlled by the injected attack signal and is thus unknown.} 

We define the intrusion detector $D$ as the mapping
\begin{equation}
D: \bar{Y}^t\to \{0,1\},
\end{equation}
where the output $0$ is associated with $H_0$ and output 1 with $H_1$. It should be noted that ID in the above format is general as there is no assumption on the mapping $D$, and the output measurements are the only information that is accessible to the system to detect the abnormalities. Any other signal including $\mathpzc{X}$ (consequently the input control $u_t$) can be obtained using the sequence of sensor measurements $\bar{Y}^t$ where such mapping can be captured in mapping $D$. 
% \todo{check}

% For any detector $D$  two possible errors may occur. The error type ($\rom{1}$) known as \emph{false alarm}, occurs if $D(\bar{Y}^t)=1$ when $\bar{Y}^t \sim \mathbf{P}$, and error type ($\rom{2}$), also known as \emph{miss-detection}, occurs when $D(\bar{Y}^t)=0$ for $\bar{Y}^t \sim \mathbf{Q}$. Hence, we define 
% the \underline{\emph{sum of conditional error probabilities}} of a detector $D$ for a given random sequence $\bar{Y}^t$, at time $t$~as
% %
% \begin{equation}
% \label{eq:pe}
% p_t^e=\mathbb{P}(D(\bar{Y}^t)=0|\bar{Y}^t\sim \mathbf{Q})+\mathbb{P}(D(\bar{Y}^t)=1|\bar{Y}^t \sim \mathbf{P}).
% \end{equation}
% Note that $p_t^e$ is not a probability measure as it can take values larger than one. 
Let us define $p_t^{TD}(D)=\mathbb{P}(D(\bar{Y}^t)=1|\bar{Y}^t \sim \mathbf{Q})$ as the probability of true detection, and $p_t^{FA}(D)=\mathbb{P}(D(\bar{Y}^t)=1|\bar{Y}^t \sim \mathbf{P})$ as the probability of false alarm for the detector $D$. Let also consider the random guess-based ID (defined by $D_{RG}$). For such random guess-based ID we have 
\begin{equation*}
\begin{split}
p^{FA}(D_{RG}) \overset{\text{\small (a)}}{=}& \mathbb{P}(D_{RG}(\bar{Y}^t)=1|\bar{Y}^t \sim \mathbf{P}) 
\overset{\text{\small (b)}}{=}\mathbb{P}(D_{RG}(\bar{Y}^t)=1)\\
\overset{\text{\small (c)}}{=}& \mathbb{P}(D_{RG}(\bar{Y}^t)=1|\bar{Y}^t \sim \mathbf{Q})\overset{\text{\small (d)}}{=}p^{TD}(D_{RG}),
\end{split}
\end{equation*}
where (a) and (d) hold according to the definition of true detection and false alarm probabilities; (b) and (c) hold because in random guess the detection is independent of the distribution of the observation. 
Hence, for random guess detectors, the probability of true detection and false alarm are equal. %Now, let have the following definition.

\begin{definition}\label{def:random_guess}
An ID (defined by $D$) is better than a random guess ID (defined by $D_{RG}$) if $p^{FA}(D)< p^{TD}(D)$.
% \todo{check}
\end{definition}

\begin{remark}
Note that although we assumed the ID has access to the measurement data from time $-T$ to the current time $t$, it does not mean that the system only should individually process each observation at each time step. The ID can combine all the observation until current time step to decide if the system is in normal condition. 
\end{remark} 

We now formalize the notion of stealthy attacks. 
%in the next section. 

\section{Formalizing Stealthiness and Attack Objectives} 
%Stealthy Attacks Requirements}
\label{sec:stealthy}
This section captures the conditions that an attack~sequence is stealthy %even from {{an optimal}} 
from \emph{any} ID. Specifically, we define an attack to be strictly stealthy if \emph{there exists no detector that can perform better than a random guess between the two hypothesis} (Definition~\ref{def:random_guess}). However, reaching such stealthiness guarantees may not be possible in general. Thus, we also define the notion of $\epsilon$-\emph{stealthiness}, which as we will show later, is attainable for a large class of nonlinear systems. Formally, we define the notions of \emph{strict stealthiness} and \emph{$\epsilon$-stealthiness} as~follows.

\begin{definition}
\label{def:stealthiness}
Consider the system from~\eqref{eq:plant}. An attack sequence, 
denoted by $\{a_{0}, a_{1},...\}$, 
is 
% \begin{itemize}
%     \item 
\emph{\textbf{strictly stealthy}} if there exists no detector for which $p_t^{TD}-p_t^{FA}>0$ holds, for any $t\geq 0$. %\todo{check} 
An attack is
    % \item 
\textbf{$\epsilon$-\emph{stealthy}} if for a given $\epsilon >0$, there exists no detector such that $p_t^{TD}-p_t^{FA}>\epsilon$ holds, for any $t\geq 0$. 
% \end{itemize}
\end{definition}

% Before introducing the sufficient condition for the above notion of stealthiness, we consider the following lemma. 

% \begin{lemma}[\cite{khazraei2022attacks}]\label{lemma:stealthy}
% A detector $D$ satisfies $ p^{TD}(D)-p^{FA}(D)>\epsilon$ if and only if $p^e(D) < 1-\epsilon$. 
% Also, $p^e(D)=1$ if and only if $p^{TD}(D)=p^{FA}(D)$ (i.e., $D$ performs as a random-guess detector). 
% \end{lemma}

% \begin{proof}
% First, we consider the case $p^e(D)< 1-\epsilon$. %Using the definition of $p^e$ 
% From~\eqref{eq:pe}, it holds that
% \begin{equation}
% \begin{split}
%  &p^e(D)=\mathbb{P}(D(\bar{Y})=0|\bar{Y}\sim \mathbf{Q})+\mathbb{P}(D(\bar{Y})=1|\bar{Y} \sim \mathbf{P})\\
% &=1-\mathbb{P}(D(\bar{Y})=1|\bar{Y}\sim \mathbf{Q})+\mathbb{P}(D(\bar{Y})=1|\bar{Y} \sim \mathbf{P})<1-\epsilon\\
% \end{split}
% \end{equation}
% %
% Thus, $ \mathbb{P}(D(\bar{Y})=1|\bar{Y} \sim \mathbf{P})<\mathbb{P}(D(\bar{Y})=1|\bar{Y}\sim \mathbf{Q})-\epsilon$ or $ p^{FA}(D)<p^{TD}(D)-\epsilon$. 

% Now, if we have $ p^e(D)=1$, then we get $\mathbb{P}(D(\bar{Y})=1|\bar{Y} \sim \mathbf{P})=\mathbb{P}(D(\bar{Y})=1|\bar{Y}\sim \mathbf{Q})$ where the decision of the detector $D$ is independent of the distribution of $\bar{Y}$ and therefore, the detector performs as the random guess detector. Since the reverse of all these implications hold, %one can show the reverse implication of the lemma. 
% the other (i.e., necessary) conditions of the lemma also hold.
% \end{proof}

The following theorem uses Neyman-Pearson lemma to capture the condition for which the received sensor measurements %by the intrusion detector
satisfy the stealthiness condition in Definition~\ref{def:stealthiness}. 

% To capture the stealthiness condition in Definition~\ref{def:stealthiness} on the considered ID, the following theorem uses Neyman-Pearson lemma.

\begin{theorem}[\cite{khazraei2022attacks,khazraei_l4dc22}]\label{thm:stealthy}
% Consider the system from~\eqref{eq:plant}. 
An attack sequence %of attack
is 
\begin{itemize}
    \item 
    strictly stealthy if and only if  %we have
    $KL\big(\mathbf{Q}(Y_{-T}^{-1},Y_{0}^a:Y_t^a)||\mathbf{P}(Y_{-T}:Y_t)\big)=0$ for all $t\in \mathbb{Z}_{\geq 0}$. %, where $KL$ represents the Kullback–Leibler divergence operator.
    \item is $\epsilon$-stealthy if the  observation sequence $Y_{0}^a:Y_t^a$ satisfies
    \begin{equation*}%\label{ineq:stealthiness}
        KL\big(\mathbf{Q}(Y_{-T}^{-1},Y_{0}^a:Y_t^a)||\mathbf{P}(Y_{-T}:Y_t)\big)\leq \log(\frac{1}{1-\epsilon^2}).
    \end{equation*}
\end{itemize}
\end{theorem}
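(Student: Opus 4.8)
The plan is to derive both statements from the Neyman--Pearson lemma together with standard inequalities relating the KL divergence to the total variation distance (or, equivalently, to the achievable gap between detection and false-alarm probabilities). First I would recall that for the binary hypothesis test between $\mathbf{P}$ (distribution of $Y_{-T}:Y_t$ under $H_0$) and $\mathbf{Q}$ (distribution of $Y_{-T}^{-1},Y_{0}^a:Y_t^a$ under $H_1$), the Neyman--Pearson lemma tells us that the likelihood-ratio test is uniformly most powerful; hence for \emph{every} detector $D$ we have $p_t^{TD}(D)-p_t^{FA}(D)\le \sup_{A}\big(\mathbf{Q}(A)-\mathbf{P}(A)\big)=\mathrm{TV}(\mathbf{P},\mathbf{Q})$, with equality attained by the optimal likelihood-ratio test. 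So the existence of a detector beating a random guess by more than $\epsilon$ (resp.\ by any positive amount) is \emph{equivalent} to $\mathrm{TV}(\mathbf{P},\mathbf{Q})>\epsilon$ (resp.\ $>0$).

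For the strict-stealthiness claim, I would then use the fact that $\mathrm{TV}(\mathbf{P},\mathbf{Q})=0$ if and only if $\mathbf{P}=\mathbf{Q}$ (as probability measures), and that $KL(\mathbf{Q}\|\mathbf{P})=0$ if and only if $\mathbf{Q}=\mathbf{P}$ (from nonnegativity of KL divergence and the equality condition of Jensen / Gibbs' inequality). Chaining these: the attack is strictly stealthy $\iff$ no detector achieves $p_t^{TD}-p_t^{FA}>0$ for any $t$ $\iff$ $\mathrm{TV}(\mathbf{P},\mathbf{Q})=0$ for all $t$ $\iff$ $\mathbf{P}=\mathbf{Q}$ for all $t$ $\iff$ $KL\big(\mathbf{Q}(Y_{-T}^{-1},Y_{0}^a:Y_t^a)\|\mathbf{P}(Y_{-T}:Y_t)\big)=0$ for all $t$, which is the desired iff.

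For the $\epsilon$-stealthiness claim, which is only a sufficient condition, I would invoke a Pinsker-type inequality. The cleanest route here is the bound $\mathrm{TV}(\mathbf{P},\mathbf{Q})\le \sqrt{1-e^{-KL(\mathbf{Q}\|\mathbf{P})}}$ (the Bretagnolle--Huber inequality), since it pairs exactly with the threshold $\log\frac{1}{1-\epsilon^2}$ appearing in the statement: if $KL\big(\mathbf{Q}\|\mathbf{P}\big)\le \log\frac{1}{1-\epsilon^2}$ then $1-e^{-KL}\le 1-(1-\epsilon^2)=\epsilon^2$, so $\mathrm{TV}(\mathbf{P},\mathbf{Q})\le\epsilon$, and hence by the Neyman--Pearson bound no detector can achieve $p_t^{TD}-p_t^{FA}>\epsilon$. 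I would state the Bretagnolle--Huber inequality as a cited fact (it is standard, e.g.\ in Tsybakov or Polyanskiy--Wu) rather than reprove it.

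The main obstacle is making the Neyman--Pearson step fully rigorous in this non-i.i.d., time-correlated setting: the "detector" is an arbitrary measurable map on the whole observation history $\bar Y^t$, and the alternative distribution $\mathbf{Q}$ is attacker-controlled and only known to be \emph{some} distribution on that space, so one must be careful that the supremum over detectors of $p_t^{TD}-p_t^{FA}$ is genuinely the total variation distance between the two \emph{joint} laws on $\mathbb{R}^{p(t+T+1)}$ and that randomized tests do not help beyond it. Since the paper already allows $D$ to be any map $\bar Y^t\to\{0,1\}$ and the densities are assumed to exist w.r.t.\ Lebesgue measure, this reduces to the standard finite-dimensional Neyman--Pearson argument applied at each fixed horizon $t$; I would simply note that the bound is applied separately for each $t$ and that the monotonicity of KL divergence (Lemma~\ref{lemma:mon}) is not even needed here, only the direct equivalence at the current horizon. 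A secondary subtlety worth one sentence is that the condition in the $\epsilon$ case is phrased in terms of the observation sequence $Y_0^a:Y_t^a$ while the KL is between the full histories including $Y_{-T}^{-1}$; since the pre-attack part is common to both hypotheses, the chain rule (Lemma~\ref{lemma:chain}) shows the $Y_{-T}^{-1}$ block contributes zero, so the two formulations agree.
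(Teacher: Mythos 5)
Your proposal is correct and follows essentially the same route as the proof referenced by the paper: the Neyman--Pearson characterization $\sup_D\big(p_t^{TD}-p_t^{FA}\big)=\mathrm{TV}(\mathbf{P},\mathbf{Q})$, the equality case of Gibbs' inequality for the strict-stealthiness iff, and the Bretagnolle--Huber bound $\mathrm{TV}\leq\sqrt{1-e^{-KL}}$ for the $\epsilon$-stealthiness direction, which is exactly consistent with the paper's later use of $\epsilon=\sqrt{1-e^{-b_{\epsilon}}}$. Your closing remarks on handling the pre-attack block via the chain rule and on applying the bound separately at each horizon $t$ are also consistent with how the paper uses the theorem.
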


\begin{remark}
The $\epsilon$-stealthiness condition from~\cite{bai2017kalman} requires $$\lim_{t\to \infty}\frac{KL\big(\mathbf{Q}(Y_{0}^a:Y_t^a)||\mathbf{P}(Y_{0}:Y_t)\big)}{t}\leq \epsilon.$$ 
This allows for the KL divergence to linearly increase over time for any $\epsilon>0$, and as a result, after large-enough time period the attack will be detected. On the other hand, our definition of $\epsilon$-stealthy %is independent of the time $t$ and only depends on $\epsilon$
only depends on $\epsilon$ and is fixed for any time $t$; thus, it introduces a stronger notion of attack~stealthiness. 
\end{remark}
  
\subsubsection*{Formalizing Attack Goal}\label{sec:attack_goal}

The attacker intends to
\emph{maximize} control performance degradation. Specifically, the attack goal is to cause deviation in the system's  trajectory. In other words, if we assume the attack starts at $t=0$, and denote the states of the attack-free and under attack systems as $x_t$ and $x_t^a$, respectively for $t\in \mathbb{Z}_{\geq 0}$, then 
the attack objective is to achieve   \begin{equation}
\Vert x_{t'}^a-x_{t'}\Vert \geq \alpha.
\end{equation}
for some $t'\in \mathbb{Z}_{\geq 0}$. In other words, the attacker wants to cause deviation in the trajectory of states with respect to system's own desired unattacked trajectory. Moreover, the attacker wants \emph{to remain stealthy (i.e., undetected by the intrusion detector)},
as formalized below.

\begin{definition}
\label{def:eps_alpha}
An attack sequence %, denoted by $\{a_{0}, a_{1},...\}$, %considered in attack model~\eqref{att:model} 
is referred to as $(\epsilon,\alpha)$-\emph{successful} attack if exists $t'\in \mathbb{Z}_{\geq 0}$ such that $ \Vert x_{t'}^a -x_{t'} \Vert \geq \alpha$ and the 
attack is $\epsilon$-stealthy 
for all $t\in \mathbb{Z}_{\geq 0}$.
When such sequence exists for a system, the system is called $(\epsilon,\alpha)$-\emph{attackable}. 
When the system 
is $(\epsilon,\alpha)$-attackable for arbitrarily large $\alpha$, the system is referred to as \emph{perfectly attackable}.
\end{definition}

% Our goal is the derive methods to capture impact of stealthy attacks; specifically, in the next section we derive conditions for 
% existence of a \emph{stealthy} yet \emph{effective} attack sequence 
% $Y_{0}^a, y_{1}^{a},...$  resulting in
% $\Vert x_t \Vert \geq \alpha$ for some $t\geq 0$ -- i.e., we find conditions for a system to be $(\epsilon,\alpha)$-attackable. 
% Here, for an attack to be stealthy, we focus on 
% the $\epsilon-$stealthy notion; 
% i.e., that the best %existing detector 
% AD could only improve the probability detection by $\epsilon$ compared to random-guess baseline detector. 

% \subsection*{Problem Formulation}\label{sec:PF}
Now, the problem considered in this work can be formalized as % the derive methods to 
capturing the potential impact of stealthy attacks on the considered system. Specifically, in the next section,
we derive conditions for 
existence of a \emph{stealthy} yet \emph{effective} attack sequence
$a_{0}, a_{1},...$  resulting in
$\Vert x_t^a-x_t\Vert \geq \alpha$ for some $t\in \mathbb{Z}_{\geq 0}$ -- i.e., we find conditions for the system to be $(\epsilon,\alpha)$-attackable. 
Here, for an attack to be stealthy, we focus on 
the $\epsilon-$stealthy notion; 
i.e., that even the best 
ID could only improve the detection probability by $\epsilon$ compared to the random-guess baseline detector.

\section{Vulnerability Analysis of Nonlinear Systems to Stealthy Attacks} \label{sec:perfect}

In this section, we derive the conditions such that the nonlinear system~\eqref{eq:plant} with closed-loop dynamics~\eqref{eq:closed-loop} is vulnerable to effective stealthy attacks formally defined in Section~\ref{sec:stealthy}. %The following theorem captures such condition.

\subsection{Vulnerability Analysis of Nonlinear Systems $\Sigma_{\text{\rom{1}}}$}% to Stealthy Attacks}

First, we derive the condition such that the system $\Sigma_{\text{\rom{1}}}$ is vulnerable to stealthy attacks. %The following theorem captures such condition.

\begin{theorem}
\label{thm:PAt}
The system~$\Sigma_{\text{\rom{1}}}$ is ($\epsilon,\alpha$)-attackable 
for arbitrarily small $\epsilon$ and arbitrarily large $\alpha$, if the closed-loop system~\eqref{eq:closed-loop} is %incrementally exponentially stable (IES) 
IES and the system~\eqref{eq:input-state} is %incrementally unstable (IU)
IU. 
\end{theorem}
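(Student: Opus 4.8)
The plan is to construct the attack explicitly and exploit the gap between the instability of the open-loop plant dynamics~\eqref{eq:input-state} and the contractivity of the closed-loop dynamics~\eqref{eq:closed-loop}. The key idea is that the attacker, having perfect knowledge of $f$, $h$, the state $x_t$, and the input $u_t$ (attack model~\rom{1}), can run a "shadow" copy of the plant in its head. Concretely, I would have the attacker simulate a fictitious state trajectory $x_t^a$ driven by the \emph{same} realized noise as the true system but starting from a slightly perturbed initial condition $x_0^a$ with $\Vert x_0^a - x_0\Vert$ arbitrarily small, and then inject $a_t = h(x_t^a) - h(x_t^a\text{-as-seen-by-controller})$... more precisely, the attacker feeds the controller the measurement $y_t^{c,a} = h(x_t^a) + v_t$ that it \emph{would} have received had the plant state been $x_t^a$. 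Since the controller cannot distinguish this from a genuine run of the system from initial condition $x_0^a$, the closed-loop system under attack evolves exactly as an unattacked closed-loop system with shifted initial condition. Because~\eqref{eq:closed-loop} is IES, the controller-state and plant-state trajectories of the attacked system converge to those of the nominal system, so $u_t^a \to u_t$; but because~\eqref{eq:input-state} is IU, feeding (asymptotically) the nominal input into the plant from a nearby-but-different initial condition makes $\Vert x_t^a - x_t\Vert$ grow beyond any $M$, giving the $\alpha$-impact for arbitrarily large $\alpha$.

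For the stealthiness half, the point is that the sequence of measurements delivered to the ID, $Y_{0}^a:Y_t^a$, has \emph{exactly} the distribution of a nominal run of the closed-loop system started from $x_0^a$ (with the same controller initialization, which is why the deterministic choice of $\mathpzc{X}_{-T}$ matters — this is presumably the content of the footnote). Hence $\mathbf{Q}(Y_{-T}^{-1},Y_0^a:Y_t^a)$ and $\mathbf{P}(Y_{-T}:Y_t)$ differ only through the shift in initial plant state, and by IES of the closed-loop map the conditional law of $Y_k^a$ given the past converges geometrically to that of $Y_k$. I would then bound the KL divergence using the chain rule (Lemma~\ref{lemma:chain}) to decompose $KL\big(\mathbf{Q}(Y_{-T}^{-1},Y_0^a:Y_t^a)\Vert\mathbf{P}(Y_{-T}:Y_t)\big)$ into a sum of per-step conditional KL terms, each of which — by the Gaussian measurement noise and Lemma~\ref{lemma:Guassian} — is of order $\Vert h(x_k^a) - h(x_k)\Vert^2 \le L_h^2 \Vert x_k^a - x_k\Vert^2 \le L_h^2 \kappa^2 \Vert x_0^a - x_0\Vert^2 \lambda^{2k}$ along the closed loop. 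Summing the geometric series gives a total KL bound proportional to $\Vert x_0^a - x_0\Vert^2 \cdot \frac{\kappa^2 L_h^2}{1-\lambda^2}$, \emph{uniformly in $t$}; choosing $\Vert x_0^a - x_0\Vert$ small enough makes this $\le \log\frac{1}{1-\epsilon^2}$, so by Theorem~\ref{thm:stealthy} the attack is $\epsilon$-stealthy for any prescribed $\epsilon>0$.

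The main obstacle — and the step I would spend the most care on — is reconciling the two stability notions applied to \emph{two different} systems: the IES estimate~\eqref{eq:closed-loop} controls $\Vert\mathbf{X}_t^a - \mathbf{X}_t\Vert$ (including controller state), which forces $u_t^a\to u_t$ geometrically, but the plant-level divergence I want for the $\alpha$-impact requires invoking IU of~\eqref{eq:input-state} with an input that is only \emph{asymptotically} equal to the nominal $u_t$, not exactly equal. So I must argue that a vanishing (geometrically decaying) input perturbation on top of the nominal input cannot prevent the IU-driven blow-up — essentially a robustness-of-instability argument. One clean way is: split $x_t^a - x_t$ via triangle inequality into (i) the divergence of the plant from $x_0^a$ under the \emph{exact} nominal input sequence $\{u_k\}$, which by IU eventually exceeds any $M$, minus (ii) the discrepancy between the true attacked plant trajectory (driven by $u_k^a$) and that fictitious exact-nominal-input trajectory, which can be kept small on any finite horizon by continuity/Lipschitzness of $f_u$ and the geometric decay of $\Vert u_k^a - u_k\Vert$; a Grönwall-type bound shows (ii) stays bounded while (i)$\to\infty$, so their difference still exceeds $\alpha$. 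A subtlety here is that IU as defined only guarantees divergence for \emph{some} $\xi_2$ arbitrarily close to $\xi_1$, so the attacker must pick $x_0^a$ to be exactly such a witness point; this is fine since $\Vert x_0^a - x_0\Vert$ can be taken as small as the stealthiness budget demands. I would also need to handle the harmless transient over $[-T,0)$ (where attacked and nominal runs coincide, contributing zero KL) and note that $x_0^a = x_0$, $y_0^a = y_0$ as already observed in the text, so the "shift" is actually introduced at $t=0$ in the controller's fed measurement rather than literally in $x_0$ — a minor bookkeeping point I would make precise by defining the attack on $a_0$ directly.
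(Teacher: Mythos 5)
Your construction is the paper's: a shadow (``fake'') plant trajectory whose measurements are fed to the controller, so that the attacked closed loop is statistically indistinguishable from a nominal run with a shifted initial condition; IES of~\eqref{eq:closed-loop} then yields both the geometric decay of the shadow-to-nominal gap (hence the summable KL bound, as in~\eqref{eq:epsilon}) and $u_t^a\to u_t$. Your stealthiness half is essentially the paper's argument. One technicality there: conditioning only on past outputs, the conditional law of $Y_k^a$ is a mixture over the state posterior, not a Gaussian, so you cannot apply Lemma~\ref{lemma:Guassian} directly to the chain-rule terms; the paper first invokes monotonicity (Lemma~\ref{lemma:mon}) to pass to the joint law of the pairs $\mathbf{Z}_k=(x_k,y_k)$, after which every conditional is Gaussian. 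Your sketch skips this lifting step and also drops the process-noise contribution $\lambda_{max}(\mathbf{R}_w^{-1})$, but both are easily repaired and do not change the conclusion.

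The genuine gap is in the impact argument. You apply IU to the pair of trajectories driven by the \emph{nominal} input (from $x_0^a$ and from $x_0$) and then try to control the leftover term (ii) $=\Vert x^a(t,x_0,u^a)-x(t,x_0^a,u)\Vert$ by a Gr\"onwall bound exploiting $\Vert u_k^a-u_k\Vert\lesssim \lambda^k$. That step fails: a plant that is IU cannot be a contraction, so $L_f\geq 1$, and the Gr\"onwall recursion only gives $\Vert \Delta_t\Vert \leq L_f^t\Vert s_0\Vert+\sum_{k} L_f^{t-k}\Vert u_k^a-u_k\Vert=\Theta(L_f^t)$ --- it does not establish boundedness, and for an incrementally unstable plant there is no reason for two trajectories with different initial conditions to stay close. ``Small on any finite horizon'' does not rescue this, because the time $t'$ at which the IU divergence first exceeds $M$ grows with $M$, so both terms in your reverse triangle inequality are growing. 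Nor does the IU definition provide any robustness to input perturbations --- that is exactly what IISU is introduced for in Theorem~\ref{thm:PA_2}. The fix is already inside your own setup: the true plant and your shadow plant are both driven by the \emph{same} applied input $U_t^a=(u_t^a,w_t)$, so IU applies directly to that pair (identical input, initial conditions differing by the witness $s_0$), giving $\Vert x_t^a-x_t^f\Vert\geq M$; the nominal trajectory then enters only through the already-established IES bound $\Vert x_t^f-x_t\Vert\leq \kappa(1+L_{f_c}L_h)\Vert s_0\Vert\lambda^t$, and one reverse triangle inequality finishes, exactly as in~\eqref{eq:alpha_x_a}. No robustness-of-instability argument is needed.
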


\begin{proof}
Assume that the trajectory of the system and controller states for $t\in \mathbb{Z}_{<0}$ is denoted by {$\mathbf{X}_{-T}:\mathbf{X}_{-1}$}.
Following the attack start at $t=0$, let us consider %two parallel trajectories happening at the same time 
the evolution of the system with and without attacks during $t\in \mathbb{Z}_{\geq 0}$. %One is the system and the controller states affected by the attack sequence 
For the system under attack, starting at time zero, the  trajectory {$\mathbf{X}_{0}^a:\mathbf{X}_{t}^a$} %, where the dynamic 
of the system and controller states is governed by
\begin{equation}\label{eq:attack_trajec}
\begin{split}
x_{t+1}^a=&f(x_t^a,u_t^a)+ w_t^a,\quad y_t^{c,a}=h(x_t^a)+v_t^a+a_t\\
\mathpzc{X}_{t}^a=&f_c(\mathpzc{X}^a_{t-1},y_t^{c,a}),\quad u_t^a=h_c(\mathpzc{X}^a_{t},y_t^{c,a}).\\
\end{split}
\end{equation}

On the other hand, if the system was not under attack 
%The other one is the trajectory of the system and controller states as if no attack has happened 
during $t\in \mathbb{Z}_{\geq 0}$,  we denote the plant and controller state evolution by {$\mathbf{X}_{0}:\mathbf{X}_{t}$}. Hence, it is a continuation of the system trajectories {$\mathbf{X}_{-T}:\mathbf{X}_{-1}$} if hypothetically no data-injection attack occurs during $t\in \mathbb{Z}_{\geq 0}$. Since the system and measurement noises are independent of the state, we can assume that $w_t^a=w_t$ and $v_t^{a}=v_t$. In this case, the dynamics of the plant and controller state evolution satisfies
\begin{equation}\label{eq:free_trajec}
\begin{split}
x_{t+1}=&f(x_t,u_t)+ w_t,\quad y_t^c=h(x_t)+v_t,\\
\mathpzc{X}_{t}=&f_c(\mathpzc{X}_{t-1},y_t^c),\quad u_t=h_c(\mathpzc{X}_{t},y_t^c),\\
\end{split}
\end{equation}
%which can be 
captured in the compact form~\eqref{eq:closed-loop}, %Where we can write~\eqref{eq:free_trajec} in the compact form of 
% \begin{equation}\label{eq:closed_free}
% \mathbf{X}_{t+1}=F(\mathbf{X}_t,\mathbf{W}_t),
% \end{equation}
% as defined in~\eqref{eq:closed-loop} 
with 
$\mathbf{X}_{0}=\begin{bmatrix}x_{0}\\\mathpzc{X}_{0}\end{bmatrix}$. 

Now, consider the sequence of attack vectors injected in the system~\eqref{eq:attack_trajec},  constructed by the attacker using the dynamics
\begin{equation}\label{eq:attack_seq}
\begin{split}
s_{t+1}&=f(x_t^a,u_t^a) -   f(x_t^a-s_t,u_t^a) \\
a_t&=h(x_t^a-s_t)-h(x_t^a),
\end{split}
\end{equation}
for $t\in \mathbb{Z}_{\geq 0}$, and with some arbitrarily chosen nonzero initial value of $s_0$. By injecting the above attack sequence into the sensor measurements, we can verify that $y_t^{c,a}=h(x_t^a)+v_t+a_t=h(x_t^a-s_t)+v_t$. After defining\footnote{Superscript $f$ is the initial of the word `fake' as we will show later how the attacker fool the system to believe $x^f$ represents as the systems state.}
\begin{equation}\label{eq:fake_state}
x_t^f\delequal x_t^a-s_t,
\end{equation}
and combining~\eqref{eq:attack_seq} with~\eqref{eq:attack_trajec}, the dynamics of $x_t^f$ and the controller, as well as the corresponding input and output satisfy
\begin{equation}\label{eq:closed_loop_attack}
\begin{split}
x_{t+1}^f=&f(x_t^f,u_t^a)+ w_t,\quad y_t^{c,a}=h(x_t^f)+v_t,\\
\mathpzc{X}_{t}^a=&f_c(\mathpzc{X}^a_{t-1},y_t^{c,a}), \quad \,\,\,\,\,\,\, u_t^a=h_c(\mathpzc{X}^a_{t},y_t^{c,a}),
\end{split}
\end{equation}
with the initial condition $x_0^f=x_0^a-s_0$. 

Now, if we define $\mathbf{X}^f_t=\begin{bmatrix}x_t^f\\\mathpzc{X}_{t}^a\end{bmatrix}$, it holds that
\begin{equation}\label{eq:closed_attack}
\mathbf{X}^f_{t+1}=F(\mathbf{X}^f_{t},\mathbf{W}_t).
\end{equation}
with $\mathbf{X}^f_{0}=\begin{bmatrix}x_0^f\\\mathpzc{X}_{0}^a\end{bmatrix}$. Since we have that $x_0^a=x_0$, it holds that $\mathbf{X}_{0}-\mathbf{X}^f_{0}=\begin{bmatrix}s_{0}\\\mathpzc{X}_{0}-\mathpzc{X}_{0}^a \end{bmatrix}$. Since the functions $f_c$ and $h$ are Lipschitz with constants $L_{f_c}$ and $L_h$, respectively, we have 
\begin{equation*}
\begin{split}
\Vert \mathpzc{X}_{0}-\mathpzc{X}_{0}^a \Vert &\leq \Vert f_c(\mathpzc{X}_{-1},y^{c}_0)-f_c(\mathpzc{X}_{-1},y^{c,a}_0) \Vert \leq L_{f_c} \Vert y^{c}_0 - y^{c,a}_0 \Vert\\
&\leq L_{f_c}\Vert h(x_0)+v_0-h(x_0-s_0)-v_0\Vert \\
&\leq L_{f_c} L_h \Vert s_0\Vert.
\end{split}
\end{equation*}
Therefore, we get $\Vert \mathbf{X}_{0}-\mathbf{X}^f_{0} \Vert \leq (1+L_{f_c} L_h)\Vert s_0\Vert$.

On the other hand, since  both~\eqref{eq:closed_attack} and~\eqref{eq:closed-loop} have the same function and argument $\mathbf{W}_t$, %and 
the closed-loop system~\eqref{eq:closed_attack} is %incrementally exponentially stable, 
IES, and it also follows that
\begin{equation}\label{eq:error_bound}
\begin{split}
\Vert \mathbf{X}(t,\mathbf{X}_{0},\mathbf{W})-\mathbf{X}^f(t,\mathbf{X}^f_{0},\mathbf{W})\Vert &\leq \kappa \Vert \mathbf{X}_{0}-\mathbf{X}^f_{0}\Vert \lambda^{t}\\
&= \kappa (1+L_{f_c} L_h) \Vert s_0\Vert \lambda^{t},
\end{split}
\end{equation}
for some nonnegative $\lambda <1$. Therefore, the trajectories of $\mathbf{X}$ (i.e., the system without attack)  and $\mathbf{X}^f$ converge to each other exponentially fast. 

% We now use these results to show that the generated attack sequence satisfies the $\epsilon$-stealthiness condition. According to~\eqref{eq:control}, $\mathpzc{X}_{-T}$ and $y_{-T}^{c}:y_{t}^c$ are sufficient to determine $\mathpzc{X}_{t}$; i.e., $\mathpzc{X}_{t}=\mathcal{F}_t(\mathpzc{X}_{-T},y_{-T}^{c}:y_{t}^c)$ where $\mathcal{F}_t$ is a function determined by $f_c$. Equivalently, $\mathpzc{X}_{-T}$ and $y_{-T}^{c}:y_{-1}^{c},y_{0}^{c,a}:y_{t}^{c,a}$ are sufficient to determine $\mathpzc{X}_{t}^a$; i.e., $\mathpzc{X}_{t}^a=\mathcal{F}_t(\mathpzc{X}_{-T},y_{-T}^{c}:y_{-1}^{c},y_{0}^{c,a}:y_{t}^{c,a})$. Now, using data processing inequality from Lemma~\ref{lemma:data} results in
% \begin{equation}\label{ineq:1_1}
% \begin{split}
% &KL\big(\mathbf{Q}(Y_{-T}^{-1},Y_{0}^a:Y_t^a)||\mathbf{P}(Y_{-T}:Y_t)\big) \\
% &\leq KL\big(\mathbf{Q}(\mathpzc{X}_{-T},y_{-T}^{c}:y_{-1}^{c},y_{0}^{c,a}:y_{t}^{c,a})||\mathbf{P}(\mathpzc{X}_{-T},y_{-T}^{c}:y_{t}^{c})\big).
% \end{split}
% \end{equation}
% Intuitively, data processing inequality in above  states that the processing caused by function $\mathcal{F}_t$ makes it more difficult to determine whether the observation comes from $H_0$ or $H_1$. 
Now, by defining $\mathbf{Z}_{t}=\begin{bmatrix}x_{t}\\y_t^{c}\end{bmatrix}$ and $\mathbf{Z}_{t}^f=\begin{bmatrix}x_t^f\\y_t^{c,a}\end{bmatrix}$, it holds that
\begin{equation}\label{ineq:1_2}
\begin{split}
KL\big(&\mathbf{Q}(Y_{-T}^{-1},Y_{0}^a:Y_t^a)||\mathbf{P}(Y_{-T}:Y_t)\big)  \\
&\leq  KL\big(\mathbf{Q}(\mathbf{Z}_{-T}^{-1},\mathbf{Z}_{0}^f:\mathbf{Z}_{t}^f)||\mathbf{P}(\mathbf{Z}_{-T}:\mathbf{Z}_{t})\big),
\end{split}
\end{equation}
where we applied the monotonicity property of the KL-divergence from Lemma~\ref{lemma:mon} to get the above inequality.  Then, we apply the chain-rule property of KL-divergence on the right-hand side of %the above inequality
\eqref{ineq:1_2} to obtain the following
\begin{equation}\label{ineq:2}
\begin{split}
&KL\big(\mathbf{Q}(\mathbf{Z}_{-T}^{-1},\mathbf{Z}_{0}^f:\mathbf{Z}_{t}^f)||\mathbf{P}(\mathbf{Z}_{-T}^{-1},\mathbf{Z}_{0}:\mathbf{Z}_{t})\big)\\
&= KL\big(\mathbf{Q}(\mathbf{Z}_{-T}^{-1})||\mathbf{P}(\mathbf{Z}_{-T}^{-1})\big)+\\
&\quad \quad \quad \quad KL\big(\mathbf{Q}(\mathbf{Z}_{0}^f:\mathbf{Z}_{t}^f|\mathbf{Z}_{-T}^{-1})||\mathbf{P}(\mathbf{Z}_{0}:\mathbf{Z}_{t}|\mathbf{Z}_{-T}^{-1})\big)\\
&=KL\big(\mathbf{Q}(\mathbf{Z}_{0}^f:\mathbf{Z}_{t}^f|\mathbf{Z}_{-T}^{-1})||\mathbf{P}(\mathbf{Z}_{0}:\mathbf{Z}_{t}|\mathbf{Z}_{-T}^{-1})\big);
\end{split}
\end{equation}
here, we used the fact that the %Kullback–Leibler 
KL-divergence of two identical joint distributions (i.e., $\mathbf{Q}(\mathbf{Z}_{-T}^{-1})$ and $\mathbf{P}(\mathbf{Z}_{-T}^{-1})$ since the system is not under attack for $t<0$) %are
is zero.

Applying the chain-rule property of %Kullback–Leibler 
KL-divergence to~\eqref{ineq:2}% results in
\begin{equation}\label{ineq:3}
\begin{split}
& KL\big(\mathbf{Q}(\mathbf{Z}_{0}^f:\mathbf{Z}_{t}^f|\mathbf{Z}_{-T}^{-1})|| \mathbf{P}(\mathbf{Z}_{0}:\mathbf{Z}_{t}|\mathbf{Z}_{-T}^{-1})\big) \\
&= \sum_{k=0}^{t}\Big\{KL\big(\mathbf{Q}(x_k^f|\mathbf{Z}_{-T}^{-1},\mathbf{Z}_{0}^f:\mathbf{Z}_{k-1}^f)||\mathbf{P}(x_{k}|\mathbf{Z}_{-T}:\mathbf{Z}_{k-1})\big)\\
&+ KL\big(\mathbf{Q}(y_k^{c,a}|x_k^f,\mathbf{Z}_{-T}^{-1},\mathbf{Z}_{0}^f:\mathbf{Z}_{k-1}^f)||\mathbf{P}(y_{k}|x_k,\mathbf{Z}_{-T}:\mathbf{Z}_{t-1})\big)\Big\}.
\end{split}
\end{equation}

Given $\mathbf{Z}_{-T}:\mathbf{Z}_{k-1}$, the distribution of $x_k$ is a Gaussian with mean $f(x_{k-1},u_{k-1})$ and covariance $\mathbf{R}_w$. Similarly given  $\mathbf{Z}_{-T}^{-1},\mathbf{Z}_{0}^f:\mathbf{Z}_{k-1}^f$, and the fact that $s_0$ is a deterministic value,  the distribution of $x_k^f$ is a Gaussian with mean $f(x_{k-1}^f,u_{k-1}^a)$ and covariance $\mathbf{R}_w$. According to~\eqref{eq:free_trajec} and~\eqref{eq:closed_loop_attack}, it holds that $f(x_{k-1},u_{k-1})-f(x_{k-1}^f,u_{k-1}^a)=x_k-x_k^f$. On the other hand, in~\eqref{eq:error_bound}, we showed that $\Vert x_k-x_k^f\Vert \leq \kappa (1+L_{f_c} L_h) \Vert s_0\Vert \lambda^{k}$ holds for $k\in \mathbb{Z}_{\geq 0}$. Therefore, for all $0\leq k\leq t$, it holds that
\begin{align}\label{ineq:4}
% \begin{split}
KL\big(\mathbf{Q}(x_k^f|&\mathbf{Z}_{-T}^{-1},\mathbf{Z}_{0}^f:\mathbf{Z}_{k-1}^f)||\mathbf{P}(x_{k}|\mathbf{Z}_{-T}:\mathbf{Z}_{k-1})\big) \nonumber\\
\stackrel{(i)}=& \mathbb{E}_{\mathbf{Q}(\mathbf{Z}_{-T}^{-1},\mathbf{Z}_{0}^f:\mathbf{Z}_{k-1}^f)}\{(x_k-x_k^f)^T \mathbf{R}_w^{-1}  (x_k-x_k^f) \}
\nonumber\\\stackrel{(ii)}\leq& \kappa^2 \Vert s_0\Vert^2 \lambda^{2k} (1+L_{f_c} L_h)^2 \lambda_{max}(\mathbf{R}_w^{-1}),
% \end{split}
\end{align}
where we used Lemma~\ref{lemma:Guassian} to get equality $(i)$ and using the boundedness of  $(x_k-x_k^f)^T \mathbf{R}_w^{-1}  (x_k-x_k^f)$ followed by applying Lemma~\ref{lemma:maximum}, we get the  inequality $(ii)$, with   $\lambda_{max}(\mathbf{R}_w^{-1})$ being the maximum eigenvalue of the matrix~$\mathbf{R}_w^{-1}$. 

Now, it holds that $\mathbf{Q}(y_k^{c,a}|x_k^f,\mathbf{Z}_{-T}:\mathbf{Z}^f_{k-1})=\mathbf{Q}(y_k^{c,a}|x_k^f)$ and $\mathbf{P}(y_{k}|x_k,\mathbf{Z}_{-T}:\mathbf{Z}_{k-1})=\mathbf{P}(y_{k}|x_k)$; 
also, from~\eqref{eq:free_trajec} and~\eqref{eq:closed_loop_attack} it holds that given $x_k$ and $x_k^f$, $\mathbf{P}(y_{k}|x_k)$ and $\mathbf{Q}(y_k^{c,a}|x_k^f)$ are both Gaussian with mean $h(x_k)$ and $h(x_k^f)$, respectively and covariance $\mathbf{R}_v$. Thus, it follows that
\begin{align}\label{ineq:5}
% \begin{split}
KL\big(\mathbf{Q}&(y_k^{c,a}|x_k^f)||\mathbf{P}(y_{k}|x_k)\big)\nonumber\\
&=\mathbb{E}_{\mathbf{Q}(x_k^f)}\{\big(h(x_k)-h(x_k^f)\big)^T \mathbf{R}_v^{-1}  \big(h(x_k)-h(x_k^f)\big)\}\nonumber\\
&\leq L_h^2 (x_k-x_k^f)^T \mathbf{R}_v^{-1}  (x_k-x_k^f)\nonumber\\
&\leq L_h^2 (1+L_{f_c} L_h)^2\kappa^2 \Vert s_0\Vert^2 \lambda^{2k} \lambda_{max}(\mathbf{R}_v^{-1}),
% \end{split}
\end{align}
where we used again Lemmas~\ref{lemma:chain},~\ref{lemma:Guassian} and~\ref{lemma:maximum} to get the above inequality. 
Combining~\eqref{ineq:1_2}-\eqref{ineq:5} results in 
\begin{equation}\label{eq:epsilon}
\begin{split}
&KL\big(\mathbf{Q}(Y_{-T}^{-1},Y_{0}^a:Y_t^a)||\mathbf{P}(Y_{-T}:Y_t)\big)  \leq \\
&\sum_{k=0}^t \kappa^2 \Vert s_0\Vert^2 \lambda^{2k} \lambda_{max}(\mathbf{R}_w^{-1})+L_h^2 \kappa^2 \Vert s_0\Vert^2 \lambda^{2k} \lambda_{max}(\mathbf{R}_v^{-1})\\
&\leq \frac{\kappa^2 (1+L_{f_c} L_h)^2 \Vert s_0\Vert^2}{1-\lambda^2} \big(\lambda_{max}(\mathbf{R}_w^{-1})+L_h^2 \lambda_{max}(\mathbf{R}_v^{-1})\big)\delequal b_{\epsilon}.
\end{split}
\end{equation}
Finally, with $b_{\epsilon}$ defined as in~\eqref{eq:epsilon} and using Theorem~\ref{thm:stealthy} the attack sequence defined in~\eqref{eq:attack_seq} satisfies the $\epsilon$-stealthiness condition with $\epsilon=\sqrt{1-e^{-b_{\epsilon}}}$. 

We now show that the proposed attack sequence is effective; i.e., there exists $t'\in \mathbb{Z}_{\geq 0}$ such that $\Vert x_{t'}^a-x_{t'}\Vert \geq \alpha$ for arbitrarily large $\alpha$. To achieve this, consider the two dynamics from~\eqref{eq:attack_trajec} and~\eqref{eq:closed_loop_attack} for any $t\in \mathbb{Z}_{\geq 0}$
\begin{equation}
\begin{split}
x_{t+1}^a=&f(x_t^a,u_t^a)+ w_t=f_u(x_t^a,U_t^a)\\
x_{t+1}^f=&f(x_t^f,u_t^a)+ w_t=f_u(x_t^f,U_t^a)
\end{split}
\end{equation}
with $U_t^a=\begin{bmatrix}{u_t^a}^T&w_t^T\end{bmatrix}^T$, for $t\in \mathbb{Z}_{\geq 0}$. Since we assumed that the open-loop system~\eqref{eq:input-state} is %incrementally unstable 
IU, it holds that for all $x_0^a=x_0$, there exits a nonzero $s_0$ such that for any $M>0$ %the following holds
\begin{equation}\label{eq:alpha}
\Vert x^a(t,x_0^a,U^a)-x^f(t,x_0^a-s_0,U^a)\Vert \geq M
\end{equation}
holds in $t\geq t'$, for some $t'\in \mathbb{Z}_{\geq 0}$. 

On the other hand, we showed in~\eqref{eq:error_bound} that $\Vert x(t,x_0,U)-x^f(t,x_0^a-s_0,U^a)\Vert \leq \kappa (1+L_{f_c} L_h) \Vert s_0\Vert \lambda^{t}$. Combining this with~\eqref{eq:alpha} results in
\begin{equation}\label{eq:alpha_x_a}
\begin{split}
&\Vert x^a(t,x_0^a,U^a)-x(t,x_0-s_0,U)\Vert = \\
&\Vert x^a(t,x_0^a,U^a)-x^f(t,x_0^a-s_0,U^a)+x^f(t,x_0^a-s_0,U^a)\\&-x(t,x_0-s_0,U)\Vert \geq \Vert x^a(t,x_0^a,U^a)-x^f(t,x_0^a-s_0,U^a)\Vert \\
& -\Vert x^f(t,x_0^a-s_0,U^a)-x(t,x_0-s_0,U)\Vert \\
 & \qquad \qquad \qquad \qquad \geq M-\kappa (1+L_{f_c} L_h)\Vert s_0\Vert \lambda^{t} \\
&\Rightarrow \Vert x^a(t,x_0^a,U^a)-x(t,x_0-s_0,U)\Vert \\
&\geq M-\kappa (1+L_{f_c} L_h) \Vert s_0\Vert \lambda^{t} \geq  M-\kappa (1+L_{f_c} L_h)\Vert s_0\Vert .
\end{split}
\end{equation}
Since $M$ is a free parameter, we can choose it to satisfy $M>\alpha+\kappa (1+L_{f_c} L_h)\Vert s_0\Vert$, for arbitrarily large $\alpha$. Thus, the system is $(\epsilon,\alpha)$-attackable.
\end{proof}

From~\eqref{eq:closed_loop_attack}, %we can see that 
the false sensor measurements are generated by the evolution of $x_t^f$. Therefore, intuitively, the attacker wants to deceive the system into believing that $x_t^f$ is the actual state of the system instead of $x_t^a$. Since $x_t^f$ and $x_t$ (i.e., the system state if no attack occurs during $t\in \mathbb{Z}_{\geq 0}$) converge to each other exponentially fast, the idea is that the system almost believes that $x_t$ is the system state (under attack), while the actual state $x_t^a$ becomes arbitrarily large (see Fig.~\ref{fig:fake_state}).  On the other hand,  $s_t=x_t^a-x_t^f$ denotes the deviation between the actual state $x_t^a$ and fake state $x_t^f$, and since we have $\Vert x_t^f-x_t\Vert \leq \kappa (1+L_{f_c} L_h) \Vert s_0\Vert \lambda^{t}$, we can derive $\Vert x_t^a-x_t\Vert \geq \Vert s_t \Vert -\kappa (1+L_{f_c} L_h) \Vert s_0\Vert$ using the same procedure as in~\eqref{eq:alpha_x_a}. Thus, for an $s_0$ with very a small norm, $s_t$ can represent the deviation between the under attack $x_t^a$ and attack-free $x_t$ states. This shows that to have an impactful attack, the dynamics of the $s_t$ needs to be unstable.

\begin{figure}[!t]
\centering
% \vspace{6pt}
\includegraphics[width=0.268\textwidth]{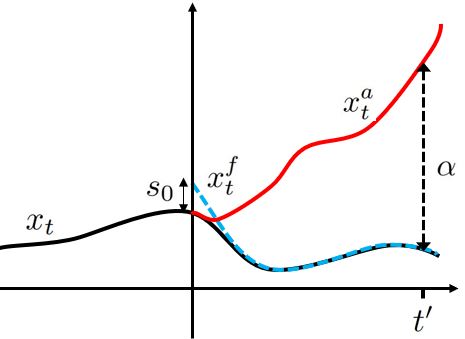}
% \vspace{-6pt}
\caption{The trajectories of the system states $x_t$ without attack (black line), of the `fake 'states $x_t^f$ (blue dash line), and the system states $x_t^a$ after the attack starts (red line).}
\label{fig:fake_state}
\end{figure}

Moreover,
% It can also be seen that
all parameters $\kappa$, $ \lambda$, $L_h$, $L_{f_c}$, $\mathbf{R}_w$, and $\mathbf{R}_v$ in~\eqref{eq:epsilon} are some constants that depend either on system properties ($L_h$, $\mathbf{R}_w$, and $\mathbf{R}_v$) or are determined by the controller design ($L_{f_c}$, $\kappa$, $ \lambda$). However, $s_0$ is set by the attacker, and \emph{\textbf{it can be chosen arbitrarily small} to make $\epsilon$ arbitrarily close to zero}.   Yet, %it should be noted that 
$s_0$ cannot be equal to zero; in that case~\eqref{eq:alpha} would not hold --  i.e., the attack would not be impactful. Therefore, as opposed to attack methods targeting the prediction covariance in~\cite{bai2017kalman} where the attack impact linearly changes with $\epsilon$, here arbitrarily large $\alpha$ (high impact attacks) can be achieved  even with an arbitrarily small $\epsilon$; it may only take more time to get to $\Vert x^a_{t'}-x_{t'}\Vert \geq \alpha$.

% \begin{remark}
% Even though we assumed that the closed-loop dynamics is % incrementally exponentially stable, 
% IES, slightly weaker results can still be obtained for closed-loop dynamics with incrementally asymptotic stability. %Due to the space constraint, we 
% We will consider this case as future work. 
% \end{remark}

\subsubsection*{Discussion on Incremental Stability}
The notion of incremental stability from Definition~\ref{def:IES} determines how fast the system `forgets' the impact of the initial condition~\cite{angeli2002lyapunov}.~For~example, consider an LTI state space model $x_{t+1}=Ax_t+Bd_t$. If the matrix $A$ has all eigenvalues inside the unit circle then the system is IES according to Definition~\ref{def:IES}. Alternatively, we can look at the solution of states at time $t$ as 
\begin{equation*}
x_t= A^t x_0 + \sum\nolimits_{k=0}^{t-1} A^{t-k-1}Bd_k,
\end{equation*}
where the term $A^t x_0$ captures the impact of the initial condition and the second term capturing the impact of the input sequence $d_0:d_{t-1}$ on the system.  If all eigenvalues of the matrix $A$ are inside the unit the circle, then the term $A^t x_0$ will approach to zero exponentially for any given initial condition. This means the system will `forget' the impact of initial condition if the matrix $A$ is stable.

Hence, IES means the impact of initial condition converges to zero exponentially fast. Similar notion also exists with an asymptotic behaviour where the system is called incrementally asymptotically stable if the impact of initial condition approaches to zero asymptotically. If we replace the IES condition in Theorem~\ref{thm:PAt} by incrementally asymptotically stability, one can still show that the system will be $(\epsilon,\alpha)$-attackable; yet, the upper bound obtained in~\eqref{eq:epsilon} will be looser. Additional discussion is provided in Appendix~\ref{app:IS}.

% \begin{remark}
% For constructing the attack sequence in~\eqref{eq:attack_seq} we assumed that the attacker has knowledge of the system's nonlinear functions $f$ and $h$, as well as has access to the values of the system state. % is to capture where there exists a sequence of ($\epsilon,\alpha$)-successful attacks. 
% In future work, we will show how these assumptions % on these perfect knowledge 
% can be relaxed for systems with general nonlinear dynamics.%\todo{make this consitent with the previous comment where you say we relax this}
% \end{remark}

\begin{remark}
% It should be noted that 
In case that either $w=0$ or $v=0$ (i.e., when there is no process or measurement noise), one can still get a similar bound on the KL-divergence only as a function of the nonzero noise covariance by applying the monotonicity and data-processing inequalities. However, ensuring stealthiness requirement is not possible if both $w=0$ and $v=0$ (i.e., for a noiseless system), as the system would be completely  deterministic, and thus theoretically any small perturbation to the sensor measurements could be detected.
\end{remark}

\subsection{Vulnerability Analysis of Nonlinear Systems~$\Sigma_{\text{\rom{2}}}$}% to Stealthy Attacks}

%We now consider the conditions on vulnerability of nonlinear systems~$\Sigma_{\text{\rom{2}}}$ to stealthy attacks. 
As discussed 
in %attack model 
Section~\ref{sec:motive}, for the system~$\Sigma_{\text{\rom{2}}}$ %it is assumed that 
the attacker does not have perfectly access to the system states. Here, we consider two possible methods for estimating the states.

\textbf{Case 1:} The attacker designs an estimator $\mathcal{E}$ that takes the sequence of system measurements $y_{-\mathcal{T}}:y_{-1},y_0^a:y_{t-1}^a$ to estimate the states
\begin{equation}\label{eq:state_es}
 \hat{x}_t^a=\mathcal{E}_t(y_{-\mathcal{T}}:y_{-1},y_0^a:y_{t-1}^a)
\end{equation}
where $\mathcal{E}$ is a nonlinear mapping that can represent any nonlinear filtering such as extended Kalman filter (EKF). We assume that the attacker uses the measurements of $\mathcal{T}$ (with $-T<-\mathcal{T}$) time step before the attack initiation to estimate the states. It is also adssumed that the estimation error $\zeta=\hat{x}-x$ is bounded; i.e., $\Vert \zeta_t\Vert \leq b_{\zeta}$ for all $t\in \mathbb{Z}_{>0}$.

\textbf{Case 2:} The attacker uses its own set of sensors $y'_t$ to imperfectly measure the states as $y'_t=h'(x_t)+v'_t$ where $v'$ is i.i.d noise independent of $w$ and $v$. Such measurements are used in a filter or fusion model to extract the state estimate~as %as follows $\hat{x}_t$, 
\begin{equation}\label{eq:state_es_2}
 \hat{x}_t^a=\mathcal{E}'_t(y'_{-\mathcal{T}}:y'_{-1},{y'^a_{0}}:{y'^a_{t}}),
\end{equation}
with $\zeta_t=\hat{x}_t^a-x_t^a$. Moreover, it is assumed that $\zeta_t$ is bounded by $b'_{\zeta}$ for all $t\in \mathbb{Z}_{>0}$.

\begin{remark}
Since the system and measurement noises are modeled by Gaussian distribution, theoretically it might not be possible to provide a bound on the estimation error with probability one for case 1. However, we assume $b_{\zeta}$ is chosen high enough that the estimation error is bounded by $b_{\zeta}$ during the attack run-time with probability of almost one. 
\end{remark}

Now, the following theorem %we derive 
captures the conditions such that the system~$\Sigma_{\text{\rom{2}}}$ is
vulnerable to stealthy attacks. %The following theorem captures such condition. 

\begin{theorem}\label{thm:PA_2}
{Assume that the function $h_c$ is Lipschitz with a constant $L_{h_c}$}. The system~$\Sigma_{\text{\rom{2}}}$ is ($\epsilon,\alpha$)-attackable 
for arbitrarily large $\alpha$, if the closed-loop system~\eqref{eq:closed-loop} is %incrementally input to state stable (IES) 
IISS,
% \todo{you had IISS in text but IES acronym -- check, fix} 
the controller dynamics~\eqref{eq:control} is IES, and the system~\eqref{eq:input-state} is %incrementally input to state unstable (IU).
IISU.
\end{theorem}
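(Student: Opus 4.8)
The plan is to follow the template of the proof of Theorem~\ref{thm:PAt}, but to carry along the two extra error sources that attack model~$\text{\rom{2}}$ forces on the attacker: the state-estimation error $\zeta_t=\hat{x}_t^a-x_t^a$ (bounded by $b_\zeta$ in Case~1, $b'_\zeta$ in Case~2), and the error in reconstructing the unknown control input $u_t^a$. First I would have the attacker reconstruct $u_t^a$: since it knows $f_c,h_c$, injects $a_t$ itself, and (Case~1) observes the pre-attack measurements, it can form $y_t^{c,a}$ and run a copy of the controller~\eqref{eq:control}, $\hat{\mathpzc{X}}_t=f_c(\hat{\mathpzc{X}}_{t-1},y_t^{c,a})$, $\hat{u}_t^a=h_c(\hat{\mathpzc{X}}_t,y_t^{c,a})$, from an arbitrary initial controller state. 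Because the controller dynamics~\eqref{eq:control} is IES and $h_c$ is Lipschitz with constant $L_{h_c}$, the mismatch $\Vert\hat{\mathpzc{X}}_t-\mathpzc{X}_t^a\Vert$ — and hence $\Vert\hat{u}_t^a-u_t^a\Vert$ — decays exponentially; in Case~2 the attacker feeds $h(\hat{x}_t^a)+a_t$ in place of $y_t^{c,a}$, which only adds a term of order $b'_\zeta$ to this bound.

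Second, define the attack sequence just as in~\eqref{eq:attack_seq} but with the estimates in place of the true quantities,
\begin{equation*}
s_{t+1}=f(\hat{x}_t^a,\hat{u}_t^a)-f(\hat{x}_t^a-s_t,\hat{u}_t^a),\qquad a_t=h(\hat{x}_t^a-s_t)-h(\hat{x}_t^a),
\end{equation*}
with small nonzero $s_0$, and set the comparison fake state $x_t^f\delequal x_t^a-s_t$ (so that it automatically carries the true process and measurement noise). Substituting into~\eqref{eq:attack_trajec}, one checks that $x_t^f$ together with the controller satisfies the \emph{nominal} closed loop~\eqref{eq:closed-loop} driven by a \emph{perturbed disturbance} $\mathbf{W}_t+\delta_t$, where $\delta_t$ collects the mismatches created by using $\hat{x}^a,\hat{u}^a$ instead of $x^a,u^a$ and the measurement offset $h(x_t^a)-h(\hat{x}_t^a)$; using Lipschitz continuity of $f,h$ and the bounds on $\zeta_t$ and on the control-reconstruction error, $\Vert\delta_t\Vert\le c_1 b_\zeta+c_2\lambda_c^{\,t}$ is uniformly bounded. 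Invoking the IISS assumption on~\eqref{eq:closed-loop} then gives
\begin{equation*}
\Vert\mathbf{X}(t,\mathbf{X}_0,\mathbf{W})-\mathbf{X}(t,\mathbf{X}_0^f,\mathbf{W}+\delta)\Vert\le \kappa\Vert\mathbf{X}_0-\mathbf{X}_0^f\Vert\lambda^t+\gamma\big(\sup_{t}\Vert\delta_t\Vert\big),
\end{equation*}
i.e.\ the fake trajectory, hence $x_t^f$, stays inside a tube of radius $O(\gamma(b_\zeta))$ around the attack-free trajectory $x_t$.

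Third, bound the KL divergence exactly as in~\eqref{ineq:1_2}--\eqref{eq:epsilon}: monotonicity (Lemma~\ref{lemma:mon}) to pass to the $\mathbf{Z}^f$ sequence, the chain rule (Lemma~\ref{lemma:chain}) to split into per-step state- and measurement-transition terms, and the Gaussian formula (Lemma~\ref{lemma:Guassian}) with Lemma~\ref{lemma:maximum} on each term. The new feature is that the per-step mean differences no longer shrink with $s_0$; each is $O(\Vert x_k-x_k^f\Vert+b_\zeta)=O(b_\zeta)+O(\lambda_c^{\,k})$ by the tube bound, with the deterministic offset $h(x_k^a)-h(\hat{x}_k^a)$ (norm $\le L_h b_\zeta$, handled after conditioning on the past) simply absorbed. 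Summing the geometric-plus-constant series yields a finite bound $b_\epsilon$ that is monotone in $b_\zeta$ and vanishes as $b_\zeta\to0$, so by Theorem~\ref{thm:stealthy} the attack is $\epsilon$-stealthy with $\epsilon=\sqrt{1-e^{-b_\epsilon}}$. For effectiveness, note $x_t^a$ and $x_t^f=x_t^a-s_t$ evolve under $f_u$ with input sequences differing only by the bounded, decaying control-reconstruction error; since~\eqref{eq:input-state} is IISU, for a suitable small $s_0$ one gets $\Vert x_t^a-x_t^f\Vert\ge M$ for all large $t$, and combining with $\Vert x_t^f-x_t\Vert\le\text{const}$ via the triangle inequality as in~\eqref{eq:alpha_x_a} gives $\Vert x_t^a-x_t\Vert\ge M-\text{const}$; choosing $M$ large makes the attack $(\epsilon,\alpha)$-successful for arbitrarily large $\alpha$.

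The hardest part is the bookkeeping in the second step. One must verify that the control-reconstruction error stays below the threshold $\delta$ demanded by the IISU definition of~\eqref{eq:input-state} — which is precisely why IES of the controller, rather than mere asymptotic stability, is assumed: it makes that error decay geometrically — and one must confirm that the composite perturbation $\delta_t$ fed into the IISS estimate is genuinely uniformly bounded rather than growing, since otherwise $\gamma(\cdot)$ cannot be applied and the tube argument collapses. A secondary subtlety is that the extra measurement offset is neither zero-mean nor independent of the past, so the Gaussian-KL lemma must be invoked only after conditioning on the past and using a deterministic bound on that offset.
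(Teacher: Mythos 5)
Your proposal follows essentially the same route as the paper's proof: the same controller-replica construction of $u_t^s$ (your $\hat u_t^a$) exploiting IES of the controller and Lipschitzness of $h_c$, the same fake state $x_t^f=x_t^a-s_t$ recast as the nominal closed loop driven by a perturbed disturbance $\sigma_t,\sigma'_t$, the same IISS tube bound, the same monotonicity/chain-rule/Gaussian KL machinery with the determinism-after-conditioning step (the paper's Lemma~\ref{lemma:determin}), and the same IISU-plus-triangle-inequality argument for impact. The only caveats are cosmetic: your Case-2 variant feeds the controller replica $h(\hat x_t^a)+a_t$ instead of $y_t^{c,a}$ (which also drops the measurement noise, not just an $O(b'_\zeta)$ term), and---exactly as in the paper's own bound~\eqref{ineq:final_2}---the summed constant terms proportional to $b_\zeta$ make $b_\epsilon$ grow with $t$, so your claim of a \emph{finite} bound should be read as finite per time step rather than uniform in $t$.
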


\begin{proof}
As in Theorem~\ref{thm:PAt}, let us consider the trajectory of the system and controller states for $t\in \mathbb{Z}_{<0}$ is denoted by {$\mathbf{X}_{-T}:\mathbf{X}_{-1}$}.
Following the attack start at $t=0$, consider %two parallel trajectories happening at the same time 
the evolution of the system with and without attacks during $t\in \mathbb{Z}_{\geq 0}$. %One is the system and the controller states affected by the attack sequence 
For the system under attack, starting at $t=0$, the  trajectory {$\mathbf{X}_{0}^a:\mathbf{X}_{t}^a$} %, where the dynamic 
of the system and controller states satisfies
\begin{equation}\label{eq:attack_trajec_2}
\begin{split}
x_{t+1}^a=&f(x_t^a,u_t^a)+ w_t,\quad y_t^{c,a}=h(x_t^a)+v_t+a_t\\
\mathpzc{X}_{t}^a=&f_c(\mathpzc{X}^a_{t-1},y_t^{c,a}),\quad \quad \, u_t^a=h_c(\mathpzc{X}^a_{t},y_t^{c,a}).\\
\end{split}
\end{equation}

On the other hand, if the system were not under attack 
%The other one is the trajectory of the system and controller states as if no attack has happened 
during $t\in \mathbb{Z}_{\geq 0}$,  we denote the plant and controller state evolution by {$\mathbf{X}_{0}:\mathbf{X}_{t}$}. Hence, it is a continuation of the system trajectories {$\mathbf{X}_{-T}:\mathbf{X}_{-1}$} if hypothetically no data-injection attack occurs during $t\in \mathbb{Z}_{\geq 0}$. Since the system and measurement noises are independent of the state, we can assume that $w_t^a=w_t$ and $v_t^{a}=v_t$. In this case, the dynamics of the plant and controller state evolution satisfies
\begin{equation}\label{eq:free_trajec2}
\begin{split}
x_{t+1}=&f(x_t,u_t)+ w_t,\quad y_t^c=h(x_t)+v_t,\\
\mathpzc{X}_{t}=&f_c(\mathpzc{X}_{t-1},y_t^c),\quad \quad u_t=h_c(\mathpzc{X}_{t},y_t^c),\\
\end{split}
\end{equation}
captured in the compact form~\eqref{eq:closed-loop}, %Where we can write~\eqref{eq:free_trajec} in the compact form of 
% \begin{equation}\label{eq:closed_free}
% \mathbf{X}_{t+1}=F(\mathbf{X}_t,\mathbf{W}_t),
% \end{equation}
% as defined in~\eqref{eq:closed-loop} 
with 
$\mathbf{X}_{0}=\begin{bmatrix}x_{0}^T&\mathpzc{X}_{0}^T\end{bmatrix}^T$. 

Now, consider the sequence of attack vectors injected in the system from~\eqref{eq:attack_trajec}, which are constructed by the attacker using the following dynamical model
\begin{subequations}\label{eq:attack_seq_2}
\begin{equation}\label{eq:attack_dynamics}
s_{t+1}=f(\hat{x}_t^a,u_t^s) -   f(\hat{x}_t^a-s_t,u_t^s)
\end{equation}
\begin{equation}\label{eq:attack_vec}
a_t=h(\hat{x}_t^a-s_t)-h(\hat{x}_t^a),
\end{equation}
\begin{equation}\label{eq:control_rec_d}
\mathpzc{X}_{t}^s=f_c(\mathpzc{X}_{t-1}^s,y_t^{c,a}),
\end{equation}
\begin{equation}\label{eq:control_rec}
u_t^s=h_c(\mathpzc{X}_{t}^s,y_t^{c,a})
\end{equation}
\end{subequations}
for $t\in \mathbb{Z}_{\geq 0}$, and with some arbitrarily chosen nonzero initial value of $s_0$ and $\mathpzc{X}_{0}^s$. By injecting the above attack sequence into the sensor measurements, we can verify that
\begin{equation}\label{eq:out_attack}
y_t^{c,a}=h(x_t^a)+v_t+a_t=h(x_t^a)-h(\hat{x}_t^a)+h(\hat{x}_t^a-s_t)+v_t.
\end{equation}
After defining 
\begin{equation}\label{eq:x_f_def}
x_t^f\delequal x_t^a-s_t,
\end{equation}
and combining~\eqref{eq:attack_seq} with~\eqref{eq:attack_trajec}, the dynamics of $x_t^f$ and the controller, and the corresponding input and output satisfy
\begin{subequations}\label{eq:closed_loop_attack_2}
\begin{equation}\label{eq:error_a}
\begin{split}
x_{t+1}^f=&f(x_t^a,u_t^a)-f(\hat{x}_t^a,u_t^s)+f(\hat{x}_t^a-s_t,u_t^s)+w_t\\=&f(x_t^f,u_t^a)-f(x_t^f,u_t^a)+f(x_t^a,u_t^a)-f(\hat{x}_t^a,u_t^s)\\&+f(x_t^f+\zeta_t,u_t^s)+ w_t = f(x_t^f,u_t^a)+w_t+\sigma_t,
\end{split}
\end{equation}
\begin{equation}\label{eq:error_b}
\begin{split}
 y_t^{c,a}=&h(x_t^a)+v_t-h(\hat{x}_t^a)+h(\hat{x}_t^a-s_t)\\
 =&h(x_t^f)+h(x_t^f+\zeta_t)-h(x_t^f)+h(x_t^a)-h(\hat{x}_t^a)+v_t\\
=& h(x_t^f) + v_t +\sigma'_t     
\end{split}
\end{equation}
% \begin{equation}\label{eq:control_attack_d}
% \mathpzc{X}_{t}^a=f_c(\mathpzc{X}^a_{t-1},y_t^{c,a}), \end{equation}
% \begin{equation}
%  u_t^a=h_c(\mathpzc{X}^a_{t},y_t^{c,a})
% \end{equation}
\end{subequations}
with the initial condition $x_0^f=x_0^a-s_0$, where we also added and subtracted $f(x_t^f,u_t^a)$ and $h(x_t^f)$ from~\eqref{eq:error_a} and~\eqref{eq:error_b}, respectively. We also define $\sigma_t\delequal f(x_t^f+\zeta_t,u_t^s)-f(x_t^f,u_t^a)+f(x_t^a,u_t^a)-f(\hat{x}_t^a,u_t^s)$ and  $\sigma'_t \delequal h(x_t^f+\zeta_t)-h(x_t^f)+h(x_t^a)-h(\hat{x}_t^a)$. Since the controller dynamics~\eqref{eq:control} is IES and the exogenous input to~\eqref{eq:control_rec_d} and~\eqref{eq:attack_trajec_2} is the same (i.e., $y_t^{c,a}$), we have $\Vert \mathpzc{X}_{t}^a-\mathpzc{X}_{t}^s \Vert \leq \kappa_s\lambda_s^{t}\Vert \mathpzc{X}_{0}^a-\mathpzc{X}_{0}^s \Vert$ for $t\in \mathbb{Z}_{\geq 0}$. Since the function $h_c$ is Lipschitz, we get $\Vert u_t^a-u_t^s\Vert \leq L_{h_c} \kappa_s\lambda_s^{t}\Vert \mathpzc{X}_{0}^a-\mathpzc{X}_{0}^s \Vert$ for $t\in \mathbb{Z}_{\geq 0}$. 

Now, using the assumption that the functions $f$ and $h$ are Lipschitz, we can find a bound on the norm of $\sigma$ and $\sigma'$ as
\begin{subequations}
\begin{equation}\label{ineq:sigma}
\begin{split}
\Vert \sigma_t \Vert &\leq \Vert f(x_t^f+\zeta_t,u_t^s)-f(x_t^f,u_t^a)+f(x_t^a,u_t^a)-f(\hat{x}_t^a,u_t^s)\Vert\\ &\leq 2L_f(\Vert \zeta_t\Vert +L_{h_c} \kappa_s\lambda_s^{t}\Vert \mathpzc{X}_{0}^a-\mathpzc{X}_{0}^s \Vert)
\end{split}
\end{equation}
\begin{equation}
\Vert \sigma'_t \Vert \leq \Vert h(x_t^f+\zeta_t)-h(x_t^f)+h(x_t^a)-h(\hat{x}_t^a)\Vert
\leq 2L_h\Vert \zeta_t\Vert
\end{equation}
\end{subequations}

Now, if we define $\mathbf{X}^f_t=\begin{bmatrix}x_t^f\\\mathpzc{X}_{t}^a\end{bmatrix}$, it holds that
\begin{equation}\label{eq:closed_attack_2}
\mathbf{X}^f_{t+1}=F(\mathbf{X}^f_{t},\mathbf{W}'_t).
\end{equation}
with $\mathbf{X}^f_{0}=\begin{bmatrix}x_0^f\\\mathpzc{X}_{0}^a\end{bmatrix}$ and $\mathbf{W}'_t=\begin{bmatrix}w_{t}+\sigma_t\\v_t+\sigma'_t\end{bmatrix}$. Since $x_0^a=x_0$, it holds that $\mathbf{X}_{0}-\mathbf{X}^f_{0}=\begin{bmatrix}s_{0}\\\mathpzc{X}_{0}-\mathpzc{X}_{0}^a \end{bmatrix}$ and similar to Theorem~\ref{thm:PAt}, we have $\Vert \mathbf{X}_{0}-\mathbf{X}^f_{0} \Vert \leq (1+L_{f_c} L_h)\Vert s_0\Vert$. On the other hand, since the closed-loop system~\eqref{eq:closed_attack} is %incrementally exponentially stable, 
IISS, we~have 
\begin{equation}\label{eq:error_bound_2}
\begin{split}
\Vert \mathbf{X}&(t,\mathbf{X}_{0},\mathbf{W})-\mathbf{X}^f(t,\mathbf{X}^f_{0},\mathbf{W}')\Vert \leq \kappa \Vert \mathbf{X}_{0}-\mathbf{X}^f_{0}\Vert \lambda^{t}\\
&+\gamma(\Vert \mathbf{W}-\mathbf{W}'\Vert_{\infty}) \leq \kappa (1+L_{f_c} L_h)\Vert s_{0}\Vert \lambda^{t}\\
&+\gamma(\Vert \sigma_t\Vert_{\infty}+\Vert \sigma'_t\Vert_{\infty})\leq \kappa (1+L_{f_c} L_h)\Vert s_0\Vert \lambda^{t}\\
&+\gamma(2(L_f+L_h)\Vert \zeta_t\Vert+2L_fL_{h_c}\kappa_s\lambda_s^{t}\Vert \mathpzc{X}_{0}^a-\mathpzc{X}_{0}^s \Vert),
\end{split}
\end{equation}
where the term $\kappa (1+L_{f_c} L_h)\Vert s_{0}\Vert \lambda^{t}$ converges exponentially fast to zero. For simplicity of the notation, we denote the $\gamma$ function above simply by $\gamma(t)$ as it is a function of time. The term  $\gamma(t)$ has positive value and will approach to zero as both $\Vert \zeta_t\Vert$ and $\Vert \mathpzc{X}_{0}^a-\mathpzc{X}_{0}^s \Vert)$ approach to zero. Therefore, the trajectories of $\mathbf{X}$ (i.e., the system without attack)  and $\mathbf{X}^f$ are within a bounded distance from each other. We now use these results to show that the generated attack sequence satisfies the $\epsilon$-stealthiness condition. Due to the space limit, we only show here the stealthiness for state estimation method using Case 1. The proof for Case 2 can be found in Appendix. 

Before %proving the rest of the theorem, 
finishing the proof, we  introduce the lemma (proof in Appendix), and %by defining  
define $\mathbf{Z}_{t}=\begin{bmatrix}x_{t}\\y_t^{c}\end{bmatrix}$ and $\mathbf{Z}_{t}^f=\begin{bmatrix}x_t^f\\y_t^{c,a}\end{bmatrix}$. % in advance.

\begin{lemma}\label{lemma:determin}
Assume the sequence of $\mathbf{Z}_{-T}^{-1},\mathbf{Z}^f_{0}:\mathbf{Z}^f_{t}$
is given for any $t\geq 0$ and $s_0$, $\mathpzc{X}_{-T}$ and $\mathpzc{X}_0^s$ are chosen deterministically. Then, the signals $u_{t}^a$, $u_{t}^s$, $y_{t}^a$, $\hat{x}_{t+1}^a$ and $s_{t+1}$ are also uniquely (deterministically) obtained.
\end{lemma}
\vspace{+7pt}

Now, using the monotonicity property of the KL-divergence from Lemma~\ref{lemma:mon} it holds that
\begin{equation}\label{ineq:5_2}
\begin{split}
KL\big(&\mathbf{Q}(Y_{-T}^{-1},Y_{0}^a:Y_t^a)||\mathbf{P}(Y_{-T}:Y_t)\big)  \\
&\leq  KL\big(\mathbf{Q}(\mathbf{Z}_{-T}^{-1},\mathbf{Z}_{0}^f:\mathbf{Z}_{t}^f)||\mathbf{P}(\mathbf{Z}_{-T}:\mathbf{Z}_{t})\big),
\end{split}
\end{equation}
Then, we apply the chain-rule property of KL-divergence on the right-hand side of %the above inequality
\eqref{ineq:5_2} to obtain the following
\begin{equation}\label{ineq:7}
\begin{split}
&KL\big(\mathbf{Q}(\mathbf{Z}_{-T}^{-1},\mathbf{Z}_{0}^f:\mathbf{Z}_{t}^f)||\mathbf{P}(\mathbf{Z}_{-T}^{-1},\mathbf{Z}_{0}:\mathbf{Z}_{t})\big)\\
&= KL\big(\mathbf{Q}(\mathbf{Z}_{-T}^{-1})||\mathbf{P}(\mathbf{Z}_{-T}^{-1})\big)+\\
&\quad \quad \quad \quad KL\big(\mathbf{Q}(\mathbf{Z}_{0}^f:\mathbf{Z}_{t}^f|\mathbf{Z}_{-T}^{-1})||\mathbf{P}(\mathbf{Z}_{0}:\mathbf{Z}_{t}|\mathbf{Z}_{-T}^{-1})\big)\\
&=KL\big(\mathbf{Q}(\mathbf{Z}_{0}^f:\mathbf{Z}_{t}^f|\mathbf{Z}_{-T}^{-1})||\mathbf{P}(\mathbf{Z}_{0}:\mathbf{Z}_{t}|\mathbf{Z}_{-T}^{-1})\big);
\end{split}
\end{equation}
where we used the fact that the %Kullback–Leibler 
KL-divergence of two identical distributions (i.e., $\mathbf{Q}(\mathbf{Z}_{-T}^{-1})$ and $\mathbf{P}(\mathbf{Z}_{-T}^{-1})$ is zero.

Applying the chain-rule property of %Kullback–Leibler 
KL-divergence to~\eqref{ineq:2} results in
\begin{equation}\label{ineq:8}
\begin{split}
& KL\big(\mathbf{Q}(\mathbf{Z}_{0}^f:\mathbf{Z}_{t}^f|\mathbf{Z}_{-T}^{-1})|| \mathbf{P}(\mathbf{Z}_{0}:\mathbf{Z}_{t}|\mathbf{Z}_{-T}^{-1})\big) \\
&= \sum_{k=0}^{t}\Big\{KL\big(\mathbf{Q}(x_k^f|\mathbf{Z}_{-T}^{-1},\mathbf{Z}_{0}^f:\mathbf{Z}_{k-1}^f)||\mathbf{P}(x_{k}|\mathbf{Z}_{-T}:\mathbf{Z}_{k-1})\big)\\
&+ KL\big(\mathbf{Q}(y_k^{c,a}|x_k^f,\mathbf{Z}_{-T}^{-1},\mathbf{Z}_{0}^f:\mathbf{Z}_{k-1}^f)||\mathbf{P}(y_{k}|x_k,\mathbf{Z}_{-T}:\mathbf{Z}_{k-1})\big)\Big\}.
\end{split}
\end{equation}

Given $\mathbf{Z}_{-T}:\mathbf{Z}_{k-1}$, the distribution of $x_k$ is a Gaussian with mean $f(x_{k-1},u_{k-1})$ and covariance $\mathbf{R}_w$. On the other hand, using~\eqref{eq:error_a} and~\eqref{eq:x_f_def} we have 
\begin{equation}\label{eq:fake_state_traj}
\begin{split}
x_k^f=&f(x_{k-1}^f+s_{k-1},u_{k-1}^a)-f(\hat{x}_{k-1}^a,u_{k-1}^s)\\
&+f(\hat{x}_{k-1}^a-s_{k-1},u_{k-1}^s)+w_{k-1}.
\end{split}
\end{equation}

Using Lemma~\ref{lemma:determin}, given  $\mathbf{Z}_{-T}^{-1},\mathbf{Z}_{0}^f:\mathbf{Z}_{k-1}^f$ and the fact that $s_0$, $\mathpzc{X}_{-T}$ and $\mathpzc{X}_0^s$ are deterministic, $\hat{x}_{k-1}^a$, $s_{k-1}$, $u_{k-1}^s$ and $u_{k-1}^a$ can be deterministically obtained. 
Therefore, the distribution of $x_k^f$ given $\mathbf{Z}_{-T}^{-1},\mathbf{Z}_{0}^f:\mathbf{Z}_{k-1}^f$ is a Gaussian with mean $f(x_{k-1}^f+s_{k-1},u_{k-1}^a)-f(\hat{x}_{k-1}^a,u_{k-1}^s)+f(\hat{x}_{k-1}^a-s_{k-1},u_{k-1}^s)$ and covariance $\mathbf{R}_w$. Using~\eqref{eq:free_trajec2} and~\eqref{eq:fake_state_traj}, it holds that 
\begin{equation}\label{eq:mean_diff}
\begin{split}
x_k^f-x_k=&f(x_{k-1}^f+s_{k-1},u_{k-1}^a)-f(\hat{x}_{k-1}^a,u_{k-1}^s)\\
&+f(\hat{x}_{k-1}^a-s_{k-1},u_{k-1}^s)-f(x_{k-1},u_{k-1}),
\end{split}
\end{equation}
for all $0\leq k \leq t$. Now, according to Lemmas~\ref{lemma:chain}, ~\ref{lemma:Guassian}, ~\ref{lemma:maximum} and equation~\eqref{eq:mean_diff}, for all $0\leq k \leq t$ we have
\begin{equation}\label{ineq:9}
\begin{split}
&KL\big(\mathbf{Q}(x_k^f|\mathbf{Z}_{-T}:\mathbf{Z}_{k-1}^f)||\mathbf{P}(x_{k}|\mathbf{Z}_{-T}:\mathbf{Z}_{k-1})\big) \\
&= \mathbb{E}_{\mathbf{Q}(x_k^f,\mathbf{Z}_{-T}^{-1},\mathbf{Z}_{0}^f:\mathbf{Z}_{k-1}^f)}\bigg\{(x_k-x_k^f)^T \mathbf{R}_w^{-1}  (x_k-x_k^f)\bigg\}\\
&\leq \lambda_{max}(\mathbf{R}_w^{-1})\Vert x_k-x_k^f \Vert^2.
% \bigg(\kappa \Vert s_0\Vert \lambda^{-t}+\gamma\Big(2(L_f+L_h)b_{\zeta}+2L_fL_{h_c}\kappa_s\Vert \mathpzc{X}_{0}^a-\mathpzc{X}_{0}^s \Vert\Big) \bigg)^2\\
% &\leq \lambda_{max}(\mathbf{Q}^{-1})\bigg(\kappa^2 \Vert s_0\Vert^2 \lambda^{-2t}+2\kappa \Vert s_0\Vert \lambda^{-t}\gamma+\gamma^2\bigg).
\end{split}
\end{equation}
Given $x_k$, the distribution of $y_k$ is a Gaussian with mean $h(x_k)$ and covariance $\mathbf{R}_v$. On the other hand, from Lemma~\ref{lemma:determin}, we conclude that given $x_k^f,\mathbf{Z}_{-T}^{-1},\mathbf{Z}_{0}^f:\mathbf{Z}_{k-1}^f$ is Gaussian with mean $h(x_k^f)+\sigma'_k$ and covariance $\mathbf{R}_v$. Thus, using Lemmas~\ref{lemma:chain}, ~\ref{lemma:Guassian}, ~\ref{lemma:maximum} and~\eqref{eq:mean_diff}, for all $0\leq k \leq t$ we have
\begin{equation}\label{ineq:11}
\begin{split}
&KL\big(\mathbf{Q}(y_k^{c,a}|x_k^f,\mathbf{Z}_{-T}^{-1},\mathbf{Z}_{0}^f:\mathbf{Z}_{k-1}^f)||\mathbf{P}(y_{k}|x_k)\big)\\
&\quad =\mathbb{E}_{\mathbf{Q}(x_k^f,\mathbf{Z}_{-T}^{-1},\mathbf{Z}_{0}^f:\mathbf{Z}_{k-1}^f)}\Big\{\big(h(x_k^f)+\sigma'_k-h(x_k)\big)^T\\&\quad \quad \times  \mathbf{R}_v^{-1}  \big(h(x_k^f)+\sigma'_t-h(x_k)\big)\Big\} \leq  \lambda_{max}(\mathbf{R}_v^{-1}) \\
&\quad \quad \times \Big(L_h^2\Vert x_k-x_k^f \Vert^2+2L_h^2b_{\zeta}\Vert x_k-x_k^f \Vert+L_h^2b_{\zeta}^2\Big).
% \leq L_h^2  \lambda_{max}(\mathbf{R}^{-1}) \lambda^{-2t}.
\end{split}
\end{equation}

Combining~\eqref{ineq:5_2}-\eqref{ineq:11} results in
\begin{equation*}%\label{ineq:final_1}
\begin{split}
KL&\big(\mathbf{Q}(Y_{-T}^{-1},Y_{0}^a:Y_t^a)||\mathbf{P}(Y_{-T}:Y_t)\big) \\
\leq &  \sum_{k=0}^{t} \Big\{\lambda_{max}(\mathbf{R}_w^{-1})\Vert x_k-x_k^f \Vert^2+\lambda_{max}(\mathbf{R}_v^{-1}) \\
& \, \times \Big(L_h^2\Vert x_k-x_k^f \Vert^2+2L_h^2b_{\zeta}\Vert x_k-x_k^f \Vert+L_h^2b_{\zeta}^2\Big)\Big\}. 
\end{split}
\end{equation*}
It is straightforward to verify that
\begin{subequations}%\label{eq:attack_seq_3}
\begin{equation*}
\begin{split}
&\sum_{k=0}^{t} \Vert x_k-x_k^f \Vert^2 \leq \sum_{k=0}^{t} \big(\kappa (1+L_{f_c} L_h)\Vert s_0\Vert \lambda^{k} +\gamma(k)\big)^2\\
&=\sum_{k=0}^{t}\Big(\kappa^2 (1+L_{f_c} L_h)^2 \Vert s_0\Vert^2 \lambda^{2k} \\
&\qquad \qquad+2 \kappa (1+L_{f_c} L_h) \Vert s_0\Vert \lambda^{k} \gamma(k) + \gamma^2(k)\Big)\\
&\quad \leq \frac{\kappa^2 (1+L_{f_c} L_h)^2 \Vert s_0\Vert^2 }{1-\lambda^2}+2 \frac{\kappa (1+L_{f_c} L_h) \Vert s_0\Vert \gamma(0) }{1-\lambda} \\
&\qquad \qquad + \sum_{k=0}^{t} \gamma^2(k);
\end{split}
\end{equation*}

\begin{equation*}
\begin{split}
\sum_{k=0}^{t} \Vert x_k-x_k^f \Vert \leq & \sum_{k=0}^{t} \big(\kappa (1+L_{f_c} L_h)  \Vert s_0\Vert \lambda^{k} +\gamma(k)\big)\\
\leq &\frac{\kappa (1+L_{f_c} L_h) \Vert s_0\Vert }{1-\lambda}
+  \sum_{k=0}^{t} \gamma(k).
\end{split}
\end{equation*}
\end{subequations}
Therefore, we get

\begin{equation}\label{ineq:final_2}
\begin{split}
K&L\big(\mathbf{Q}(Y_{-T}^{-1},Y_{0}^a:Y_t^a)||\mathbf{P}(Y_{-T}:Y_t)\big) \leq\\
 &  \frac{\kappa^2 (1+L_{f_c} L_h)^2\Vert s_0\Vert^2 }{1-\lambda^2} \big(\lambda_{max}(\mathbf{R}_w^{-1}) + L_h^2\lambda_{max}(\mathbf{R}_v^{-1})\big)+ \\
& \frac{\kappa(1+L_{f_c} L_h) \Vert s_0\Vert }{1-\lambda} \big(2\gamma(0)\lambda_{max}(\mathbf{R}_w^{-1}) + L_h^2\lambda_{max}(\mathbf{R}_v^{-1})(2\gamma(0)+b_{\zeta})\big) \\
&+ \sum_{k=0}^{t} \gamma^2(k) \big(\lambda_{max}(\mathbf{R}_w^{-1}) + L_h^2\lambda_{max}(\mathbf{R}_v^{-1})\big)\\
&+\sum_{k=0}^{t} \gamma(k) \big(2L_h^2b_{\zeta}\lambda_{max}(\mathbf{R}_v^{-1})\big)+\sum_{k=0}^{t} b_{\zeta}^2L_h^2\lambda_{max}(\mathbf{R}_v^{-1}).
\end{split}
\end{equation}
Denoting the right-hand side of the above inequality by $b_{\epsilon}$ and using Theorem~\ref{thm:stealthy}, the system~$\Sigma_{\text{\rom{2}}}$ will be $\epsilon$-stealthy attackable with $\epsilon=\sqrt{1-e^{-b_{\epsilon}}}$.

Now, we show that the attack~\eqref{eq:attack_dynamics} is also $\alpha$-impactful. Using~\eqref{eq:attack_trajec_2} and~\eqref{eq:error_a} we have
\begin{equation}
\begin{split}
x_{t+1}^a&=f(x_t^a,u_t^a)+ w_t=f_u(x_t^a,U_t^a)\\
x_{t+1}^f&= f(x_t^f,u_t^a)+w_t+\sigma_t=f_u(x_t^f,{U'}_t^a)
\end{split}
\end{equation}
with $U_t^a=\begin{bmatrix}{u_t^a}^T&w_t^T\end{bmatrix}^T$and ${U'}_t^a=\begin{bmatrix}{u_t^a}^T&(w_t+\sigma_t)^T\end{bmatrix}^T$, for $t\in \mathbb{Z}_{\geq 0}$. Using the assumption that the system is IISU, there exists a nonzero $s_0$ such that $\Vert x^a(t,x_0^a,U^a)-x^f(t,x_0^a-s_0,{U'}^a)\Vert \geq M$ holds for any $M>0$ in $t\geq t'$ for some $t'$ given $\Vert {U}_t^a- {U'}_t^a \Vert \leq \Vert \sigma_t\Vert \leq 2L_f(b_{\zeta} +L_{h_c} \kappa_s\lambda_s^{t}\Vert \mathpzc{X}_{0}^a-\mathpzc{X}_{0}^s \Vert)$ for all $t\in \mathbb{Z}_{\geq 0}$. Following the same procedure as in Theorem~\ref{thm:PAt}, we can show that it holds that 
\begin{equation*}
\Vert x^a(t,x_0^a,U^a)-x(t,x_0-s_0,U)\Vert
\geq  M-\kappa (1+L_{f_c} L_h) \Vert s_0\Vert-\gamma(0), 
\end{equation*}
where by choosing $M\geq \alpha+\kappa (1+L_{f_c} L_h)\Vert s_0\Vert+\gamma(0)$, the attack will be $\alpha$-impactful. Therefore, the system is ($\epsilon$,$\alpha$)-attackable. 
\end{proof}
\vspace{4pt}
Unlike Theorem~\ref{thm:PAt} where the `fake' state converges exponentially fast to the attack-free state, under the conditions~of Theorem~\ref{thm:PA_2}, the `fake' state will be within a  bounded distance from the attack-free states. Intuitively, depending on the magnitude of deviation $\gamma$, different levels of stealthiness guarantees will be obtained, where for smaller values of $b_{\zeta}$ in Case 1 (or $b'_{\zeta}$ in Case 2), $\lambda_s$ and $\Vert \mathpzc{X}_0^s-\mathpzc{X}_0^a \Vert$, the deviation is smaller and thus, the attack will have stronger stealthier guarantees. The IES condition in Theorem~\ref{thm:PA_2} ensures the estimated input control $u_t^s$ converges to actual input control $u_t^a$ exponentially fast. 

The stealthiness can also be verified directly with the right-hand side of the inequalities~\eqref{ineq:final_2} and~\eqref{ineq:final_2_2} in Theorem~\ref{thm:PA_2}, respectively. These terms indicate $b_{\epsilon}$ for $\epsilon=\sqrt{1-e^{-b_{\epsilon}}}$ for each of Cases 1 and 2. As long as the attacker's estimate of states is more accurate (i.e., $b_{\zeta}$ or $b'_{\zeta}$ is smaller), then $\epsilon$ is smaller. However, there is no guarantee that $\epsilon$ approaches to zero when the estimation error is large. On the other hand, having the initial condition $\mathpzc{X}_0^s$ close to $\mathpzc{X}_0^a$ would help the attack be stealthier as the function $\gamma(k)$ would be smaller. Ideally, when $s_0$ can be chosen arbitrarily close to zero and both $b_{\zeta}$ (or $b'_{\zeta}$ ), $\Vert \mathpzc{X}_0^s-\mathpzc{X}_0^a \Vert$ approach to zero, then $\epsilon$ will be close to zero.    

We can also observe that having larger noise covariance $\mathbf{R}_w$ and $\mathbf{R}_v$ would help the attacker to have a stealthier attack as $\lambda_{max}(\mathbf{R}_w^{-1})$ and $\lambda_{max}(\mathbf{R}_v^{-1})$ would be smaller. However, if the attacker only relies on the system's sensor measurements for state estimation $\hat{x}_t^a$ (i.e., Case 1), having larger levels of noise would also cause larger $b_{\zeta}$ that has negative impact on attack stealthiness. To avoid this problem, the attacker might use the side information (Case 2) to have smaller estimation error independent of system's noise profile.

\subsection{Vulnerability of Nonlinear, Input Affine Systems} \label{sec:affine}

The model~\eqref{eq:plant} is very general and as a result it requires the attacker to have knowledge about the states and input control (system~$\Sigma_{\text{\rom{1}}}$) or their estimates (system~$\Sigma_{\text{\rom{2}}}$) to design the stealthy attacks. However, as we show in this section,  `simpler' dynamical models result in some relaxations on the required knowledge for the attacker to design stealthy attacks. 

For example, consider the plants that can be modeled as\footnote{With some abuse of the notation we again use $f$ as state transition function.}
\begin{equation}\label{eq:input_affine}
x_{t+1}=f(x_t)+Bu_t+w_t. 
\end{equation}
where $x$, $u$ and $w$ are defined as before and $B\in \mathbb{R}^{m\times n}$. Such formulation can include systems where the impact of the control input on states is weighted by a constant matrix or systems with format of $x_{t+1}=f(x_t)+g(x_t)u_t+w_t$ where the function $g(x)$ has very small Lipschitz constant. 

Let consider the system~$\Sigma_{\text{\rom{1}}}$ with  plant model~\eqref{eq:input_affine}. Therefore, when the conditions of Theorem~\ref{thm:PAt} holds such system will be ($\epsilon,\alpha$) attackable. However, generating the attack sequence using~\eqref{eq:attack_seq} for such system will be as
\begin{equation}
\begin{split}
s_{t+1}&=f(x_t^a)+Bu_t^a-f(x_t^a-s_t)-Bu_t^a \\ &=f(x_t^a)-f(x_t^a-s_t)\\
a_t&=h(x_t^a-s_t)-h(x_t^a),
\end{split}
\end{equation}
which means the attacker does not need to have access to input control during the attack ($u_t^a$) and such requirement in attack model~${\text{\rom{1}}}$ can be relaxed. Moreover, the attacker does not need to have knowledge about the matrix $B$. Similarly, for system~$\Sigma_{\text{\rom{2}}}$ with input affine plant model~\eqref{eq:input_affine}, the attack sequence generation~\eqref{eq:attack_dynamics} in Theorem~\ref{thm:PA_2} will also be relaxed~as 
\begin{equation}
\begin{split}
s_{t+1}&=f(\hat{x}_t^a)+Bu_t^s-f(\hat{x}_t^a-s_t)-Bu_t^s \\ &=f(\hat{x}_t^a)-f(\hat{x}_t^a-s_t)\\
a_t&=h(\hat{x}_t^a-s_t)-h(\hat{x}_t^a),
\end{split}
\end{equation}
Therefore, the attacker does not need to have access to the estimate of input ($u_t^s$), which as a result relaxes the assumption on having the knowledge about the controller dynamics in attack model~${\text{\rom{2}}}$ and the matrix $B$. Moreover, since the estimated input control is not needed, the IES assumption on the controller dynamics~\eqref{eq:control} can be removed from Theorem~\ref{thm:PA_2}. 

On the other hand, for such system the upper bound on $\Vert \sigma_t\Vert$ (\eqref{ineq:sigma} in Theorem~\ref{thm:PA_2}) will be smaller as we get $\Vert \sigma_t\Vert\leq 2L_f \Vert \zeta_t\Vert$. This will help the attacker to have smaller bound on $\epsilon$ and as a result have stealthier attack.

% %\subsection{Perfect Attackability for $\Sigma(A,C,\delta_w,\mathcal{K})$}

\subsection{Vulnerability Analysis of LTI Systems} % to Stealthy Attacks}

We now derive sufficient conditions for %the existence of 
($\epsilon,\alpha$)-successful attacks on LTI systems, with %. Specifically, the LTI %equivalent 
%formulation of
\eqref{eq:plant} and~\eqref{eq:control} 
%is %as follows
 simplified as
\begin{equation}\label{eq:lti}
\begin{split}
{x}_{t+1} &= Ax_t+Bu_t+w_t,\quad y_t = Cx_t+v_t,\\
\mathpzc{X}_{t} &= A_c\mathpzc{X}_{t-1}+B_cy_t^c,\quad u_t  = C_c\mathpzc{X}_{t}.
\end{split}    
\end{equation}
LTI systems with any linear controller (e.g., LQG controllers) can be captured in the above form. % is a special case of above formulation.\todo{I would remove this part -- am I missing something?} 
The following lemma provides the conditions for IES and IU for the above %equivalent 
LTI~system.

\begin{lemma}\label{lemma:LTI}
Consider %a special case of~\eqref{eq:prilim} as the 
an LTI dynamical system in the form of $x_{t+1}=Ax_t+Bd_t$. The system is IES if and only if all eigenvalues of the matrix $A$ are inside the unit circle. The system is %incrementally unstable 
IU if and only if $A$ has an unstable eigenvalue.
% \end{itemize}
\end{lemma}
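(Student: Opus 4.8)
The plan is to exploit linearity to reduce both claims to elementary facts about the powers $A^t$. Since the solution of $x_{t+1}=Ax_t+Bd_t$ is $x(t,\xi,d)=A^t\xi+\sum_{k=0}^{t-1}A^{t-1-k}Bd_k$, two trajectories driven by the \emph{same} input satisfy $x(t,\xi_1,d)-x(t,\xi_2,d)=A^t(\xi_1-\xi_2)$ — the input terms cancel — so the common disturbance $d$ plays no role here. Hence the system is IES (Definition~\ref{def:IES}) iff there exist $\kappa>1$, $\lambda<1$ with $\Vert A^t\Vert\le\kappa\lambda^t$ for all $t$; and, since $x(t,\xi_1,d)-x(t,\xi_2,d)$ depends only on $\xi_1-\xi_2$ and $\xi_2$ may be taken arbitrarily close to $\xi_1$, the system is IU (eq.~\eqref{eq:IU}) iff for every $\delta>0$ there is a vector $e$ with $0<\Vert e\Vert\le\delta$ such that $\Vert A^te\Vert\to\infty$. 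I will write $\rho(A)$ for the spectral radius, i.e.\ the largest modulus of an eigenvalue.

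For the IES equivalence: if $\rho(A)<1$, fix $\lambda\in(\rho(A),1)$; by Gelfand's formula $\rho(A)=\lim_t\Vert A^t\Vert^{1/t}$ there is $N$ with $\Vert A^t\Vert\le\lambda^t$ for $t\ge N$, so $\kappa:=\max\{2,\ \max_{0\le t<N}\Vert A^t\Vert\lambda^{-t}\}$ gives $\Vert A^t\Vert\le\kappa\lambda^t$ for all $t\ge0$ (equivalently, use a vector norm in which $\Vert A\Vert<1$ and pass back to the Euclidean norm by equivalence of norms). Conversely, if $A$ has an eigenvalue $\lambda_0$ with $|\lambda_0|\ge1$ and complex eigenvector $v=v_R+iv_I\ne0$, then since $A$ is real $A^tv_R=\mathrm{Re}(\lambda_0^tv)$ and $A^tv_I=\mathrm{Im}(\lambda_0^tv)$, hence $\Vert A^tv_R\Vert^2+\Vert A^tv_I\Vert^2=|\lambda_0|^{2t}(\Vert v_R\Vert^2+\Vert v_I\Vert^2)\ge\Vert v_R\Vert^2+\Vert v_I\Vert^2>0$; so $A^t\not\to0$, which is incompatible with any bound $\Vert A^t\Vert\le\kappa\lambda^t$ with $\lambda<1$, and IES fails.

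For the IU equivalence: if $A$ has an eigenvalue $\lambda_0$ with $|\lambda_0|>1$ and eigenvector $v$, I take $e=\delta v_R$ for small $\delta>0$ ($e=\delta v$ if $\lambda_0\in\mathbb R$). When $\lambda_0$ is non-real one checks, using that $A$ is real, that $v_R,v_I$ are linearly independent, so $g(\phi):=\Vert\cos\phi\,v_R-\sin\phi\,v_I\Vert$ has a strictly positive minimum $c_0$ on $[0,2\pi)$; together with $A^tv_R=|\lambda_0|^t(\cos(t\theta)v_R-\sin(t\theta)v_I)$ (writing $\lambda_0=|\lambda_0|e^{i\theta}$) this gives $\Vert A^te\Vert\ge c_0\delta|\lambda_0|^t$, which is increasing in $t$ and diverges; in the real case $\Vert A^te\Vert=\delta\Vert v\Vert\,|\lambda_0|^t$ does the same. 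Thus for any $\xi_1$ and any $\delta$, the point $\xi_2:=\xi_1-e$ lies within $\delta$ of $\xi_1$ and makes $\Vert x(t,\xi_1,d)-x(t,\xi_2,d)\Vert$ exceed any prescribed $M$ for all $t$ past some $t'$, i.e.\ \eqref{eq:IU} holds. For the converse, if $A$ has no unstable eigenvalue then $\rho(A)<1$, so by the first part the system is IES and $\Vert x(t,\xi_1,d)-x(t,\xi_2,d)\Vert\le\kappa\Vert\xi_1-\xi_2\Vert\lambda^t\to0$, contradicting \eqref{eq:IU} for every $M>0$; hence it is not IU.

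The delicate point — the one I would be most careful about — is the boundary behaviour at $|\lambda_0|=1$: a pure rotation block is power-bounded and therefore neither IES nor IU, whereas a defective unit-circle eigenvalue (a Jordan block of size $\ge2$ with $|\lambda_0|=1$) makes $\Vert A^t\Vert$ grow polynomially and is therefore IU although no eigenvalue lies strictly outside the circle. Consequently \emph{$A$ has an unstable eigenvalue} must be read as \emph{$A$ is not power-bounded}, i.e.\ $\limsup_t\Vert A^t\Vert=\infty$; with that reading the converse above still goes through (if $\rho(A)\le1$ with all unit-circle eigenvalues semisimple then $A$ is power-bounded, so not IU), and in the defective case one picks $e$ inside the relevant Jordan chain to supply the divergent direction required for IU. Getting this equivalence stated cleanly, rather than the routine spectral-radius estimates, is where the argument needs attention.
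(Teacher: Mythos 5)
Your proof is correct and follows essentially the same route as the paper's: reduce to the difference dynamics $\Delta x_{t+1}=A\Delta x_t$ (the $Bd_t$ terms cancel under a common input) and read both properties off the powers $A^t$. You are, however, more careful than the paper on two points. First, for a non-real unstable eigenvalue the paper's witness $\Delta x_t=c\lambda_i^t q_i$ is not a real state vector; your passage to $v_R$, using the linear independence of $v_R,v_I$ and the positive minimum of $\Vert\cos\phi\,v_R-\sin\phi\,v_I\Vert$, is the needed repair. Second, the boundary case you flag is a genuine gap in the lemma as stated: the converse of the IU claim, which the paper dismisses as ``straightforward,'' fails when ``unstable eigenvalue'' is read as $|\lambda|>1$, since a defective unit-modulus eigenvalue (e.g., a $2\times 2$ Jordan block at $\lambda=1$) gives polynomial growth of $\Vert A^t e\Vert$ along a suitable direction and hence IU with no eigenvalue outside the unit circle; your reformulation (IU iff $A$ is not power-bounded) is the correct statement. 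The only slip in your write-up is the sentence ``if $A$ has no unstable eigenvalue then $\rho(A)<1$,'' which as written overlooks $\rho(A)=1$, but your closing paragraph already corrects it. None of this affects the downstream use of the lemma, since Corollary~\ref{cor:cor1} needs only the ``if'' directions.
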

\begin{proof}
Let consider two trajectories $x(t,\xi_1,d)$ and $x(t,\xi_2,d)$ with initial conditions $\zeta_1$ and $\zeta_2$, respectively, and the input sequence of $d=d_0:d_{t-1}$. Now, let define $\Delta x_{t}=x(t,\xi_1,d)-x(t,\xi_2,d)$, which has the dynamics $\Delta x_{t+1}= A\Delta x_{t}$ with initial condition $\Delta x_{0}=\xi_1-\xi_2$. Hence, $\Delta x_t$  converge exponentially to zero if and only if the matrix $A$ has all eigenvalues inside the unit circle.  Now, let assume the matrix $A$ is unstable and for any trajectory  $x(t,\xi_1,d)$ consider another trajectory $x(t,\xi_2,d)$ with   $\xi_2=\xi_1+cq_i$ where $c$ is a constant and $q_i$ is the unstable eigenvector associate with unstable eigenvalue $\lambda_i$ of the matrix $A$. Now, we get $\Delta x_{t}=A^t\Delta x_{0}=c\lambda_i^tq_i$. Since $\vert \lambda_i\vert>1$, for any $M>0$, there exists a time step $t'$ such that $\Vert \Delta x_{t'}\Vert >M$. { It is also straightforward to show the converse direction.}   
\end{proof}

This allows us to directly capture conditions for stealthy yet effective attacks on LTI systems. 
\begin{corollary}
\label{cor:cor1}
The LTI system~\eqref{eq:lti} is ($\epsilon,\alpha$)-attackable for arbitrarily large $\alpha$ if the matrix $A$ is unstable and the closed-loop control system is asymptotically stable. 
\end{corollary}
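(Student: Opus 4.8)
The plan is to derive Corollary~\ref{cor:cor1} as a specialization of Theorem~\ref{thm:PAt}, using Lemma~\ref{lemma:LTI} to translate the spectral hypotheses into the incremental-stability language of that theorem. First I would rewrite the LTI system~\eqref{eq:lti} in the closed-loop form~\eqref{eq:closed-loop}: stacking $\mathbf{X}_t = \begin{bmatrix}x_t^T & \mathpzc{X}_t^T\end{bmatrix}^T$, the closed-loop dynamics becomes linear, $\mathbf{X}_{t+1} = \mathbf{A}\mathbf{X}_t + \mathbf{B}\mathbf{W}_t$ for suitable matrices $\mathbf{A},\mathbf{B}$ built from $A,B,C,A_c,B_c,C_c$. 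Asymptotic stability of the closed-loop control system means, by definition, that $\mathbf{A}$ has all eigenvalues inside the unit circle, so Lemma~\ref{lemma:LTI} gives that the closed-loop system~\eqref{eq:closed-loop} is IES. Next, the open-loop input-to-state map~\eqref{eq:input-state} here is $x_{t+1} = A x_t + \begin{bmatrix}B & I\end{bmatrix}U_t$, which is again an LTI system of the form $x_{t+1}=Ax_t+Bd_t$; since $A$ is assumed unstable, Lemma~\ref{lemma:LTI} gives that~\eqref{eq:input-state} is IU. With both hypotheses of Theorem~\ref{thm:PAt} verified, that theorem immediately yields that the system is $(\epsilon,\alpha)$-attackable for arbitrarily small $\epsilon$ and arbitrarily large $\alpha$, which is the claim.

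A secondary point worth addressing explicitly is that Lemma~\ref{lemma:LTI} as stated concerns a bare system $x_{t+1}=Ax_t+Bd_t$, whereas the closed-loop system carries a specific input signal $\mathbf{W}_t$ (the stacked noises) and the open-loop map carries $U_t=\begin{bmatrix}u_t^T & w_t^T\end{bmatrix}^T$. Since the IES/IU definitions quantify over \emph{all} admissible input sequences and, for linear systems, the difference trajectory $\Delta x_t$ satisfies $\Delta x_{t+1}=A\Delta x_t$ independently of the common input (as in the proof of Lemma~\ref{lemma:LTI}), this causes no difficulty — I would just note that the input channel is irrelevant to the incremental behavior in the linear case. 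I would also remark that the attack sequence~\eqref{eq:attack_seq} specializes, via the linearity of $f(x)=Ax$ and $h(x)=Cx$, to $s_{t+1}=As_t$, $a_t = -Cs_t$, recovering the classical ``open-loop-unstable-mode'' false-data injection construction and making concrete the claim in the introduction that for LTI systems the attacker needs only $A$ and $C$ (not $B$, $A_c$, $B_c$, $C_c$, or a Kalman gain).

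I do not anticipate a genuine obstacle here, since the corollary is essentially a dictionary lookup once Theorem~\ref{thm:PAt} and Lemma~\ref{lemma:LTI} are in hand; the only mild subtlety is bookkeeping the augmented state/input matrices and confirming that ``asymptotically stable closed-loop control system'' is exactly the hypothesis ``$\mathbf{A}$ Schur'' needed to invoke Lemma~\ref{lemma:LTI} for IES — this is standard but should be stated so the reader sees why no extra conditions (e.g., on $C$, or detectability) are needed beyond instability of $A$ and stability of the closed loop.
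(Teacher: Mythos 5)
Your proposal is correct and follows exactly the paper's route: the paper proves Corollary~\ref{cor:cor1} in one line by combining Theorem~\ref{thm:PAt} with Lemma~\ref{lemma:LTI}, which is precisely the reduction you carry out (your version just makes explicit the bookkeeping of the augmented closed-loop matrix and the irrelevance of the input channel to the incremental dynamics in the linear case). No gaps; the additional remarks on the specialization of~\eqref{eq:attack_seq} to $s_{t+1}=As_t$, $a_t=-Cs_t$ match the paper's own discussion following the corollary.
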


\begin{proof}
The proof is directly obtained\textbf{} by combining Theorem~\ref{thm:PAt} and Lemma~\ref{lemma:LTI}. 
\end{proof}

Asymptotic stability of the closed-loop system is not a restrictive assumption as stability is commonly the weakest required performance guarantee for a control system. Matrix $A$ being unstable is a sufficient condition for satisfying ($\epsilon,\alpha$)-attackability when any set of sensors can be compromised. Note that the  ($\epsilon,\alpha$)-attackability condition for LTI systems with an optimal detector %has the same condition as 
complies with the results from~\cite{mo2010false,jovanov_tac19} where LQG controllers with residue based detectors (e.g., $\chi^2$ detectors) are considered.

In this case, the false-data injection attack sequence design method from~\eqref{eq:attack_seq} reduces into a simple dynamical model
\begin{equation}\label{eq:attack_lin}
\begin{split}
s_{t+1}&=Ax_t^a+Bu_t^a - (A(x_t^a-s_t)+Bu_t^a)=As_t \\
a_t&=C(x_t^a-s_t)-C(x_t^a)=-Cs_t,
\end{split}
\end{equation}
which only requires knowledge of the matrices $A$ and~$C$ {unlike the works~\cite{mo2010false,jovanov_tac19,kwon2014analysis,zhang2020false} that assume the attacker has access to the full plant model information as well as the controller and Kalman filter gain.} Similarly, the attack sequence generated using the attack model~${\text{\rom{2}}}$ in~\eqref{eq:attack_dynamics} for LTI systems would have the same form as above.
As a result, the attacker does not need to have access to states, input control (or their estimates) and therefore, both attack models fall into the same attack model where the attacker only needs to know the matrices $A$ and $C$. 

As discussed, $s_t$ is a measure of the distance between $x_t^a$ and $x_t$; thus,  for impactful attacks, $s_t$ in~\eqref{eq:attack_lin} needs to dynamically grow over time, and $s_0$ needs to be in the subspace of unstable eigenvectors of matrix $A$. For example, if $q_i$ is the eigenvector associated with unstable eigenvalue $\lambda_i$, for $s_0=cq_i$ we get $s_t=c\lambda_i^tq_i$, where $c$ can be chosen by the attacker. Using Theorem~\ref{thm:PAt} we have $\Vert x_t^a-x_t\Vert \geq c\lambda_i^tq_i -  (1+\Vert A_c\Vert  \Vert C\Vert )c\Vert q_i\Vert $ resulting in large deviation in states for arbitrarily large time steps. 

On the other hand, if the condition of Corollary~\ref{cor:cor1} holds and the attack is generated using~\eqref{eq:attack_lin}, then the attack is $\epsilon$-stealthy. From Theorem~\ref{thm:PAt}, we have $\epsilon=\sqrt{1-e^{-b_{\epsilon}}}$ with 
\begin{equation*}
b_{\epsilon}=\frac{c^2\Vert q_i\Vert^2}{1-\lambda_{c}^2} \big(\lambda_{max}(\mathbf{R}_w^{-1})+\Vert C\Vert^2 \lambda_{max}(\mathbf{R}_v^{-1})\big)\delequal b_{\epsilon},
\end{equation*}
{where we used $s_0=cq_i$, and $\lambda_c$ is the decay rate of the closed-loop system} and all parameters are defined as before. Since all parameters above are characterized by the system except $c$, the attacker can choose the scalar $c$ arbitrarily close to zero to have arbitrarily nonzero small $\epsilon$.

\begin{remark}
We initially assumed that $\mathcal{K}=\mathcal{S}$; i.e., the attacker can compromise all sensors. However, when the system is LTI, the minimum subset of compromised sensors can be obtained as
%\begin{equation}
$\min_{q_i\in \{q_1,...,q_f\}} \Vert \text{supp} (Cq_i)\Vert_0$, 
%\end{equation}
%
where $ \{q_1,...,q_f\}$ denotes the set of unstable eigenvectors of the matrix $A$, and $\text{supp}$ denotes the set of nonzero elements of the vector. 
\end{remark}  
\section{The Impact of Stealthy Attack} % on State Estimation}
\label{sec:estimation}

In this section, we evaluate the impact of the stealthy attack from~\eqref{eq:attack_seq} on state estimation. Here, as an example we consider EKF; however, the results can be extended to any other nonlinear Luenberger type observer.  
When the sensors are attack free, the state estimates of~\eqref{eq:plant} are updated as
\begin{equation}
\begin{split}
\hat{x}_{t|t-1} &= f(\hat{x}_{t-1},{u}_{t-1}),\\
\hat{x}_t &= \hat{x}_{t|t-1} +L(y_t^c-h(\hat{x}_{t|t-1})),
\end{split}
\end{equation}
where $L$ is the observer gain that is assumed in steady state. 
We assume that the control input $u_t$ is obtained by the state estimate $\hat{x}_t$ as $u_t=K(\hat{x}_t)$. Thus, the  estimation dynamics~is
\begin{equation*}
\hat{x}_t =f(\hat{x}_{t-1},K(\hat{x}_{t-1})) +L(y_t^c-h(f(\hat{x}_{t-1},K(\hat{x}_{t-1})))), 
\end{equation*}
which is a special form of~\eqref{eq:control}. Now, the dynamics of the closed-loop system can be captured~as
\begin{equation}\label{eq:closed-loop_2}
\mathbf{X}_{t+1}'=F'(\mathbf{X}_t',\mathbf{W}_t)
\end{equation}
where we define the full state of the closed-loop  system as $\mathbf{X}'\delequal \begin{bmatrix}
{x}_t^T & \hat{x}_t^T \end{bmatrix}^T$, and exogenous disturbances as $\mathbf{W}_t\delequal \begin{bmatrix}
w_t^T&{v}_{t+1}^T&v_t^T \end{bmatrix}^T$ similar defined as in~\eqref{eq:closed-loop}. In what follows, we provide the condition such that there exists a sequence of $(\epsilon,\alpha)$-attacks for which the estimate of states during the attack converges exponentially fast to attack free case. 

\begin{theorem}\label{thm:estim}
If the closed-loop system~\eqref{eq:closed-loop_2} with attack model~${\text{\rom{1}}}$ is %exponentially incrementally stable 
IES and the open loop system~\eqref{eq:input-state} is %incrementally unstable, 
IU,
then there exists a sequence of $(\epsilon,\alpha)$-attacks such that $$\Vert \hat{x}_t^f-\hat{x}_t \Vert \leq \eta \lambda^t $$ for some positive $\lambda<1$ and some $\eta>0$. 
% Also   $\hat{x}^a$ and $\hat{x}$ are the state estimation during the attack and attack free time.
\end{theorem}

\begin{proof}
In the proof of Theorem~\ref{thm:PAt}, we showed that if the closed-loop system~\eqref{eq:closed-loop_2} is %exponentially incrementally stable 
IES and the open-loop system~\eqref{eq:input-state} is %incrementally unstable, 
IU, then the system is ($\epsilon,\alpha$-attackable), i.e., %there exists a 
a sequence of attacks generated by~\eqref{eq:attack_seq}
% \begin{equation}\label{eq:attack_seq_21}
% \begin{split}
% s_{t+1}&=f(x_t^a,u_t^a) -   f(x_t^a-s_t,u_t^a) \\
% a_t&=h(x_t^a-s_t)-h(x_t^a),
% \end{split}
% \end{equation}
will cause $\alpha$ deviation in the states while remaining $\epsilon$-stealthy. Now, we will show the impact of such attack on the state estimation. 

Similar to Theorem~\ref{thm:PAt}, by defining $x_t^f\delequal x_t^a-s_t$ we get
\begin{equation}\label{eq:fake_state_es}
\begin{split}
x_{t+1}^f=&f(x_t^f,K(\hat{x}_t^f))+ w_t^a,\quad y_t^{c,a}=h(x_t^f)+v_t^a,\\
\hat{x}_t^f =&f(\hat{x}_{t-1}^f,K(\hat{x}_{t-1}^f)) +L(y_t^{c,a}-h(f(\hat{x}_{t-1},K(\hat{x}_{t-1}))))
\end{split}
\end{equation}

On the other hand, if the system were not under attack during $t\in \mathbb{Z}_{\geq 0}$, the plant and the state estimation evolution would satisfy
\begin{equation}\label{eq:free_trajec_es}
\begin{split}
x_{t+1}=&f(x_t,K(\hat{x}_t))+ w_t,\quad y_t^c=h(x_t)+v_t,\\
\hat{x}_t =&f(\hat{x}_{t-1},K(\hat{x}_{t-1})) +L(y_t^{c}-h(f(\hat{x}_{t-1},K(\hat{x}_{t-1}))))
\end{split}
\end{equation}
Since the system and measurement noises are independent of the state, we can assume that $w_t^a=w_t$ and $v_t^{a}=v_t$. By defining $\mathbf{X}'^f_t \delequal \begin{bmatrix}
({x}_t^f)^T & (\hat{x}_t^f)^T \end{bmatrix}^T$, the dynamics~\eqref{eq:fake_state_es} and~\eqref{eq:free_trajec_es} can be written as $\mathbf{X}_{t+1}'^f=F'(\mathbf{X}_t'^f,\mathbf{W}_t)$ and $\mathbf{X}_{t+1}'=F'(\mathbf{X}_t',\mathbf{W}_t)$. Since both these dynamics are subject to the same input and the system dynamics is %incrementally exponentially stable, 
IES, there exist $\kappa>1$ and
$\lambda<1$ such that
\begin{equation}
\begin{split}
\Vert \mathbf{X}_t'^f - \mathbf{X}_{t}' \Vert 
\leq \kappa \Vert \mathbf{X}_0'^f - \mathbf{X}_{0}' \Vert \lambda^t
\end{split}
\end{equation}
with $ \mathbf{X}_0'^f - \mathbf{X}_{0}' = \begin{bmatrix}
x_0^f-x_0\\\hat{x}_0^f-\hat{x}_0
\end{bmatrix} =\begin{bmatrix}
s_0\\\hat{x}_0^f-\hat{x}_0
\end{bmatrix}$; 
thus, we have
\begin{equation}
\begin{split}
\hat{x}_0^f-\hat{x}_0 = L(y^{c,a}_0-y_0^c)=L(h(x_0-s_0)-h(x_0))
\end{split}
\end{equation}
where we used the fact that attack starts at $t=0$; i.e., $\hat{x}^f_{-1}=\hat{x}_{-1}$ and $x_0^a=x_0$. Therefore, we get $\Vert \mathbf{X}_0'^f - \mathbf{X}_{0}'\Vert \leq (1+\Vert L\Vert L_h)\Vert s_0\Vert$. As a result we get 
\begin{equation}
\begin{split}
\Vert \hat{x}_t^f - \hat{x}_{t} \Vert\leq \Vert \mathbf{X}_t'^f - \mathbf{X}_{t}' \Vert 
\leq \kappa (1+\Vert L\Vert L_h)\Vert s_0 \Vert \lambda^t,
\end{split}
\end{equation}
and finally, for $\eta =\kappa (1+\Vert L\Vert L_h)\Vert s_0 \Vert$, it holds that $\Vert \hat{x}_t^f-\hat{x}_t \Vert \leq \eta \lambda^t $. 
\end{proof}

For LTI plants where a linear Kalman filter is used for state estimation and the LQR controller is used to control the plant, the closed-loop system satisfies
\begin{align}\label{eq:LQG}
% \begin{split}
{x}_{t+1} &= Ax_t+Bu_t+w_t,\quad y_t = Cx_t+v_t,\nonumber\\
\hat{x}_{t} &= A\hat{x}_{t-1}+Bu_{t-1}+L(y_t^c-C(A\hat{x}_{t-1}+Bu_{t-1})),\nonumber\\ u_t  &= K\hat{x}_{t};
% \end{split}
\end{align}
here, $K$ and $L$ are obtained by solving algebraic Riccati equations. %The above system of equations are commonly known as LQG controller system. 
When the system is observable and controllable, the obtained $K$ and $L$ are stabilizing the closed-loop system~\cite{burl1998linear}.%\footnote{Please refer to section 7.3.4 of ~\cite{burl1998linear} for more details}. 

\begin{corollary}
Consider the closed-loop system~\eqref{eq:LQG} with pairs $(A,B)$ and $(A,C)$ controllable and observable, respectively. If matrix $A$ is unstable,  then there exists a sequence of $(\epsilon,\alpha)$-attacks such that $\Vert \hat{x}_t^f-\hat{x}_t \Vert \leq \eta \lambda^t$ for some positive $\lambda<1$ and some $\eta>0$. 
% Also   $\hat{x}^a$ and $\hat{x}$ are the state estimation during the attack and attack free time.     
\end{corollary}

\begin{proof}
Since the system is controllable and observable, $K$ and $L$ in the LQG controller are stabilizing the closed-loop system~\eqref{eq:LQG}; thus, from Lemma~\ref{lemma:LTI} the closed-loop  system is %incrementally exponentially stable
IES. The rest of the proof follows the proof of Theorem~\ref{thm:estim}.
\end{proof}

Hence, for LTI systems with LQG controllers, the attack generated by~\eqref{eq:attack_lin} causes the state estimation during the attack to converge exponentially fast to attack-free state estimates. This result is important as it shows the changes in estimation of states are small due to the attack and the defender cannot detect the presence of attack by observing the state estimates.

\section{Simulation Results}
\label{sec:simulation}

We illustrate and evaluate our methodology for vulnerability analysis of nonlinear control systems %framework 
on two case studies,
cart-pole and unmanned aerial  vehicles~(UAVs). 

\subsection{Cart-pole}
We use the dynamics of the cart-pole system %can be written 
\cite{florian2007correct}
\begin{equation} \label{eq:state_card}
\begin{split}
\ddot{\theta}&=\frac{g\sin\theta+\cos\theta\big(\frac{-F-M_pl\dot{\theta}^2\sin\theta}{M_c+M_p}\big)}{l\big(\frac{4}{3}-\frac{M_p\cos^2\theta}{M_c+M_p}\big)}\\
\ddot{x}&=\frac{F+M_pl(\dot{\theta}^2\sin\theta-\ddot{\theta}\cos\theta)}{M_c+M_p};
\end{split}
\end{equation}
here, $\theta$ is the pendulum angle from vertical, $x$ is the cart
position along the track, and $F$ is the control torque applied to the pivot, $g=9.81$ is the acceleration due to gravity, $M_c=1~\textit{Kg}$ is the mass of the cart and $M_p=0.1~\textit{Kg}$ is the mass of the pole. The length of the pole is $2l=1~\textit{m}$. We assume that only $x$ and $\theta$ are directly measured and the system is equipped with an %Extended Kalman Filter 
EKF to estimate system states. 

The system and measurement noise are $\mathbf{R}_w=\sigma^2I$ and $\mathbf{R}_v=\sigma^2I$ where $I$ is identity matrix with suitable dimension. Moreover, 
a feedback full state controller that uses estimated states by the EKF was used to keep the pendulum inverted
around $\theta=0$ and $x=0$ equilibrium point. To detect the presence of attack, we consider two standard widely-used  IDs -- i.e., $\chi^2$ and  cumulative sum (CUSUM) detectors~\cite{umsonst2017security}. For each ID, the residue is obtained by comparing the current received sensor measurement and the expected measurement. We set the thresholds for both these detectors to $p^{FA}=0.002$. 

\begin{figure}
\centering
\subfloat[]{\label{fig:lti_attack_1}
\includegraphics[width=0.24\textwidth, height=3.4cm]{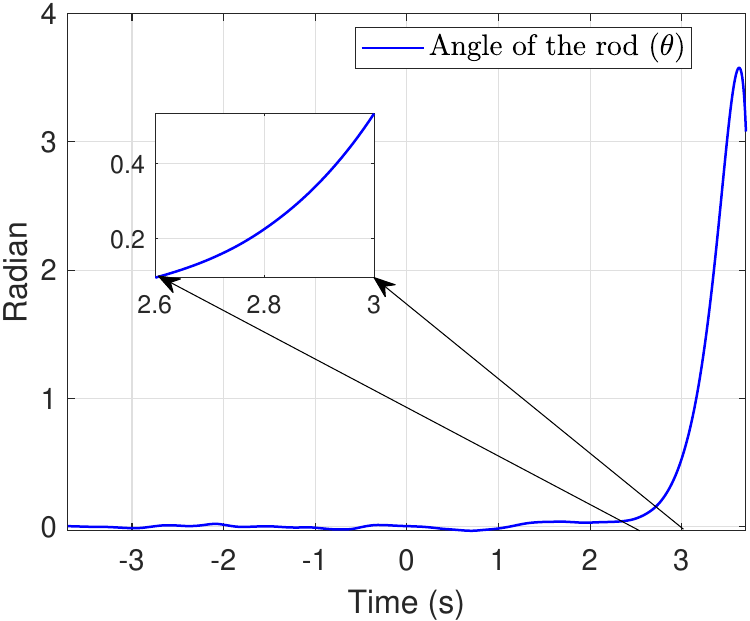}}
% \hfill
\subfloat[]{\label{fig:lti_attack_2} \includegraphics[width=0.24\textwidth, height=3.4cm]{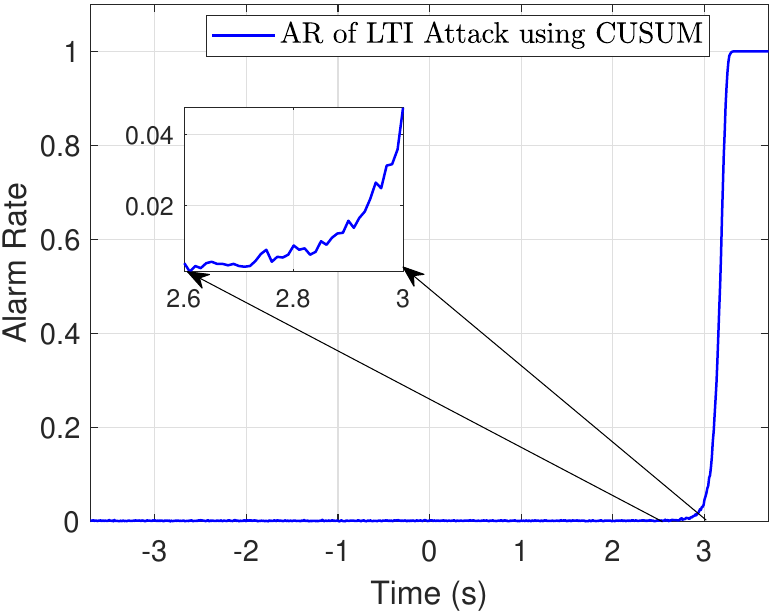}}
\vspace{-6pt}
\caption{(a) The angle of the pendulum rod over time while the LTI-based attack starts at time zero. (b) The alarm rate for LTI-based attack on Cart-pole system over 5000 experiments.}% when the attack starts at time zero. }
\label{fig:lti_attack}
\end{figure}

Fig.~\ref{fig:lti_attack} %evaluated the 
shows ineffectiveness of LTI attacks on Cart-pole system with actual nonlinear dynamics~\eqref{eq:state_card} and $\sigma^2=0.001$. The LTI attack was designed according to~\eqref{eq:attack_lin}, where $A$ was obtained by linearizing the nonlinear dynamics~\eqref{eq:state_card} around the equilibrium point at zero. %, and the matrix $C$ was obtained accordingly by the fact that $x$ and $\theta$ are directly measured. 
The attack starts at $t=0$ with initial condition $\Vert s_0\Vert =10^{-7}$. Fig.~\ref{fig:lti_attack_1} shows the evolution of rod's angle over time, while Fig.~\ref{fig:lti_attack_2} shows the average alarm rate for the CUSUM detector under the LTI-based attack. The zoomed area highlights that when the rod's angle starts deviating from zero, the the alarm rate of the ID increases over time. Specifically, for $0.4\, rad = 22.9^{\circ}$, $p^{TD}$ is almost ten times larger than $p^{FA}$. This demonstrates the limitation of the LTI-based attacks as the LTI approximation of a nonlinear model is only valid for the region around the equilibrium point and as the system moves away from equilibrium point, the approximation error significantly increases.

% This is compatible with our intuition about instability of inverted pendulum. 
%

For the attack model~${\text{\rom{1}}}$, we used the attack sequence~\eqref{eq:attack_seq}. Fig.~\ref{fig:CP_fig1} shows $\epsilon$ versus different norm values of the initial condition $s_0$ for fixed $\sigma^2=0.001$. By increasing the norm of the initial condition, %we can see that 
$\epsilon$ also increases making the attack likelier to be detected. However, choosing the initial condition close to zero (i.e., with a very small norm) has a side effect that it takes more time until the attack becomes effective as illustrated in Fig.~\ref{fig:CP_fig2}. Hence, there is a trade-off between attack stealthiness and time before it becomes effective. %attack time duration.  
% For attack model~${\text{\rom{1}}}$, the 
The impact of noise variance on attack stealthiness is also considered, in Fig.~\ref{fig:CP_fig3} where for a fixed $\Vert s_0\Vert = 10^{-8}$, we show that increasing the noise power $\sigma^2$ decreases the value of $\epsilon$. 

In Fig.~\ref{fig:CP_fig4} and~\ref{fig:CP_fig5}, we considered the impact of attacker's state estimation error, $b_{\zeta}$ in attack model~${\text{\rom{2}}}$, on the detection rate of CUSUM and $\chi^2$ detectors when $\Vert s_0\Vert = 10^{-8}$.  
Having inaccurate estimate with $b_{\zeta}=0.5$ can result in attack detection, while more accurate estimate ($b_{\zeta}=0.05$) improves attack stealthiness. %To show the impact of the attacker in the same experiment, 
Fig.~\ref{fig:CP_fig6} shows the evolution of rod's angle over time for different values of $b_{\zeta}$ --   %Before the attack starts at $t=0$, the pendulum pod is around the angle zero; however, after
after initiating the attack at $t=0$, % it can be observed that the value of 
the rod angle increases over time in both cases.

% \begin{figure}[!t]
% \centering
% {
%   \includegraphics[clip,width=0.48\columnwidth]{./Pictures/theta}\label{fig:theta}
% }
% {
%   \includegraphics[clip,width=0.48\columnwidth]{./Pictures/residue}\label{fig:residue}
% }

% \caption{(a) Angle's ($\theta$) absolute value over time for the under-attack system, when the attack starts at time zero; (b) The residue norm over time for the under-attack system, when the attack starts at time zero. }\label{fig:attack}
% \end{figure}

\begin{figure*}[!t]
\begin{subfigure}{0.312\textwidth}
\includegraphics[width=.9\linewidth, height=3.4cm]{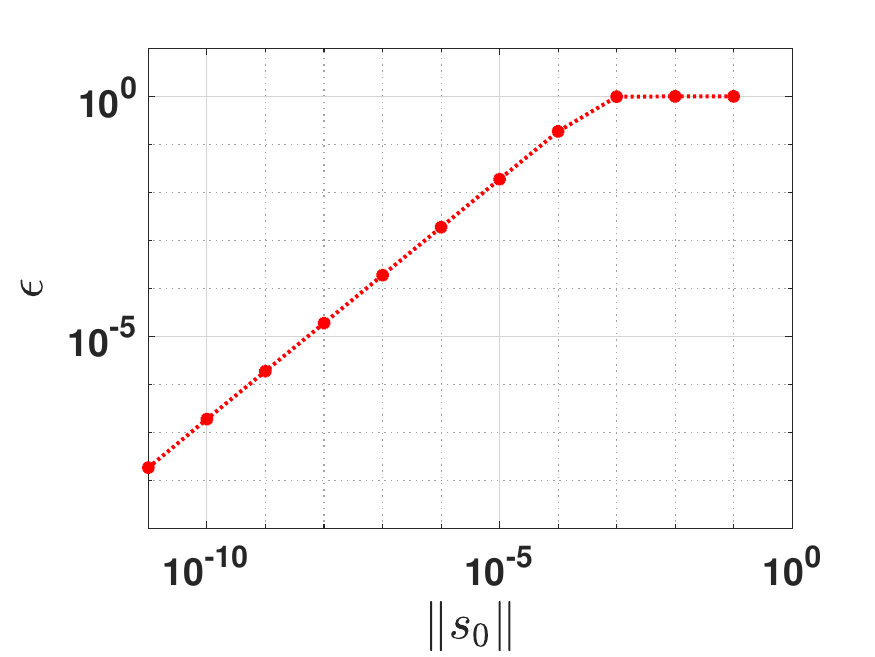}
\hspace{-6pt}
\caption{ }\label{fig:CP_fig1}
\end{subfigure}%\hspace{10pt}
\begin{subfigure}{0.312\textwidth}
\includegraphics[width=.9\linewidth, height=3.4cm]{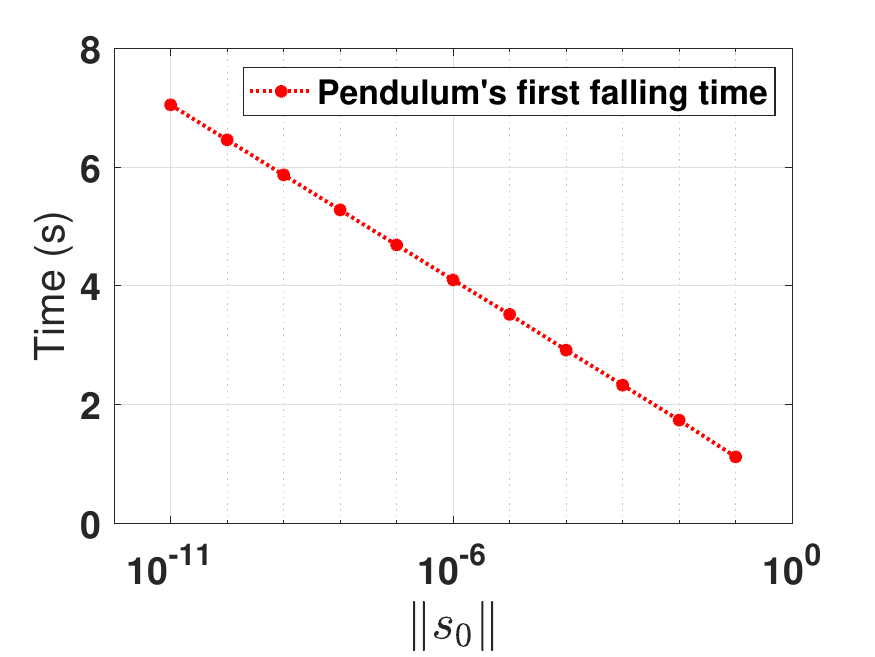}
\hspace{-6pt}
\caption{}\label{fig:CP_fig2}
\end{subfigure}%\hspace{10pt}
\begin{subfigure}{0.312\textwidth}
\includegraphics[width=.9\linewidth, height=3.4cm]{./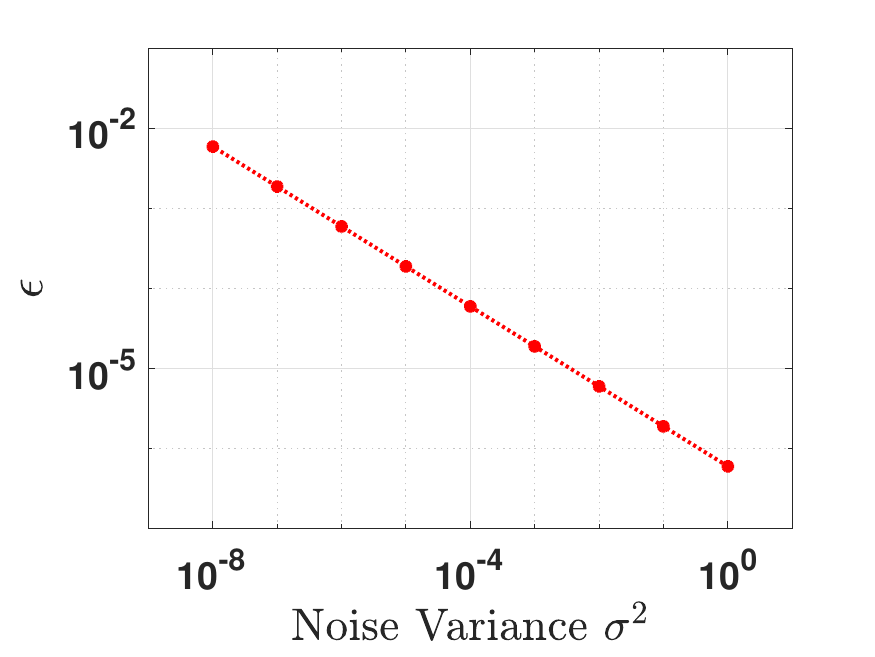}
\hspace{-6pt}
\caption{}\label{fig:CP_fig3}
\end{subfigure}
\begin{subfigure}{0.312\textwidth}
\includegraphics[width=.9\linewidth, height=3.4cm]{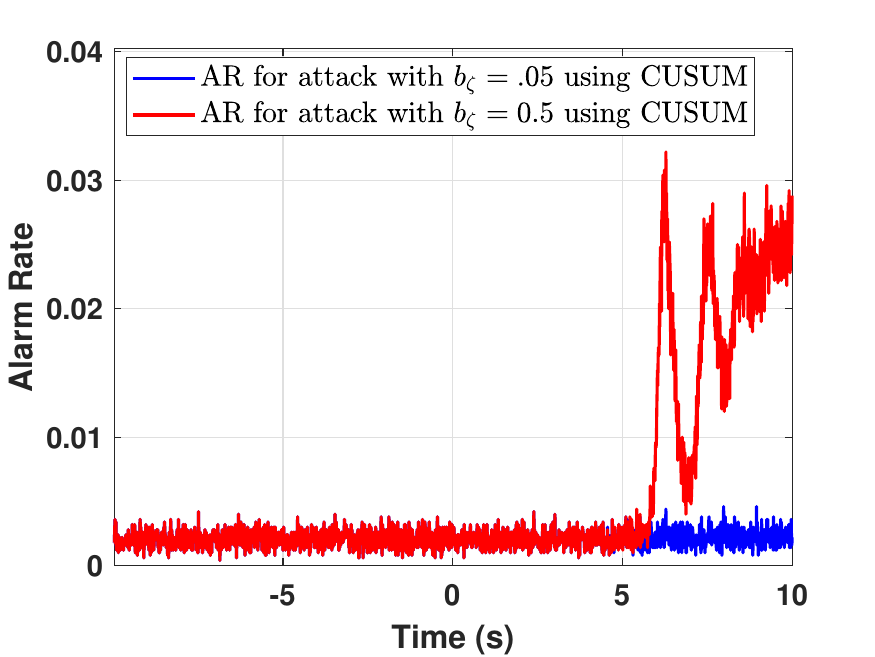}
\vspace{-6pt}
\caption{}\label{fig:CP_fig4}
\end{subfigure}%\hspace{10pt}
\begin{subfigure}{0.312\textwidth}
\includegraphics[width=.9\linewidth, height=3.4cm]{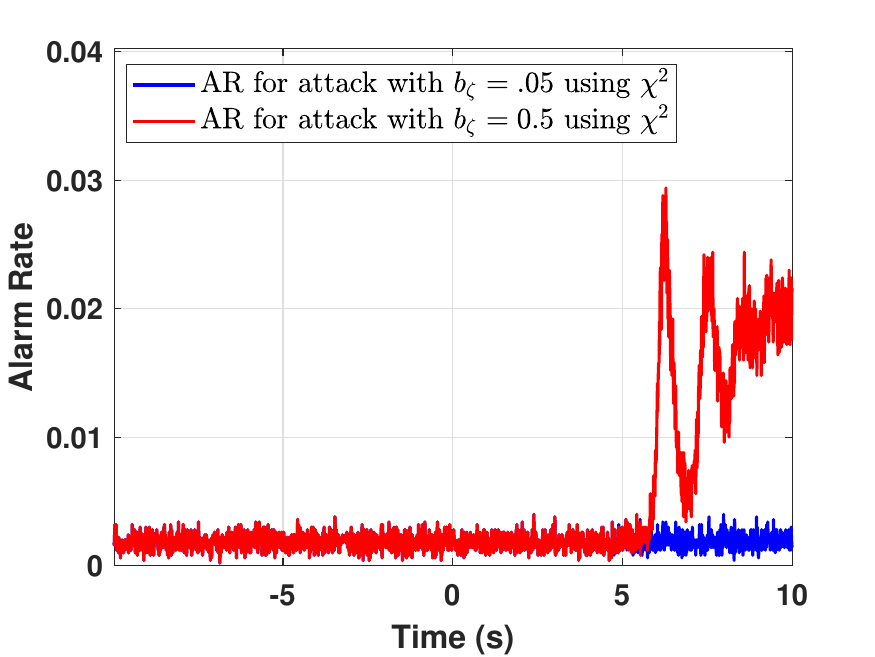}
\vspace{-6pt}
\caption{}\label{fig:CP_fig5}
\end{subfigure}%\hspace{10pt}
\begin{subfigure}{0.312\textwidth}
\includegraphics[width=.9\linewidth, height=3.4cm]{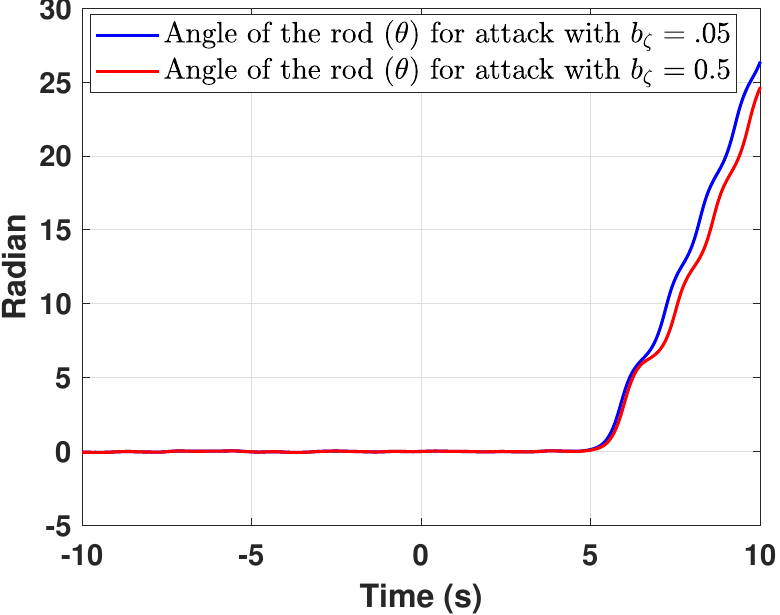}
\vspace{-6pt}
\caption{}\label{fig:CP_fig6}
\end{subfigure}
\vspace{-6pt}
\caption{(a,b)~ $\epsilon$ and falling time  of the pendulum rod versus $\Vert s_0\Vert$, for $\sigma^2=10^{-3}$. (c)~The impact of system and measurement noise power $\sigma^2$ on attack stealthiness $\epsilon$ for $\Vert s_0\Vert=10^{-8}$; (d,e)~The alarm rate for CUSUM and $\chi^2$ intrusion detectors before and after the attack (attack starts at time $t=0$) for different values of $b_{\zeta}$ over 5000 experiments for $\Vert s_0\Vert=10^{-8}$ and $\sigma^2=10^{-3}$; (f)~The trajectories of the pendulum rod angle versus time for different values of $b_{\zeta}$.}
\label{fig:CP_fig}
\vspace{-8pt}
\end{figure*}

\subsection{Unmanned Aerial Vehicles}

We also considered a quadrotor with complex highly nonlinear model from~\cite{bouabdallah2007full} that has 12 states $\begin{bmatrix} x ,\, y ,\, z ,\, \dot{x} , \, \dot{y},\, \dot{z} ,\, \phi ,\, \theta ,\,  \psi ,\,   \dot{\phi} ,\, \dot{\theta}, \, \dot{\psi}\end{bmatrix}^T$;  $x$, $y$ and $z$ represent the quadrotor position along the $X$, $Y$ and $Z$ axis, respectively, while $\dot{x}$, $\dot{y}$ and $\dot{z}$ are their velocity. 
 $\phi$, $\theta$ and $\psi$ are pitch, roll and yaw angles respectively, and $\dot{\phi}$, $\dot{\theta}$ and $\dot{\psi}$ represent their corresponding angular velocity. The system was 
discretized using Euler method with $T_s=0.01s$. 
The states  $\begin{bmatrix} x ,\, y ,\, z ,\, \phi ,\, \theta ,\, \psi,\dot{\phi} ,\, \dot{\theta}, \, \dot{\psi} \end{bmatrix}^T$ were measured %and were affected 
with zero-mean Gaussian noise with the covariance matrix~$\mathbf{R}_v=0.001 I$. 

We assumed standard disturbance on the input modeled by system noise with zero mean Gaussian with the covariance matrix $\mathbf{R}_w=0.001 I$. %Similar to Cart-pole case study, we assume the 
The system employs an EKF to estimate the states and a PID controller %is used 
to control the drone. %'s navigation. 
To detect the presence of any abnormal behaviour,  CUSUM and $\chi^2$ IDs were deployed with the threshold fixed to $p^{FA}=0.002$. 

We considered the position control task~\cite{bouabdallah2007full}, where 
the drone takes off from (0,0,0) in global coordinate and reaches to predefined point in the space $(0,0,10 m)$ and stays there (see the black line trajectory in Fig.~\ref{fig:drone_trajectory}). Once the drone reaches to that point, the attack starts (it should be noted that there is no limitation for the attacker on the starting time of the attack and our assumption here is just to illustrate the results better). 

%Similar to Cart-pole case, for 
For attack model~${\text{\rom{1}}}$ with generated attack sequence as~\eqref{eq:attack_seq}, Fig.~\ref{fig:drone_eps} shows the impact of $\Vert s_0\Vert$ on attack stealthiness. The results again %supports 
show that a larger norm of the initial condition $s_0$ results in an increased attack detection rate. %Now, let assume 
For the attack model~${\text{\rom{1}}}$ where the attacker generates attack sequence by~\eqref{eq:attack_seq_2} using case 1 and $\Vert s_0\Vert =10^{-5}$ from $t=0$, %Assuming that the attack starts at time zero,  
Fig.~\ref{fig:drone_eps} shows the alarm rate of CUSUM and $\chi^2$ IDs for 50 seconds. 

We also showed that by fixing $\Vert s_0\Vert =10^{-5}$,  and changing the nonzero elements of the $s_0$ the attacker can control the direction of the drone's deviation (see Fig.~\ref{fig:drone_trajectory}). % shows the impact of such placing nonzero elements on $s_0$.
Assuming that the attack starts at the red square, Fig.~\ref{fig:drone_trajectory1} shows that by placing a positive nonzero value on the element $s_0$ associated with pitch angle ($\phi$) causes the drone to deviate $20~m$ in negative $Y$-axis in $50~s$. Such negative $s_0$ results on the drone's deviation in the positive direction of $Y$ axis. It also shows that putting the nonzero element in roll angle can cause the deviation along with $X$ axis. However, if the attacker uses the combination of both, the drone deviates in both $X$ and $Y$ axis ( Fig.~\ref{fig:drone_trajectory2}).

\begin{figure*}[!t]
\begin{subfigure}{0.312\textwidth}
\includegraphics[width=.9\linewidth, height=3.4cm]{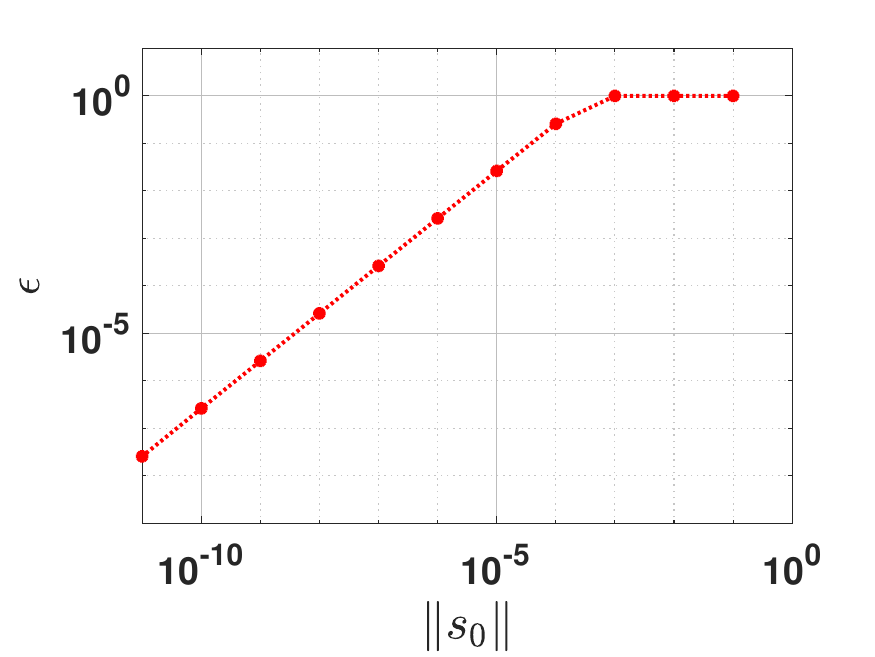}
\hspace{-6pt}
\caption{ }\label{fig:drone_eps}
\end{subfigure}%\hspace{10pt}
\begin{subfigure}{0.312\textwidth}
\includegraphics[width=.9\linewidth, height=3.4cm]{./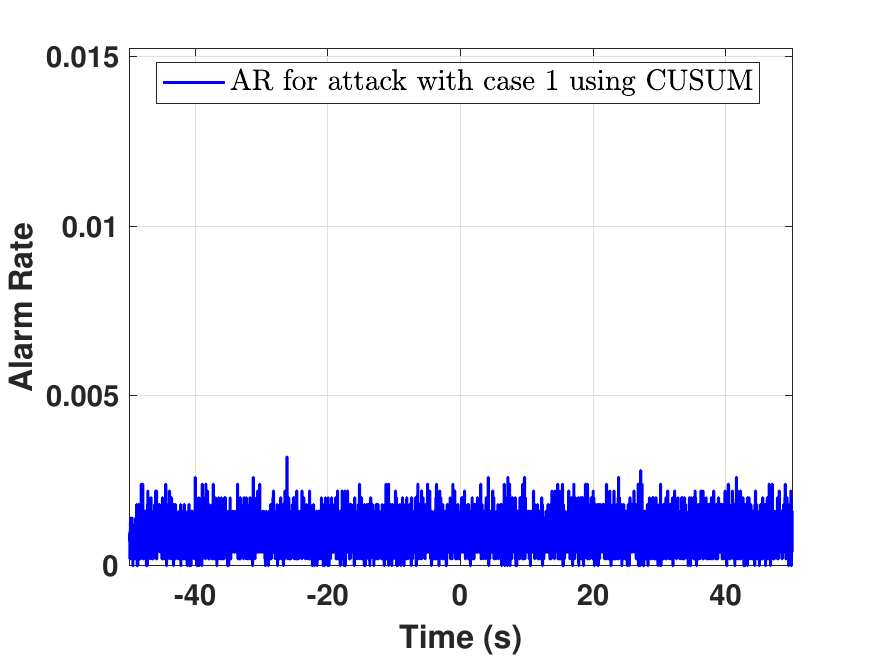}
\hspace{-6pt}
\caption{}\label{fig:drone_CUSUM}
\end{subfigure}%\hspace{10pt}
\begin{subfigure}{0.312\textwidth}
\includegraphics[width=.9\linewidth, height=3.4cm]{./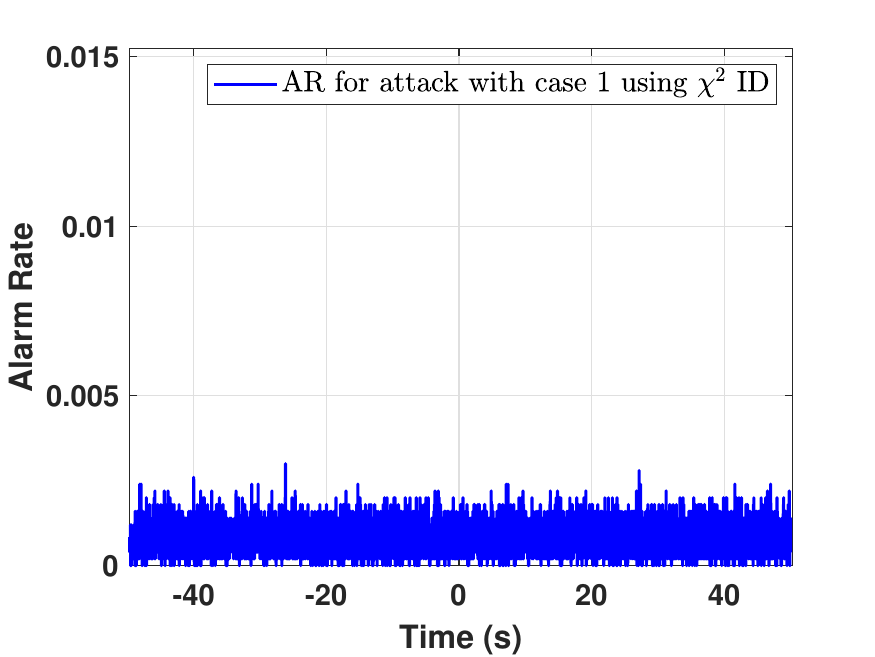}
\hspace{-6pt}
\caption{}\label{fig:drone_Chi}
\end{subfigure}
\vspace{-6pt}
\caption{(a)~ $\epsilon$ versus $\Vert s_0\Vert$, for $\sigma^2=10^{-3}$ in UAV case study. (b,c)~The alarm rate for CUSUM and $\chi^2$ intrusion detectors before and after the attack start time ($t=0$) for case 1 (the attacker uses system's sensors to estimate the states) over 5000 experiments for $\Vert s_0\Vert=10^{-5}$ and $\sigma^2=10^{-3}$.}
\label{fig:drone_stealthiness_fig}
\vspace{-8pt}
\end{figure*}

\begin{figure}
\centering
\subfloat[]{ \label{fig:drone_trajectory1}
\includegraphics[width=0.223\textwidth, height=3.4cm]{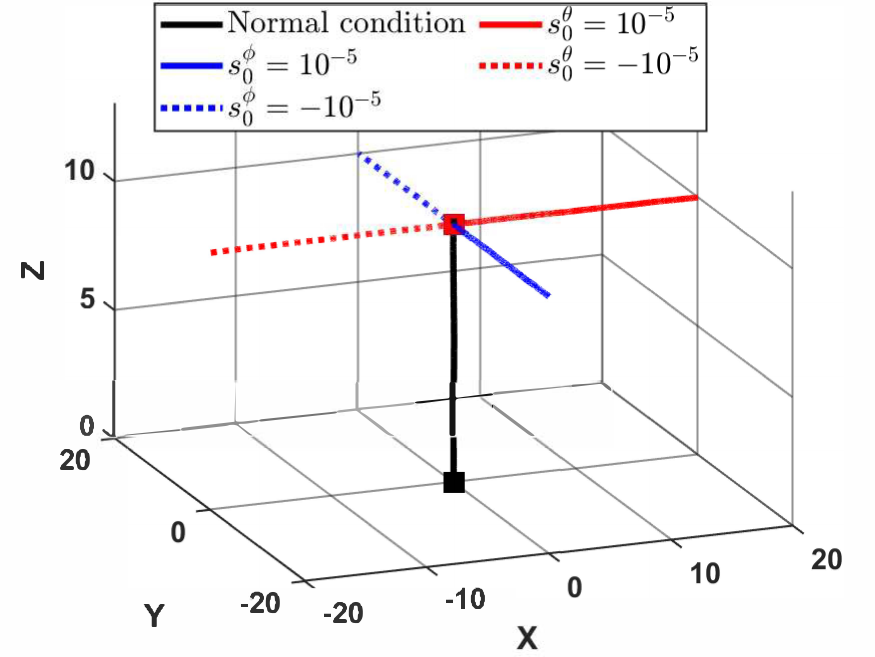}}
\hfill
\subfloat[]{\label{fig:drone_trajectory2} \includegraphics[width=0.223\textwidth, height=3.4cm]{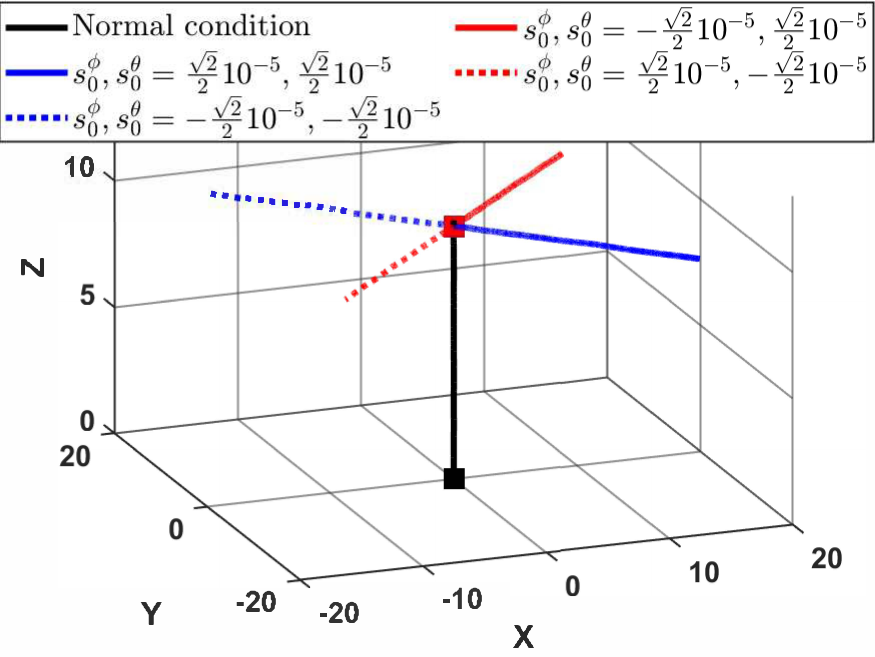}}
\caption{(a,b) The trajectory of the drone for different initial conditions on $s$ with the attack starting at red square. }\label{fig:drone_trajectory}
\end{figure}

\section{Conclusion}% and Future Work}
\label{sec:concl}
In this paper, we focused on vulnerability analysis for nonlinear control systems with Gaussian noise, when attacker can compromise sensor measurements from any subset of sensors. Notions of strict stealthiness and $\epsilon$-stealthiness were defined, and we   showed that these notions are independent of the deployed intrusion detector. 
Using the KL-divergence, %for each\todo{did we?} stealthiness notion, 
we presented conditions for the existence of stealthy yet effective attacks. Specifically, we defined the $(\epsilon,\alpha)$-successful attacks where the goal of the attacker is to be $\epsilon$-stealthy while causing deviation in the trajectory of states with respect to system’s own desired unattacked trajectory, determined by the parameter $\alpha$. 
Depending on the level of attacker’s knowledge about the plant model, controller,
and the system states, two different attack models are considered and for each attack model, we then derived a condition for which there exists a sequence of such $(\epsilon,\alpha)$-successful false-data injection  attacks. 
% In particular, When the
% attacker has complete knowledge about the system, we  showed that if the closed-loop system was incrementally exponentially stable and the open-loop system was incrementally unstable, then there existed a sequence of $(\epsilon,\alpha)$-successful attacks. 
We also provided the results for LTI systems, showing that they are compatible with the existing results for LTI systems and $\chi^2$-based detectors. Finally, we considered the impact of stealthy attacks
on state estimation, and showed that if the the closed-loop control system including the estimator is incrementally stable, then the state estimation in the presence of attack converges to the attack-free estimates.

% Our results assume that the attacker has knowledge of the state evolution function $f$, as well as access to the values of the actual system states and the control inputs during the attack. Future work will be directed toward deriving conditions when the attacker has limited knowledge about the states, control input and the function $f$. 

\bibliographystyle{IEEEtranMod}

\section*{Appendix}
\textbf{Proof of Lemma~\ref{lemma:determin}:}

\begin{proof}
According to~\eqref{eq:control_rec_d} and~\eqref{eq:control_rec}, $\mathpzc{X}_0^s$ and $y_0^{c,a}:y_{i}^{c,a}$ are sufficient to determine $u_{i}^s$ for any $i\geq 0$ as rewrite $u_{i}^s$~as%in the following form 
\begin{equation}
u_{i}^s=F_{u_{i}^s}(\mathpzc{X}_0^s, y_0^{c,a}:y_{i}^{c,a}),
\end{equation}
where $F_{u_{i}^s}$ is an deterministic mapping. Therefore, given $\mathbf{Z}_{-T}^{-1},\mathbf{Z}^f_{0}:\mathbf{Z}^f_{t}$ we can determine $u_{i}^s$ for any $0\leq i \leq t$. Similarly, using~\eqref{eq:attack_trajec_2} and~\eqref{eq:free_trajec}, $\mathpzc{X}_{-T}$, $y_{-T}^{c}:y_{-1}^{c}$ and $y_0^{c,a}:y_{i}^{c,a}$ are sufficient to determine $u_{i}^a$ for any $i\geq 0$. Alternatively, for some deterministic mapping $F_{u_{i}^a}$ we have
\begin{equation}
u_{i}^a=F_{u_{i}^a}(\mathpzc{X}_{-T},y_{-T}^{c}:y_{-1}^{c},y_0^{c,a}:y_{i}^{c,a}).
\end{equation}
This means if  $\mathbf{Z}_{-T}^{-1},\mathbf{Z}^f_{0}:\mathbf{Z}^f_{t}$ are given, then $u_{i}^a$ will also be directly obtained for any $0\leq i \leq t$. 

Using~\eqref{eq:state_es}, we have $\hat{x}^a_0=\mathcal{E}_0(y_{-\mathcal{T}}^{c}:y_{-1}^{c})$ which means that if we have $y_{-\mathcal{T}}^{c}:y_{-1}^{c}$, then we can determine $\hat{x}_0$. On the other hand, using~\eqref{eq:out_attack} we have 
\begin{equation}
y_t^a=y_t^{c,a}+h(\hat{x}_t^a)-h(\hat{x}_t^a-s_t).
\end{equation}
Thus, given $y_0^{c,a}$ and $\hat{x}_0^a$ and $s_0$, we can determine $y_0^{a}$. Since we already showed $y_{-\mathcal{T}}^{c}:y_{-1}^{c}$ determines $\hat{x}_0^a$, therefore, $y_0^{a}$ will be determined by $\mathbf{Z}_{-T}^{-1},\mathbf{Z}^f_{0}:\mathbf{Z}^f_{t}$ -- this sequence includes $y_{-\mathcal{T}}^{c}:y_{-1}^{c}$.  
Now, according to~\eqref{eq:attack_dynamics}, $s_1$ is determined by $s_0$, $\hat{x}_0$ and $u_0^s$, and since all these signal are determined by $\mathbf{Z}_{-T}^{-1},\mathbf{Z}^f_{0}:\mathbf{Z}^f_{t}$, therefore, $s_1$ is also determined. Now, since $y_0^a$ is determined by this sequence, using $\hat{x}^a_1=\mathcal{E}_1(y_{-\mathcal{T}}^{c}:y_{-1}^{c},y_0^a)$ we can determine $\hat{x}^a_1$. This process can continue by induction until time $t$ for which $y_t^a$, $s_{t+1}$ and $\hat{x}_{t+1}^a$ will be determined by  $\mathbf{Z}_{-T}^{-1},\mathbf{Z}^f_{0}:\mathbf{Z}^f_{t}$.
\end{proof}

\vspace{4pt}\noindent\textbf{Proof of Case 2 in Theorem~\ref{thm:PA_2}:}

% Before proving the case 2, let introduce 
We start by introducing the following lemma.

\begin{lemma}\label{lemma:determin_2}
Assume the sequence of $\mathbf{Z}_{-T}^{-1},\mathbf{Z}^f_{0}:\mathbf{Z}^f_{t},v'_{-\mathcal{T}}:v'_t$
are given for any $t\geq 0$ and $s_0$, $\mathpzc{X}_{-T}$ and $\mathpzc{X}_0^s$ are chosen deterministically. Then, the signals $u_{t}^a$, $u_{t}^s$, $\hat{x}_{t}^a$ and $s_{t+1}$ are also uniquely (deterministically) obtained.
\end{lemma}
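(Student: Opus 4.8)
The plan is to reproduce, essentially line for line, the inductive bookkeeping from the proof of Lemma~\ref{lemma:determin}, the one substantive change being that the state estimate $\hat{x}_t^a$ is now produced by the attacker's private sensors $y'$ through the filter $\mathcal{E}'$, so reconstructing it requires conditioning \emph{additionally} on the private sensor noise $v'_{-\mathcal{T}}:v'_t$. This is precisely why the hypothesis of this lemma carries the extra sequence $v'_{-\mathcal{T}}:v'_t$ that does not appear in Lemma~\ref{lemma:determin}.

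First I would dispatch the controller-side signals exactly as before: from the recursions~\eqref{eq:control_rec_d}--\eqref{eq:control_rec} started at the deterministically chosen $\mathpzc{X}_0^s$ one obtains a deterministic map $u_i^s=F_{u_i^s}(\mathpzc{X}_0^s,y_0^{c,a}:y_i^{c,a})$, and from the controller recursion in~\eqref{eq:free_trajec2} started at the deterministic $\mathpzc{X}_{-T}$ one obtains $u_i^a=F_{u_i^a}(\mathpzc{X}_{-T},y_{-T}^{c}:y_{-1}^{c},y_0^{c,a}:y_i^{c,a})$. Since $y_{-T}^{c}:y_{-1}^{c}$ and $y_0^{c,a}:y_t^{c,a}$ are components of the supplied sequence $\mathbf{Z}_{-T}^{-1},\mathbf{Z}^f_{0}:\mathbf{Z}^f_{t}$, both $u_i^s$ and $u_i^a$ are determined for all $0\le i\le t$.

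Next I would reconstruct the attacker's measurement stream together with the attack state $s_t$ by induction on $k$. For $i<0$ we have $y'_i=h'(x_i)+v'_i$, where $x_i$ is a component of $\mathbf{Z}_i$ (available because $-T<-\mathcal{T}$) and $v'_i$ is supplied, so $y'_{-\mathcal{T}}:y'_{-1}$ is determined. Now assume inductively that $s_0,\dots,s_k$ are determined. Since $x_k^a=x_k^f+s_k$ with $x_k^f$ a component of $\mathbf{Z}^f_k$, the post-attack private measurement $y'^{a}_k=h'(x_k^f+s_k)+v'_k$ is determined; hence $\hat{x}_k^a=\mathcal{E}'_k(y'_{-\mathcal{T}}:y'_{-1},y'^{a}_0:y'^{a}_k)$ is determined; and then, by~\eqref{eq:attack_dynamics}, $s_{k+1}=f(\hat{x}_k^a,u_k^s)-f(\hat{x}_k^a-s_k,u_k^s)$ is determined (using that $u_k^s$ was fixed in the previous paragraph). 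The base case $k=0$ holds because $s_0$ is given. Carrying the induction up to $k=t$ yields $\hat{x}_t^a$ and $s_{t+1}$, which together with the first paragraph gives the claim.

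The argument is routine once the dependency order is pinned down, so the only point requiring care — and the only real obstacle — is that $\hat{x}_k^a$ needs the full private-measurement history through time $k$, i.e. $y'^{a}_k$, which itself needs $s_k$ but \emph{not} $s_{k+1}$; hence the recursion is well founded and never circular. I would also emphasize, in contrast to Lemma~\ref{lemma:determin}, that here there is no need to recover the plant's post-attack output $y_t^a$, since in Case~2 the estimator $\mathcal{E}'$ does not observe the plant's sensors at all. The extra conditioning on $v'_{-\mathcal{T}}:v'_t$ is exactly what makes $\hat{x}_t^a$ deterministic — it is genuinely random given $\mathbf{Z}^f$ alone — which is what the subsequent KL-divergence bound~\eqref{ineq:final_2_2} for Case~2 relies on.
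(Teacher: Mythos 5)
Your proposal is correct and follows essentially the same route as the paper's own proof: determine $u_i^s$ and $u_i^a$ from the controller recursions as in Lemma~\ref{lemma:determin}, then run the induction $s_k \to y'^a_k=h'(x_k^f+s_k)+v'_k \to \hat{x}_k^a \to s_{k+1}$ using the supplied noise sequence $v'_{-\mathcal{T}}:v'_t$. Your version merely states the induction more explicitly and correctly observes that, unlike in Case~1, the post-attack plant output $y_t^a$ never needs to be reconstructed.
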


\begin{proof}
We already proved in Lemma~\ref{lemma:determin} that given $\mathbf{Z}_{-T}^{-1},\mathbf{Z}^f_{0}:\mathbf{Z}^f_{t}$, we can determine $u_{t}^a$, $u_{t}^s$. Now, using~\eqref{eq:state_es}, we have $\hat{x}^a_0=\mathcal{E}'_0(y'_{-\mathcal{T}}:y'_{-1},{y'^a_{0}})$ where $y'_k=h'(x_k)+v'_k$ for any $-\mathcal{T}\leq k\leq -1$ and $y'^a_{0}=h'(x_0)+v'_0$. Given $\mathbf{Z}_{-T}^{-1},\mathbf{Z}^f_{0}$ and $v'_{-\mathcal{T}}:v'_0$, then $y'_{-\mathcal{T}}:y'_{-1},{y'^a_{0}}$ are determined and therefore, $\hat{x}^a_0$ is determined. Using~\eqref{eq:attack_dynamics} and the fact that $s_0$ and $u_0^s$ are deterministic, we can conclude that $s_1$ is also deterministic. For $\hat{x}^a_1=\mathcal{E}'_0(y'_{-\mathcal{T}}:y'_{-1},{y'^a_{0}},{y'^a_{1}})$, we have ${y'^a_{1}}=h'(x_1^a)+v'_1=h'(x_1^f+s_1)+v'_1$. Therefore, given  $s_1$ deterministic and having  $\mathbf{Z}^f_{1}, v'_1$, then $y'_1$ will be deterministic. As a result, $\hat{x}_{1}$ is deterministic. This process can continue until time $t$ where we can conclude $\hat{x}_{t}^a$ and $s_{t+1}$ are deterministic given the sequence in Lemma~\ref{lemma:determin_2}.
\end{proof}

Using the monotonicity property of the KL-divergence from Lemma~\ref{lemma:mon} it holds that
\begin{equation}\label{ineq:5_3}
\begin{split}
K&L\big(\mathbf{Q}(Y_{-T}^{-1},Y_{0}^a:Y_t^a)||\mathbf{P}(Y_{-T}:Y_t)\big)  \\
\leq &  KL\big(\mathbf{Q}(\mathbf{Z}_{-T}^{-1},\mathbf{Z}_{0}^f:\mathbf{Z}_{t}^f,v'_{-\mathcal{T}}:v'_t)||\mathbf{P}(\mathbf{Z}_{-T}:\mathbf{Z}_{t},v'_{-\mathcal{T}}:v'_t)\big).
\end{split}
\end{equation}
Then, we apply the chain-rule property of KL-divergence on the right-hand side of %the above inequality
\eqref{ineq:5_2} to obtain the following
\begin{equation}\label{ineq:7_3}
\begin{split}
&KL\big(\mathbf{Q}(\mathbf{Z}_{-T}^{-1},\mathbf{Z}_{0}^f:\mathbf{Z}_{t}^f,v'_{-\mathcal{T}}:v'_t)||\mathbf{P}(\mathbf{Z}_{-T}:\mathbf{Z}_{t},v'_{-\mathcal{T}}:v'_t)\big)\\
&= KL\big(\mathbf{Q}(\mathbf{Z}_{-T}^{-1},v'_{-\mathcal{T}}:v'_t)||\mathbf{P}(\mathbf{Z}_{-T}^{-1},v'_{-\mathcal{T}}:v'_t)\big)+\\
&  KL\big(\mathbf{Q}(\mathbf{Z}_{0}^f:\mathbf{Z}_{t}^f|\mathbf{Z}_{-T}^{-1},v'_{-\mathcal{T}}:v'_t)||\mathbf{P}(\mathbf{Z}_{0}:\mathbf{Z}_{t}|\mathbf{Z}_{-T}^{-1},v'_{-\mathcal{T}}:v'_t)\big)=\\
& KL\big(\mathbf{Q}(\mathbf{Z}_{0}^f:\mathbf{Z}_{t}^f|\mathbf{Z}_{-T}^{-1},v'_{-\mathcal{T}}:v'_t)||\mathbf{P}(\mathbf{Z}_{0}:\mathbf{Z}_{t}|\mathbf{Z}_{-T}^{-1},v'_{-\mathcal{T}}:v'_t)\big);
\end{split}
\end{equation}
where we used the property that the KL-divergence of two identical distributions (i.e., $\mathbf{Q}(\mathbf{Z}_{-T}^{-1},v'_{-\mathcal{T}}:v'_t)$ and $\mathbf{P}(\mathbf{Z}_{-T}:\mathbf{Z}_{t},v'_{-\mathcal{T}}:v'_t)$ is zero. Now, we apply the chain-rule property of
KL-divergence to~\eqref{ineq:7_3} and get
\begin{align}\label{ineq:8_3}
% \begin{split}
& KL\big(\mathbf{Q}(\mathbf{Z}_{0}^f:\mathbf{Z}_{t}^f|\mathbf{Z}_{-T}^{-1},v'_{-\mathcal{T}}:v'_t)||\mathbf{P}(\mathbf{Z}_{0}:\mathbf{Z}_{t}|\mathbf{Z}_{-T}^{-1},v'_{-\mathcal{T}}:v'_{t})\big) \nonumber\\
&= \sum_{k=0}^{t}\Big\{KL\big(\mathbf{Q}(x_k^f|\mathbf{Z}_{-T}^{-1},\mathbf{Z}_{0}^f:\mathbf{Z}_{k-1}^f,v'_{-\mathcal{T}}:v'_t)\nonumber\\
&\qquad  \qquad \qquad  ||\mathbf{P}(x_{k}|\mathbf{Z}_{-T}:\mathbf{Z}_{k-1},v'_{-\mathcal{T}}:v'_{t})\big)\nonumber\\
& \qquad + KL\big(\mathbf{Q}(y_k^{c,a}|x_k^f,\mathbf{Z}_{-T}^{-1},\mathbf{Z}_{0}^f:\mathbf{Z}_{k-1}^f,v'_{-\mathcal{T}}:v'_{t})\nonumber\\
&\qquad  \qquad \qquad  ||\mathbf{P}(y_{k}|x_k,\mathbf{Z}_{-T}:\mathbf{Z}_{k-1},v'_{-\mathcal{T}}:v'_{t})\big)\Big\}.
% \end{split}
\end{align}

Given $\mathbf{Z}_{-T}:\mathbf{Z}_{k-1}$ and the fact that $w_{k-1}$ is independent of $v'_{-\mathcal{T}}:v'_k$, the distribution of $x_k$ is a Gaussian with mean $f(x_{k-1},u_{k-1})$ and covariance $\mathbf{R}_w$. On the other hand, using equations~\eqref{eq:error_a} and~\eqref{eq:x_f_def} we have 
\begin{equation}\label{eq:fake_state_traj_2}
\begin{split}
x_k^f=&f(x_{k-1}^f+s_{k-1},u_{k-1}^a)-f(\hat{x}_{k-1}^a,u_{k-1}^s)\\
&+f(\hat{x}_{k-1}^a-s_{k-1},u_{k-1}^s)+w_{k-1}
\end{split}
\end{equation}

Using Lemma~\ref{lemma:determin_2}, given  $\mathbf{Z}_{-T}^{-1},\mathbf{Z}_{0}^f:\mathbf{Z}_{k-1}^f,v'_{-\mathcal{T}}:v'_{t}$ and the fact that $s_0$, $\mathpzc{X}_{-T}$ and $\mathpzc{X}_0^s$ are deterministic, $\hat{x}_{k-1}^a$, $s_{k-1}$, $u_{k-1}^s$ and $u_{k-1}^a$ are also determined. Therefore, the distribution of $x_k^f$ given $\mathbf{Z}_{-T}^{-1},\mathbf{Z}_{0}^f:\mathbf{Z}_{k-1}^f,v'_{-\mathcal{T}}:v'_{t}$ is a Gaussian with mean $f(x_{k-1}^f+s_{k-1},u_{k-1}^a)-f(\hat{x}_{k-1}^a,u_{k-1}^s)+f(\hat{x}_{k-1}^a-s_{k-1},u_{k-1}^s)$ and covariance $\mathbf{R}_w$. Using~\eqref{eq:free_trajec2} and~\eqref{eq:fake_state_traj_2}, it holds that 
\begin{equation}\label{eq:mean_diff_2}
\begin{split}
x_k^f-x_k=&f(x_{k-1}^f+s_{k-1},u_{k-1}^a)-f(\hat{x}_{k-1}^a,u_{k-1}^s)\\
&+f(\hat{x}_{k-1}^a-s_{k-1},u_{k-1}^s)-f(x_{k-1},u_{k-1}),
\end{split}
\end{equation}
for all $0\leq k \leq t$. Now, according to Lemmas~\ref{lemma:chain}, ~\ref{lemma:Guassian}, ~\ref{lemma:maximum} and equation~\eqref{eq:mean_diff}, for all $0\leq k \leq t$ we have
\begin{align}\label{ineq:10}
% \begin{split}
&KL\big(\mathbf{Q}(x_k^f|\mathbf{Z}_{-T}:\mathbf{Z}_{k-1}^f,v'_{-\mathcal{T}}:v'_{t})||\mathbf{P}(x_{k}|\mathbf{Z}_{-T}:\mathbf{Z}_{k-1},v'_{-\mathcal{T}}:v'_{t})\big) \nonumber\\
&= \mathbb{E}_{\mathbf{Q}(x_k^f,\mathbf{Z}_{-T}^{-1},\mathbf{Z}_{0}^f:\mathbf{Z}_{k-1}^f,v'_{-\mathcal{T}}:v'_{t})}\bigg\{(x_k-x_k^f)^T \mathbf{R}_w^{-1}  (x_k-x_k^f)\bigg\}\nonumber\\
&\leq \lambda_{max}(\mathbf{R}_w^{-1})\Vert x_k-x_k^f \Vert^2.
% \end{split}
\end{align}
Moreover, given $x_k$, the distribution of $y_k$ is a Gaussian with mean $h(x_k)$ and covariance $\mathbf{R}$.Using Lemma~\ref{lemma:determin_2}, given $x_k^f,\mathbf{Z}_{-T}^{-1},\mathbf{Z}_{0}^f:\mathbf{Z}_{k-1}^f,v'_{-\mathcal{T}}:v'_{k}$ then $y_k$ is Gaussian with mean $h(x_k^f)+\sigma'_k$ and covariance $\mathbf{R}$. Therefore, combining Lemmas~\ref{lemma:chain}, ~\ref{lemma:Guassian}, ~\ref{lemma:maximum} and equation~\eqref{eq:mean_diff}, for all $0\leq k \leq t$ we get
\begin{align}\label{ineq:12}
% \begin{split}
&KL\big(\mathbf{Q}(y_k^{c,a}|x_k^f,\mathbf{Z}_{-T}^{-1},\mathbf{Z}_{0}^f:\mathbf{Z}_{k-1}^f,v'_{-\mathcal{T}}:v'_{k})||\mathbf{P}(y_{k}|x_k)\big)\nonumber\\
&~~ =\mathbb{E}_{\mathbf{Q}(x_k^f,\mathbf{Z}_{-T}^{-1},\mathbf{Z}_{0}^f:\mathbf{Z}_{k-1}^f,v'_{-\mathcal{T}}:v'_{k})}\Big\{\big(h(x_k^f)+\sigma'_k-h(x_k)\big)^T\nonumber\\
& \quad \times  \mathbf{R}_v^{-1}  \big(h(x_k^f)+\sigma'_t-h(x_k)\big)\Big\} \leq  \lambda_{max}(\mathbf{R}_v^{-1})\nonumber \\
& \quad \times \Big(L_h^2\Vert x_k-x_k^f \Vert^2+2L_h^2b'_{\zeta}\Vert x_k-x_k^f \Vert+L_h^2{b'_{\zeta}}^2\Big).
% \leq L_h^2  \lambda_{max}(\mathbf{R}^{-1}) \lambda^{-2t}.
% \end{split}
\end{align}

As in Case~1, we combine~\eqref{ineq:5_3}-\eqref{ineq:8_3},~\eqref{ineq:12} and~\eqref{ineq:10} to~get 
\begin{equation}\label{ineq:final_2_2}
\begin{split}
K&L\big(\mathbf{Q}(Y_{-T}^{-1},Y_{0}^a:Y_t^a)||\mathbf{P}(Y_{-T}:Y_t)\big) \leq\\
 &  \frac{\kappa^2 \Vert s_0\Vert^2 }{1-\lambda^2} \big(\lambda_{max}(\mathbf{R}_w^{-1}) + L_h^2\lambda_{max}(\mathbf{R}_v^{-1})\big) \\
&+ \frac{\kappa \Vert s_0\Vert }{1-\lambda} \big(2\gamma(0)\lambda_{max}(\mathbf{R}_w^{-1}) + L_h^2\lambda_{max}(\mathbf{R}_v^{-1})(2\gamma(0)+b'_{\zeta})\big) \\
&+ \sum_{k=0}^{t} \gamma^2(k) \big(\lambda_{max}(\mathbf{R}_w^{-1}) + L_h^2\lambda_{max}(\mathbf{R}_v^{-1})\big)\\
&+\sum_{k=0}^{t} \gamma(k) \big(2L_h^2b'_{\zeta}\lambda_{max}(\mathbf{R}_v^{-1})\big)+\sum_{k=0}^{t} {b'_{\zeta}}^2L_h^2\lambda_{max}(\mathbf{R}_v^{-1}).
\end{split}
\end{equation}
Again, similar to Case~1, if we denote the right-hand side of the above inequality by $b_{\epsilon}$ and use Theorem~\ref{thm:stealthy}, the system~$\Sigma_{\text{\rom{2}}}$ will be $\epsilon$-stealthy attackable with $\epsilon=\sqrt{1-e^{-b_{\epsilon}}}$.

\subsection{Discussion on Incremental Stability}
\label{app:IS}
As previously discussed, the notions of incremental stability and contraction theory are well-defined %problems in theory of control~
\cite{angeli2002lyapunov,tran2018convergence,tran2016incremental} and going over their satisfying  conditions is beyond the scope of this paper. However, to provide some useful intuition, we start with the following example. %\todo{why is this in appendix?} 

\begin{example}
% To illustrate the notion of IES and IU, c
Consider the system 
\begin{equation*}
x_{t+1}=x_{t}+a x_{t}^3 +bu_t+w_t,
\end{equation*}
where $a>0$ and $w$ is a Gaussian noise with zero mean and variance $\mathbf{R}_w$. Also, consider the feedback controller %in the form of
$$u_t=-\frac{a}{b}x_t^3-\frac{1}{2b}x_t.$$ 
With such control input, the closed-loop system becomes 
\begin{equation}\label{eq:example1}
x_{t+1}=0.5x_{t}+w_t.  
\end{equation}
The above system is %globally incrementally exponentially stable 
GIES since the difference between any two trajectories % solutions 
of~\eqref{eq:example1} with the same noise $w_0:w_{t-1}$ and different initial conditions ($\xi_1$ and $\xi_2$), defined as $\Delta x_{t}=x(t,\xi_1,w)-x(t,\xi_2,w)$, has the dynamics $\Delta x_{t+1}=.5\Delta x_{t}$ with the nonzero initial condition $\Delta x_{0}=\xi_1-\xi_2$. Hence, $\Delta x_t$  converge exponentially to zero and the closed-loop control system is exponentially stable. 

Now, let us consider
\begin{equation}\label{eq:example2}
x_{t+1}=x_{t}+a x_{t}^3 +d_t,    
\end{equation}
with $d_t=bu_t+w_t$. The difference between two trajectories % solutions 
with the same input $\{d_0,...,d_{t-1}\}$ and different initial condition ($\xi'_1$ and $\xi'_2$) is $\Delta x'_{t}=x(t,\xi'_1,d)-x(t,\xi'_2,d)$.  Therefore, the difference dynamics becomes $\Delta x'_{t+1}=\Delta x'_{t}+a (x^3(t,\xi'_1,d)-x^3(t,\xi'_2,d))$. One can use Lyapunov function method by defining $V_t={\Delta x_{t}'}^2$ and verify that
\begin{equation*}
\begin{split}
V_{t+1}&-V_t= {\Delta {x'}^2_{t+1}}-{\Delta x'_{t}}^2  = 2a{\Delta x'_{t}}^2\Big(x^2(t,\xi'_1,d)\\
&+x^2(t,\xi'_2,d)+x(t,\xi'_1,d)x(t,\xi'_2,d)\Big)+a^2{\Delta x'_{t}}^2\times \\
&\Big(x^2(t,\xi'_1,d)+x^2(t,\xi'_2,d)
+x(t,\xi'_1,d)x(t,\xi'_2,d)\Big).
\end{split}
\end{equation*}
Since $x^2(t,\xi'_1,d)+x^2(t,\xi'_2,d)+x(t,\xi'_1,d)x(t,\xi'_2,d)>0$ and $V_0>0$, it holds that $V_{t+1}-V_t>0$ for all $t\in \mathbb{Z}_{\geq 0}$; thus, %using Lyapunov method 
$\Delta x'_{t}$ diverges %to infinity, 
and % Therefore,~
\eqref{eq:example2} is globally incrementally unstable. 

\end{example}

\end{document}